\documentclass[11pt,reqno]{amsart}
\newtheorem{theorem}{Theorem}

\newtheorem{definition}{Definition}

\newtheorem{lemma}{Lemma}
\newtheorem{proposition}{Proposition}
\newtheorem{remark}{Remark}
\numberwithin{equation}{section}
\numberwithin{lemma}{section}
\numberwithin{theorem}{section}
\numberwithin{proposition}{section}
\numberwithin{remark}{section}
\numberwithin{corollary}{section}
\numberwithin{definition}{section}
\usepackage{amsmath}
\usepackage{amsthm}
\usepackage{amssymb}
\usepackage{amsfonts}
\usepackage{nicematrix}
\usepackage{graphicx}
\usepackage{subfigure}
\usepackage{stmaryrd}
\usepackage{caption}
\usepackage{import}
\usepackage{xifthen}
\usepackage{pdfpages}
\usepackage{transparent}

\usepackage{comment}
\usepackage{tikz}
\usepackage{url}
\usepackage[linktocpage]{hyperref}
\hypersetup{
    colorlinks=true,
    citecolor=blue,
    linkcolor=blue,
    filecolor=blue,
    urlcolor=blue,}
\calclayout 

\begin{document}

\title[Hamiltonian formulation for axisymmetric magnetohydrodynamics]{Hamiltonian formulation and matrix discretization for axisymmetric magnetohydrodynamics}

\author{Michael Roop}
\address{Michael Roop: Department of Mathematical Sciences, Chalmers University of Technology and University of Gothenburg, 412 96 Gothenburg, Sweden}
\email{michael.roop@chalmers.se}


\keywords{magnetohydrodynamics, Hopf fibration, Lie--Poisson structure, semidirect product, symplectic integration}

\begin{abstract}
Equations of ideal magnetohydrodynamics (MHD) play an important role in the studies of turbulence, astrophysics, and plasma physics. These equations possess remarkable geometric structures and symmetries. Indeed, they admit a geodesic formulation in the sense of Arnold, as a Lie--Poisson flow on the dual of an infinite-dimensional Lie algebra. Zeitlin's model, previously developed for MHD on the flat torus and the two-sphere, is a matrix approximation of MHD consistent with the underlying geometric structures. In this paper, we derive the reduced model of axially symmetric magnetohydrodynamics on the three-sphere and give its Hamiltonian formulation. We further extend finite dimensional Zeitlin's matrix model for MHD from 2D to axially symmetric 3D flows of magnetized fluids, yielding the first discrete model for 3D magnetohydrodynamics compatible with the underlying Lie--Poisson structure.
\end{abstract}

\maketitle
\tableofcontents
\section{Introduction}
Since the discovery of the geodesic nature of fluid equations, originating from the seminal paper by V.I. Arnold \cite{Arn}, it has become evident that Hamiltonian structures play a crucial role in understanding fluid motion and hydrodynamical turbulence. Arnold discovered that the incompressible Euler equations constitute a geodesic equation on the infinite-dimensional group of volume preserving diffeomorphisms of a fluid domain with respect to the right-invariant Riemannian metric induced by kinetic energy of the fluid. From the Hamiltonian perspective, the Euler equations are a Lie--Poisson flow on the dual of the infinite-dimensional Lie algebra of divergence-free vector fields. Since then, many equations of mathematical physics have been shown to possess Hamiltonian formulations in terms of Lie--Poisson structures on duals of infinite-dimensional Lie algebras \cite{KhMisMod}. In particular, the equations of incompressible magnetohydrodynamics (MHD) share a similar formulation. These equations describe a conductive incompressible fluid moving in an ambient electromagnetic field. The presence of the electromagnetic field corrects the Euler equations by the Lorentz force, along with simultaneous transport of the magnetic field by the fluid. The corresponding configuration space for ideal MHD is the \textit{magnetic extension} of the group of volume preserving diffeomorphisms $\mathrm{Diff}_{\mu}(M)$, which is the \textit{semidirect product} of the group $\mathrm{Diff}_{\mu}(M)$ and the dual of its Lie algebra \cite{ArnKh,MoGr1980,VD}. In the two-dimensional case, the equations of magnetohydrodynamics reduce to a two scalar field model, called \textit{reduced MHD} \cite{Strauss1976}. This model admits numerous generalizations, such as Hazeltine's three-field model \cite{Haz,HazHolm,Holm}, a four-field model for tokamak dynamics \cite{HHM1987}, and an inclusive model for inertia MHD and Hall effect \cite{GTAM2017}. All of these have a semidirect product structure of the underlying Lie algebra.

The Hamiltonian nature of ideal hydrodynamics and magnetohydrodynamics implies symplecticity of the flow and existence of conservation laws called \textit{Casimirs}, for example, magnetic helicity and cross-helicity. These conserved quantities play a crucial role in the long term dynamics, as the corresponding flow remains confined to the \textit{coadjoint orbits} --- invariant submanifolds in the phase space where the Casimirs are constant \cite{MarsRat}. To study the dynamics numerically, one needs to aim for discrete models compatible with the rich geometry of the phase space for both Euler's equations and MHD. Such discrete models provide a better insight into the statistical behavior, compared to non-Lie--Poisson schemes, for example, finite elements or finite volumes. Matrix discretizations for the Euler equations and MHD were found in a series of works by V. Zeitlin, on the flat torus \cite{Ze1991,Ze2005}, and on the two-sphere \cite{Zeit}. The approximating sequence of Lie groups is given by finite-dimensional matrix Lie groups $\mathrm{SU}(N)$, where the dimension of the matrix $N$ serves as a truncation parameter in the Fourier decomposition of the fields. Zeitlin's matrix equations have recently been utilized to make important conclusions about hydrodynamical turbulence on the two-sphere \cite{CiViMo2023,ModViv1}. For reduced MHD on the two-sphere, a fully discrete approximation was developed in \cite{ModRoop} and used to study MHD turbulence in \cite{ModRoopJPP}.

In the present work, we extend the matrix approach to MHD from two dimensions to axially symmetric flows on the three-sphere $\mathbb{S}^{3}$. The key idea is to make use of the Hopf fibration and is summarized as follows: the quotient of the action of the Hopf field on $\mathbb{S}^{3}$ is the two-dimensional sphere $\mathbb{S}^{2}$, and therefore if one assumes the $\mathbb{S}^{1}$-symmetry, generated by the Hopf field, of solutions to MHD equations on $\mathbb{S}^{3}$, one gets the symmetry reduced version of three-dimensional MHD on the two-sphere. This idea has previously been used for the Euler equations on $\mathbb{S}^{3}$, see \cite{ModinPreston2025}, and is adopted here for the case of ideal MHD. The resulting reduced MHD equations on $\mathbb{S}^{2}$ are formulated in terms of four fields, and their Hamiltonian formulation in terms of semidirect product extensions of semidirect product extensions is given. This structure reveals a family of Casimir invariants parameterized by four arbitrary functions. The obtained system of equations can be approximated by matrix equations in a way similar to the two-dimensional case. Further, we address the problem of structure preserving time integration of the matrix equations and develop two time integration schemes fully compatible with the Lie--Poisson structure of the matrix equations.

The paper is organized as follows. In Sect.~\ref{sec2}, we give a brief overview of Hamiltonian structures for MHD models in two and three dimensions. Further, in Sect.~\ref{sec3} we derive the reduced version of three-dimensional axisymmetric MHD and provide its Hamiltonian formulation along with the description of Casimirs. In Sect.~\ref{sec4}, we give the matrix version of the derived equations, describe the Lie--Poisson formulation for the matrix equations, and discuss convergence results for Casimirs. Finally, in Sect.~\ref{sec5}, we discuss the Lie--Poisson time integration of the matrix equations.

\vspace{0.5cm}
\noindent
{\bf Acknowledgments.} This work was supported by the Knut and Alice Wallenberg Foundation, grant number WAF2019.0201, by the Swedish Research Council, grant number 2022-03453, the G\"oran Gustafsson Foundation for Research in Natural Sciences and Medicine, and by the Royal Swedish Academy of Sciences (project MA2025-0073). The author is grateful to Klas Modin for fruitful discussions.
\section{Hamiltonian formulation of MHD}
\label{sec2}
The system of self-consistent MHD describes the evolution of a divergence-free velocity field $u(t,x)$ and a magnetic field $B(t,x)$ on a three-dimensional Riemannian manifold $(M,g)\ni  x$:
\begin{equation}
\label{MHD}
\left\{
\begin{aligned}
&\dot u+\nabla_{u}u=-\nabla p+\mathrm{curl}B\times B, \\
&\dot B=\operatorname{curl}(u\times B),\\
&\operatorname{div}B=0,\\
&\operatorname{div}u=0,
\end{aligned}
\right.
\end{equation}
where $p(t,x)$ is the pressure function, $\nabla_{u}u$ is the covariant derivative of the vector field $u$ along itself. The term $\mathrm{curl}B\times B$ represents the Lorentz force, and the second equation in \eqref{MHD} reflects the frozenness of the magnetic field into the fluid. The total energy of a magnetic fluid is the sum of its kinetic and magnetic energies:
\begin{equation}
\label{energymetric}
E(u,B)=\frac{1}{2}\int_{M}\left(|u|^{2}+|B|^{2}\right)\mu,
\end{equation}
where $\mu$ is the Riemannian volume form on $M$ induced by the metric $g$, and $|\cdot|^{2}=g(\cdot,\cdot)$.
\subsection{Hamiltonian structures for 3D MHD}
Following \cite{ArnKh}, we introduce the configuration space $F$ for a magnetic fluid as a \textit{magnetic extension} of the group of volume preserving diffeomorphisms of $M$ given by the semidirect product:
\begin{equation*}
F=\mathrm{Diff}_{\mu}(M)\ltimes\mathfrak{X}_{\mu}^{*}(M),
\end{equation*}
where $\mathfrak{X}_{\mu}^{*}(M)\simeq\Omega^{1}(M)/\mathrm{d}\Omega^{0}(M)$ is the (smooth) dual of the Lie algebra $\mathfrak{X}_{\mu}(M)$ of divergence free vector fields on $M$. The group structure on $F$ is
\begin{equation*}
(\phi,a)\circ(\psi,b)=(\phi\circ\psi,\mathrm{Ad}^{*}_{\psi}a+b),\quad \phi,\psi\in\mathrm{Diff}_{\mu}(M),\,a,b\in\mathfrak{X}_{\mu}^{*}(M).
\end{equation*}
The Lie algebra $\mathfrak{f}=\mathfrak{X}_{\mu}(M)\ltimes\left(\Omega^{1}(M)/\mathrm{d}\Omega^{0}(M)\right)$ corresponding to the Lie group $F$ is a vector space of pairs $(v,a)\in\mathfrak{f}$, with $v\in\mathfrak{X}_{\mu}(M)$ and $a\in\Omega^{1}(M)/\mathrm{d}\Omega^{0}(M)$, and the Lie algebra structure is given by
\begin{equation}
\label{Liealgebrastructureonsemidir}
[(v,a),(w,b)]=([v,w],\mathrm{ad}^{*}_{w}a-\mathrm{ad}^{*}_{v}b)=([v,w],L_{v}b-L_{w}a),
\end{equation}
for $v,w\in\mathfrak{X}_{\mu}(M),\,a,b\in\Omega^{1}(M)/\mathrm{d}\Omega^{0}(M)$, and where $L_{v}a$ is the Lie derivative of a 1-form $a\in\Omega^{1}(M)/\mathrm{d}\Omega^{0}(M)$ along the vector field $v\in\mathfrak{X}_{\mu}(M)$.

The dual Lie algebra $\mathfrak{f}^{*}$ of the magnetic extension $F$ is $\mathfrak{f}^{*}=\mathfrak{X}_{\mu}(M)\times(\Omega^{1}(M)/\mathrm{d}\Omega^{0}(M))$, and the coadjoint action of the Lie algebra $\mathfrak{f}$ on its dual $\mathfrak{f}^{*}$ is 
\begin{equation}
\label{coadjopsemidir}
\mathrm{ad}^{*}_{(v,a)}(w,b)=\left([w,v],\mathrm{ad}^{*}_{v}b-\mathrm{ad}^{*}_{w}a\right)=([w,v],L_{w}a-L_{v}b).
\end{equation}

The Riemannian metric on $M$ \eqref{energymetric} induces an isomorphism $I\colon\mathfrak{g}\to\mathfrak{g}^{*}$, with $\mathfrak{g}=\mathfrak{X}_{\mu}(M)$, and
therefore the general form of the Euler--Arnold equation on $\mathfrak{f}^{*}$ is
\begin{equation}
\label{EuArnsemidir}
\frac{\mathrm{d}}{\mathrm{d}t}(w,b)=\mathrm{ad}^{*}_{(v,a)}(w,b)\Longleftrightarrow\dot w=[w,v],\quad\dot b=L_{w}a-L_{v}b,\quad a=I(w),\quad v=I^{-1}(b).
\end{equation}

Let us now make a connection between the abstract equations \eqref{EuArnsemidir} and the MHD system \eqref{MHD}. Let $u,B\in\mathfrak{X}_{\mu}(M)$ be the velocity field of a fluid and the magnetic field correspondingly. We observe that the inertia operator $I$ induced by the metric \eqref{energymetric} is assigning to a divergence-free vector field a coset in the dual space:
\begin{equation*}
\mathcal{B}=I(B)=[B^{\flat}]\in\mathfrak{X}^{*}_{\mu}(M),\quad m=I(u)=[u^{\flat}]\in\mathfrak{X}^{*}_{\mu}(M),
\end{equation*}
and therefore the induced Hamiltonian function on $\mathfrak{f}^{*}$ is
\begin{equation}
\label{hamiltonfstar}
H(m,B)=\frac{1}{2}\int_{M}\left(\langle m,I^{-1}(m)\rangle+\langle I(B),B\rangle\right)\mu,
\end{equation}
where $\langle\cdot,\cdot\rangle$ is the standard pairing between a Lie algebra and its dual.

Assigning in \eqref{EuArnsemidir} $v=u$, $w=B$, $a=\mathcal{B}$, $b=m$, we obtain the following system:
\begin{equation}
\label{almostMHD}
\dot B=[B,u],\quad \dot m=-L_{u}m+L_{B}\mathcal{B},\quad\mathrm{div}(u)=\mathrm{div}(B)=0,
\end{equation}
and finally using the identities
\begin{equation*}
\left(L_{u}u^{\flat}\right)^{\sharp}=\nabla_{u}u+\nabla P_{1},\quad\left(L_{B}B^{\flat}\right)^{\sharp}=\mathrm{curl}B\times B+\nabla|B|^{2},
\end{equation*}
we conclude that equations \eqref{almostMHD} are equivalent to \eqref{MHD}. We summarize the above discussion in the following theorem \cite{ArnKh,MarsRatWein,VD} 
\begin{theorem}
The equations of ideal magnetohydrodynamics \eqref{MHD} are a Lie--Poisson system on the dual $\mathfrak{f}^{*}$ of the semidirect product Lie algebra $\mathfrak{f}=\mathfrak{X}_{\mu}(M)\ltimes\left(\Omega^{1}(M)/\mathrm{d}\Omega^{0}\right(M))$. The Hamiltonian function is given by \eqref{hamiltonfstar}.
\end{theorem}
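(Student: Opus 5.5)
To prove the theorem, I would verify directly that the Lie--Poisson equation on $\mathfrak{f}^{*}$ generated by the Hamiltonian \eqref{hamiltonfstar} coincides with \eqref{MHD}. I will use the convention of \eqref{EuArnsemidir}, $\dot\xi=\mathrm{ad}^{*}_{\delta H/\delta\xi}\xi$. The first step is to fix the identification of $\mathfrak{f}^{*}$ with pairs. The pairing of $(w,b)\in\mathfrak{X}_{\mu}(M)\times\Omega^{1}/\mathrm{d}\Omega^{0}$ with $(v,a)\in\mathfrak{f}$ is $\langle b,v\rangle+\langle a,w\rangle$, where each term is the integral of the contraction against $\mu$. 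This is well defined on cosets because $v,w$ are divergence free. With this pairing, I would recompute \eqref{coadjopsemidir} from the bracket \eqref{Liealgebrastructureonsemidir}. The computation needs $\langle b,[v,v']\rangle=\langle\mathrm{ad}^{*}_{v}b,v'\rangle$ together with $\mathrm{ad}^{*}_{v}b=L_{v}b$ modulo exact forms, up to the sign convention for $\mathrm{ad}$ on vector fields. That last identity comes from Cartan's formula and integration by parts with $\mathrm{div}\,v=0$.

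The second step is to compute the variational derivatives of $H(m,B)$ in \eqref{hamiltonfstar}. Varying $m$ gives $\delta H/\delta m=I^{-1}(m)=u\in\mathfrak{X}_{\mu}$. Varying $B$ gives $\delta H/\delta B=I(B)=\mathcal{B}\in\Omega^{1}/\mathrm{d}\Omega^{0}$, using the symmetry of the metric. Thus $\delta H/\delta(B,m)=(u,\mathcal{B})\in\mathfrak{f}$, which matches the assignment $v=u$, $a=\mathcal{B}$, $w=B$, $b=m$ in \eqref{EuArnsemidir}. Substituting this into the Lie--Poisson equation yields exactly \eqref{almostMHD}: $\dot B=[B,u]$ and $\dot m=L_{B}\mathcal{B}-L_{u}m$ in $\Omega^{1}/\mathrm{d}\Omega^{0}$.

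The third step translates \eqref{almostMHD} back to vector fields. For $\dot B=[B,u]$, I would use the identity $\mathrm{curl}(u\times B)=[B,u]$ (up to the sign convention for the Lie bracket), valid for divergence-free $u,B$ on a Riemannian 3-manifold. It follows from $\mathrm{curl}(X\times Y)$ expressed through $i_{X}i_{Y}\mu$ and $L_{X}(i_{Y}\mu)=i_{[X,Y]}\mu$ when $L_{X}\mu=0$. For the momentum equation, I take $\sharp$ of a representative. The stated identities $(L_{u}u^{\flat})^{\sharp}=\nabla_{u}u+\nabla P_{1}$ (with $P_{1}=\tfrac12|u|^{2}$) and $(L_{B}B^{\flat})^{\sharp}=\mathrm{curl}B\times B+\nabla(\cdot)$ follow from Cartan's formula, $L_{X}X^{\flat}=i_{X}\mathrm{d}X^{\flat}+\mathrm{d}|X|^{2}$, together with $i_{X}\mathrm{d}X^{\flat}=(\mathrm{curl}X\times X)^{\flat}$ and $(\nabla_{X}X)^{\flat}=i_{X}\mathrm{d}X^{\flat}+\tfrac12\mathrm{d}|X|^{2}$. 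The coset ambiguity is then absorbed into the pressure: choosing $p$ so that the right-hand side is divergence free gives the first line of \eqref{MHD}.

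I expect the main obstacle to be bookkeeping rather than analysis. The delicate point is making the sign conventions consistent: the Lie bracket of vector fields (the left versus right invariant convention for $\mathrm{Diff}_{\mu}$), the sign in $\mathrm{ad}^{*}$, and the orientation of $\mathrm{curl}(u\times B)=\pm[u,B]$. I would fix the right-invariant convention from the outset and check it against the pure Euler case $B=0$, where the equation must reduce to $\dot u+\nabla_{u}u=-\nabla p$. That check pins down all remaining signs.
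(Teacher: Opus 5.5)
Your proposal is correct and follows the paper's own argument. Like the paper, you compute the Euler--Arnold equation \eqref{EuArnsemidir} with $\delta H/\delta(B,m)=(u,\mathcal{B})$ to obtain \eqref{almostMHD}, then recover \eqref{MHD} via the identities for $L_{u}u^{\flat}$ and $L_{B}B^{\flat}$, absorbing the exact part into the pressure. Beyond the paper, you make explicit the step it leaves implicit, namely $\mathrm{curl}(u\times B)=[B,u]$ for divergence-free fields, and your sign bookkeeping agrees with the paper's convention $\mathrm{ad}^{*}_{v}=-L_{v}$.
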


The Lie--Poisson nature of the flow \eqref{MHD} suggests that it has a number of conserved quantities. First, let $M$ be simply connected, then, any divergence-free vector field $B\in\mathfrak{X}_{\mu}(M)$ on $M$ has a vector potential $A=\mathrm{curl}^{-1}(B)$. The following quantities are known to be the \textit{Casimir invariants} for \eqref{MHD} (see, for example, \cite{VD}):
\begin{equation}
\label{casMHD}
\mathcal{C}(B)=\int_{M}g( A,B)\mu,\quad\mathcal{I}(u,B)=\int_{M}g(u,B)\mu.
\end{equation}
The Casimir $\mathcal{C}(B)$ is usually referred to as \textit{magnetic helicity}, and $\mathcal{I}(u,B)$ as \textit{cross-helicity}. They are known to be the only independent Casimirs (see \cite{KhPerYang2019} for details). The energy \eqref{hamiltonfstar} is also conserved, but is not a Casimir invariant. 
\subsection{Hamiltonian structures for 2D MHD}
The two-dimensional case is somewhat special in hydrodynamics and magnetohydrodynamics, as in this case the reduced version of the three-dimensional equations \eqref{MHD} admits an infinite collection of Casimirs. Here, we give a brief description of the Hamiltonian structures and infinite-dimensional Lie--Poisson brackets for reduced two-dimensional magnetohydrodynamics.

First, we reformulate equations \eqref{MHD} in terms of four scalar fields instead of two vector fields. Let $(M,g)$ be a Riemannian two-dimensional simply connected manifold. Let $\mu$ still be the Riemannian volume form on $M$. In this setting, divergence-free vector fields $u$ and $B$ are defined by their \textit{stream functions}, or \textit{Hamiltonians}:
\begin{equation*}
u=X_{\psi},\quad B=X_{\theta}.
\end{equation*}
The second pair of fields $(\omega,j)$ on $M$ is \textit{vorticities}:
\begin{equation*}
\omega\mu=(\mathrm{d}m)\mu,\quad j\mu=(\mathrm{d}\mathcal{B})\mu.
\end{equation*}
Then, applying the differential to the second equation in \eqref{almostMHD} and using in the first equation in \eqref{almostMHD} that the map $\psi\mapsto X_{\psi}$ is a Lie algebra isomorphism, we get the \textit{vorticity formulation} of 2D MHD:
\begin{equation}
\label{MHDvort}
\left\{
\begin{aligned}
&\dot\omega=\left\{\omega,\psi\right\}+\left\{\theta,j\right\}, \quad &\omega =\Delta\psi, \\
&\dot\theta=\left\{\theta,\psi\right\}, \quad &j=\Delta\theta, 
\end{aligned}
\right.
\end{equation}
where $\left\{\cdot,\cdot\right\}$ is the Poisson bracket on $M$, i.e. $\left\{f,h\right\}\mu=\mathrm{d}f\wedge\mathrm{d}h$ for $f,h\in C^{\infty}(M)$. The field $j$ is also called \textit{current density}.

The energy \eqref{hamiltonfstar} reduces to
\begin{equation}
\label{hamRMHD}
H=\frac{1}{2}\int_{M}\left(|\nabla\psi|^{2}+|\nabla\theta|^{2}
\right)\mu = -\frac{1}{2}\int_{M}(\omega\psi+\theta j)\mu.
\end{equation}

The system \eqref{MHDvort} admits a non-canonical Hamiltonian formulation in terms of the \textit{semidirect product Lie--Poisson bracket} \cite{HazMorr,HolmKuper1983,MoGr1980}:
\begin{equation}
\label{RMHDbracket}
\begin{split}
 \llbracket F,G\rrbracket&=\int_{M}\left[\omega\left\{\frac{\delta F}{\delta \omega},\frac{\delta G}{\delta \omega}\right\}+
 \theta\left(\left\{\frac{\delta F}{\delta \theta},\frac{\delta G}{\delta \omega}\right\}+\left\{\frac{\delta F}{\delta \omega},\frac{\delta G}{\delta \theta}\right\}\right)\right]\mu.
 \end{split}
\end{equation}
The bracket \eqref{RMHDbracket} turns equations \eqref{MHDvort} into
\begin{equation*}
\dot F=\llbracket F,H\rrbracket,
\end{equation*}
where $F$ is an observable of the fields $\omega$ and $\theta$.

One can readily check that the following quantities are Casimirs for the flow \eqref{MHDvort}:
\begin{equation}
\label{cascont}
\mathcal{C}_{f}=\int_{M}f(\theta)\mu,\quad\mathcal{I}_{\textsl{g}}=\int_{M}\omega \textsl{g}(\theta)\mu,
\end{equation}
for arbitrary smooth functions $f$ and $\textsl{g}$. Indeed, one has $\llbracket \mathcal{C}_{f},\mathcal{J}\rrbracket=\llbracket \mathcal{I}_{\textsl{g}},\mathcal{J}\rrbracket=0$ for any functional $\mathcal{J}$. The Casimir $\mathcal{I}_{\textsl{g}}$ descends from the cross-helicity Casimir $\mathcal{I}(u,B)$ in \eqref{casMHD}, and therefore will also be called cross-helicity. However, the Casimir $\mathcal{C}_{f}$ does not, but we will still call it magnetic helicity. The presence of the magnetic helicity Casimir reflects the transport property of the field $\theta$ by the field $\psi$.

This rich geometric structure of the phase space for MHD in both three and two dimensions determines important features of the MHD turbulence and the long time behavior of solutions to the MHD equations. Indeed, the flow defined by equations \eqref{MHD} or \eqref{MHDvort}, once started on a coadjoint orbit, remains confined to it forever. Since Casimir functions are constants on the coadjoint orbits, it is important to develop discrete approximations for \eqref{MHD} and \eqref{MHDvort} compatible with the aforementioned geometric structures, and in particular, Casimir-preserving. The necessity of such discretizations, in turn, follows from the fact that as of today, the only way to validate predictions of statistical theories for systems describing nonlinear chaotic behavior of fluids is via numerical simulations.

Such discrete models for two-dimensional hydrodynamics and MHD have been developed in the works of Zeitlin \cite{Ze1991,Zeit,Ze2005}. Zeitlin's approach to MHD has recently been taken over and developed for the spherical MHD in the works \cite{ModRoop,ModRoopJPP}. Even though it is the flat torus $\mathbb{T}^{2}$ that is a typical domain for MHD simulations, we underline that the matrix equations in both cases of the sphere and the torus coincide and therefore share the same Lie algebra structure. This is shown in detail in Sect.~\ref{Zeitlin2dMHDtorus} and \ref{Zeitlin2dMHDsphere}.

Until recently, matrix hydrodynamics approach has been thought to be feasible only for two-dimensional hydrodynamics, such as Euler's and MHD equations on the flat torus and the two-sphere. 
As noted by Zeitlin \cite{Ze1991}, to get 3D hydrodynamics as a limit of finite-mode truncations is impossible due to only a finite number of independent Casimirs in 3D, contrary to 2D with an infinite collection of Casimirs. This can be viewed as a ``no-go'' theorem for 3D hydrodynamics.
However, in the work \cite{ModinPreston2025} the matrix hydrodynamics approach was shown to be possible for the three-dimensional axisymmetric Euler equations on the three-sphere $\mathbb{S}^{3}$. We take this approach further and develop the discretization for the axisymmetric three-dimensional MHD on $\mathbb{S}^{3}$, which serves as a middle ground between two- and three-dimensional theories, the so-called ``two-and-a-half''-dimensional flows, where equations are formulated on $\mathbb{S}^{2}$, but the effects of three-dimensionality of the flow still retain. 
\section{Axisymmetric MHD equations}
\label{sec3}
In this section, we derive the reduced model for axisymmetric MHD flows on $M=\mathbb{S}^{3}$. The reduction is provided by the action of the Killing vector field $K$ on $\mathbb{S}^{3}$. Killing vector fields generate isometries, i.e. $L_{K}g=0$. On the three-sphere $\mathbb{S}^{3}$, such vector fields generate rotations around one of the axes, and invariance of the MHD flow with respect to such action means that the fields $B$ and $u$ commute with $K$, i.e. $[B,K]=[u,K]=0$. Since $K$ is an infinitesimal symmetry of \eqref{MHD}, one can perform the symmetry reduction and obtain the symmetry reduced version of equations \eqref{MHD}. This will be the main goal of this section.
\subsection{Derivation of reduced equations}
Let $\mathbb{S}^{3}$ be the three-sphere embedded in $\mathbb{R}^{4}(x,y,z,w)$:
\begin{equation*}
\mathbb{S}^{3}=\left\{(x,y,z,w)\in\mathbb{R}^{4}\mid x^{2}+y^{2}+z^{2}+w^{2}=1\right\}\subset\mathbb{R}^{4}.
\end{equation*}
The restriction of the Euclidean metric on $\mathbb{R}^{4}$ onto $\mathbb{S}^{3}$ makes the three-sphere a Riemannian manifold $(\mathbb{S}^{3},g)$. Let $\mu$ be the associated Riemannian volume form. The orthogonal basis of vector fields on $\mathbb{S}^{3}$ is given by
\begin{align*} 
E_{1} &= \frac{1}{2}(-x\partial_{w}+w\partial_{x}-z\partial_{y}+y\partial_{z}),\\ 
E_{2} &=  \frac{1}{2}(-y\partial_{w}+z\partial_{x}+w\partial_{y}-x\partial_{z}),\\
E_{3} &= \frac{1}{2}(-z\partial_{w}-y\partial_{x}+x\partial_{y}+w\partial_{z}),
\end{align*}
and the Lie algebra structure is given by
\begin{equation}
\label{LieAlgStr}
[E_{1},E_{2}]=-E_{3},\quad[E_{2},E_{3}]=-E_{1},\quad[E_{3},E_{1}]=-E_{2}.
\end{equation}
Then, if $E_{i}^{*}$, $i=1,2,3$, is the co-frame, i.e. $E_{i}^{*}(E_{j})=\delta_{ij}$, then the Riemannian volume form $\mu$ becomes $\mu=E_{1}^{*}\wedge E_{2}^{*}\wedge E_{3}^{*}$.

We define the three-dimensional vorticity fields $\omega$ and $j$ as follows:
\begin{equation}
\label{vort3d}
\iota_{\omega}\mu=\mathrm{d}m,\quad\iota_{j}\mu=\mathrm{d}\mathcal{B},
\end{equation}
where $\iota_{\omega}\mu$ is the contraction of the form $\mu$ with the vector field $\omega$.

Then, applying the differential to the equation $\dot m=-L_{u}m+L_{B}\mathcal{B}$ and using the definitions \eqref{vort3d}, we get
\begin{equation*}
\iota_{\dot\omega}\mu=-L_{u}(\iota_{\omega}\mu)+L_{B}(\iota_{j}\mu).
\end{equation*}
Let us recall now the following relation between the inner product and the Lie derivative:
\begin{equation*}
\iota_{[u,\omega]}\mu=[L_{u},\iota_{\omega}]\mu=L_{u}(\iota_{\omega}\mu)-\iota_{\omega}(L_{u}\mu)=L_{u}(\iota_{\omega}\mu),
\end{equation*}
where the last term vanishes, because $u$ is a volume preserving vector field. We thus conclude
\begin{equation*}
-L_{u}(\iota_{\omega}\mu)=\iota_{[\omega,u]}\mu,\quad L_{B}(\iota_{j}\mu)=\iota_{[B,j]}\mu,
\end{equation*}
and, since $\mu$ is non-degenerate, the vorticity formulation of \eqref{MHD} reads
\begin{equation}
\label{vectorfieldvortform}
\dot\omega=[\omega,u]+[B,j],\quad\dot B=[B,u].
\end{equation}

Let us choose the vector field $K=E_{1}$ to generate the rotational symmetry, which will be used to get the symmetry-reduced equations. In this case, the fields $B$ and $u$ are assumed to be $E_{1}$-invariant, i.e. $[B,E_{1}]=[u,E_{1}]=0$, and therefore they are generated by two $E_{1}$-invariant functions, $u\sim(\tilde\sigma,\tilde\psi)$ and $B\sim(\tilde\rho,\tilde\xi)$:
\begin{equation}
\begin{split}
\label{ubS3}
u&=\tilde\sigma E_{1}-(E_{3}\tilde\psi)E_{2}+(E_{2}\tilde\psi)E_{3},\quad E_{1}\tilde\sigma=E_{1}\tilde\psi=0,\\
B&=\tilde\rho E_{1}-(E_{3}\tilde\xi)E_{2}+(E_{2}\tilde\xi)E_{3}, \quad E_{1}\tilde\rho=E_{1}\tilde\xi=0.
\end{split}
\end{equation}

To get the system of equations for the fields $(\tilde\sigma,\tilde\psi)$ and $(\tilde\rho,\tilde\xi)$, one needs to express also the vorticities $(\omega,j)$ in terms of these functions. First, we observe that $u^{\flat}=\tilde\sigma E_{1}^{*}-(E_{3}\tilde\psi)E_{2}^{*}+(E_{2}\tilde\psi)E_{3}^{*}$. Further, 
\begin{equation}
\label{dufaltexpr}
\mathrm{d}u^{\flat}=\mathrm{d}\tilde\sigma \wedge E_{1}^{*}-\mathrm{d}(E_{3}\tilde\psi)\wedge E_{2}^{*}+\mathrm{d}(E_{2}\tilde\psi)\wedge E_{3}^{*},
\end{equation}
and since, for any function $f$, we have $\mathrm{d}f=E_{1}(f)E_{1}^{*}+E_{2}(f)E_{2}^{*}+E_{3}(f)E_{3}^{*}$, equation \eqref{dufaltexpr} transforms
\begin{equation}
\label{uflatfinal}
\mathrm{d}u^{\flat}=-(E_{2}\tilde\sigma+E_{2}\tilde\psi)E_{1}^{*}\wedge E_{2}^{*}-(E_{3}\tilde\sigma+E_{3}\tilde\psi)E_{1}^{*}\wedge E_{3}^{*}+(E_{2}^{2}\tilde\psi+E_{3}^{2}\tilde\psi)E_{2}^{*}\wedge E_{3}^{*}.
\end{equation}
Looking for the vorticity field in the form $\omega=W_{1}E_{1}+W_{2}E_{2}+W_{3}E_{3}$, we first find that 
\begin{equation}
\label{iomegamu}
\iota_{\omega}\mu=W_{1}E_{2}^{*}\wedge E_{3}^{*}-W_{2}E_{1}^{*}\wedge E_{3}^{*}+W_{3}E_{1}^{*}\wedge E_{2}^{*}.
\end{equation}
Finally, using the definition \eqref{vort3d}, we find from \eqref{uflatfinal} and \eqref{iomegamu} that
\begin{equation}
\label{omegaS3}
\omega=(E_{3}^{2}+E_{2}^{2})(\tilde\psi)E_{1}+E_{3}(\tilde\psi+\tilde\sigma)E_{2}-E_{2}(\tilde\psi+\tilde\sigma)E_{3}.
\end{equation}
Analogously, we get for the magnetic vorticity (current density)
\begin{equation}
\label{jS3}
j=(E_{3}^{2}+E_{2}^{2})(\tilde\xi)E_{1}+E_{3}(\tilde\xi+\tilde\rho)E_{2}-E_{2}(\tilde\xi+\tilde\rho)E_{3}.
\end{equation}

Further, inserting the expressions \eqref{ubS3}, \eqref{omegaS3}, and \eqref{jS3} into \eqref{vectorfieldvortform}, and using the identities \eqref{LieAlgStr}, we arrive (see the details in Appendix \ref{appendix1}) at the following system of equations for the fields $(\tilde\psi,\tilde\sigma,\tilde\rho,\tilde\xi)$:
\begin{equation}
\label{MHDtilde}
\left\{
\begin{aligned}
&\frac{\partial}{\partial t}(E_{3}^{2}+E_{2}^{2})(\tilde\psi)=\mathfrak{B}((E_{3}^{2}+E_{2}^{2})(\tilde\psi),\tilde\psi)+\mathfrak{B}(\tilde\xi,(E_{3}^{2}+E_{2}^{2})(\tilde\xi))+2\mathfrak{B}(\tilde q,\tilde\psi)+2\mathfrak{B}(\tilde \xi,\tilde\rho), \\
&\frac{\partial\tilde\rho}{\partial t}=\mathfrak{B}(\tilde \rho,\tilde\psi)+\mathfrak{B}(\tilde \xi,\tilde q)+2\mathfrak{B}(\tilde \psi,\tilde\xi),\\
&\frac{\partial\tilde q}{\partial t}=\mathfrak{B}(\tilde q,\tilde\psi)+\mathfrak{B}(\tilde\xi,\tilde\rho),\\
&\frac{\partial\tilde\xi}{\partial t}=\mathfrak{B}(\tilde\xi,\tilde\psi),
\end{aligned}
\right.
\end{equation}
where $\tilde q=\tilde\psi+\tilde\sigma$ is a new field introduced for convenience, and $\mathfrak{B}(\cdot,\cdot)$ is a bilinear, skew-symmetric form 
\begin{equation*}
\mathfrak{B}(\tilde f,\tilde h)=E_{2}(\tilde f)E_{3}(\tilde h)-E_{3}(\tilde f)E_{2}(\tilde h),
\end{equation*}
for arbitrary $E_{1}$-invariant functions $\tilde f$ and $\tilde h$.

Let now $\pi\colon \mathbb{S}^{3}\to \mathbb{S}^{2}\simeq \mathbb{S}^{3}/\mathbb{S}^{1}$ be the Hopf map, and the rotation action $\mathbb{S}^{1}$ be generated by the vector field $E_{1}$. This implies that $E_{1}$-invariant functions $\tilde f$ can be identified with the functions $f$ on $\mathbb{S}^{2}$ via a pullback $\pi^{*}(f)=\tilde f$. By direct computations, one can also verify that the bilinear form $\mathfrak{B}$ descends to the Poisson bracket on $\mathbb{S}^{2}$, i.e. $\pi^{*}(\left\{f,h\right\})=\mathfrak{B}(\tilde f,\tilde h)$, and that $\pi^{*}(\Delta\psi)=(E_{3}^{2}+E_{2}^{2})(\tilde\psi)$, i.e. the operator $E_{3}^{2}+E_{2}^{2}$ reduces to the Laplace operator $\Delta$ on $\mathbb{S}^{2}$ (for details, we refer to \cite{LichtMisPrest2022}). Since all the ingredients in \eqref{MHDtilde} are $E_{1}$-invariant, then \eqref{MHDtilde} descends to a system of equations on $\mathbb{S}^{2}$, the symmetry reduced system:
\begin{equation}
\label{LP3D}
\left\{
\begin{aligned}
&\dot\omega=\left\{\omega,\psi\right\}+\left\{\xi,\Delta\xi\right\}+2\left\{q,\psi\right\}+2\left\{\xi,\rho\right\}, \\
&\dot\rho=\left\{\rho,\psi\right\}+\left\{\xi,q\right\}+2\left\{\psi,\xi\right\},\\
&\dot q=\left\{q,\psi\right\}+\left\{\xi,\rho\right\},\\
&\dot\xi=\left\{\xi,\psi\right\},
\end{aligned}
\right.
\end{equation}
where $q,\psi,\omega=\Delta\psi,\rho,\xi\in C^{\infty}(\mathbb{S}^{2})$, and $\left\{\cdot,\cdot\right\}$ is the Poisson bracket on $\mathbb{S}^{2}$. 
\subsection{Hamiltonian formulation of reduced equations} Here, we develop the Euler--Arnold framework for equations \eqref{LP3D}. We will show that system \eqref{LP3D} is a Lie--Poisson flow on the dual of the Lie algebra that is the semidirect product extension of the semidirect product extension of the Lie algebra of divergence-free vector fields by the algebra of smooth functions on $\mathbb{S}^{2}$.

First, we construct the semidirect product extension $\mathfrak{f}=\mathfrak{X}_{\mu}(\mathbb{S}^{2})\ltimes C^{\infty}(\mathbb{S}^{2})$ of the algebra of divergence free vector fields $\mathfrak{X}_{\mu}(\mathbb{S}^{2})$ by the algebra of smooth functions $C^{\infty}(\mathbb{S}^{2})$. We identify the space $\mathfrak{f}$ as a set of pairs $(v,\sigma)$, with $v\in\mathfrak{X}_{\mu}(\mathbb{S}^{2})$ and $\sigma\in C^{\infty}(\mathbb{S}^{2})$, equipped with the Lie algebra structure
\begin{equation}
\label{LieAlgStructureSemidirExt}
[(v_{1},\sigma_{1}),(v_{2},\sigma_{2})]=([v_{1},v_{2}],\kappa(v_{1})\sigma_{2}-\kappa(v_{2})\sigma_{1}),
\end{equation}
where $[v_{1},v_{2}]$ is the commutator of vector fields, and $\kappa\colon\mathfrak{X}_{\mu}(\mathbb{S}^{2})\to\mathrm{End}(C^{\infty}(\mathbb{S}^{2}))$ is the action of the Lie algebra $\mathfrak{X}_{\mu}(\mathbb{S}^{2})$ on $C^{\infty}(\mathbb{S}^{2})$:
\begin{equation*}
\kappa(v)\sigma=\left\{\psi,\sigma\right\},\quad v=X_{\psi}.
\end{equation*}
With this choice of $\kappa$, the Lie algebra structure \eqref{LieAlgStructureSemidirExt} on $\mathfrak{f}$ becomes
\begin{equation}
\label{liebracketonpairs}
[(\psi_{1},\sigma_{1}),(\psi_{2},\sigma_{2})]=\mathrm{ad}_{(\psi_{1},\sigma_{1})}(\psi_{2},\sigma_{2})=\left(\left\{\psi_{1},\psi_{2}\right\},\left\{\psi_{1},\sigma_{2}\right\}-\left\{\psi_{2},\sigma_{1}\right\}\right).
\end{equation}

We identify the dual $\mathfrak{f}^{*}$ of $\mathfrak{f}$ with the space $\mathfrak{f}$ itself via the inner product coming from the $L_{2}$ inner product of divergence-free, $E_{1}$-invariant vector fields, which we denote as $\mathfrak{X}_{\mu}^{1}(\mathbb{S}^{3})$. Namely, let $u,v\in\mathfrak{X}_{\mu}^{1}(\mathbb{S}^{3})$. Then
\begin{equation*}
\begin{split}
u&=\tilde\sigma E_{1}-(E_{3}\tilde\psi)E_{2}+(E_{2}\tilde\psi)E_{3},\quad E_{1}\tilde\sigma=E_{1}\tilde\psi=0,\\
v&=\tilde\rho E_{1}-(E_{3}\tilde\xi)E_{2}+(E_{2}\tilde\xi)E_{3}, \quad E_{1}\tilde\rho=E_{1}\tilde\xi=0,
\end{split}
\end{equation*}
and therefore
\begin{equation}
\label{integralsonS3}
\begin{split}
\langle u,v\rangle_{L^{2}}&=\int_{\mathbb{S}^{3}}\left(\tilde\sigma\tilde\rho+(E_{3}\tilde\psi)(E_{3}\tilde\xi)+(E_{2}\tilde\psi)(E_{2}\tilde\xi)\right)\mu_{\mathbb{S}^{3}}{}\\&=4\pi\int_{\mathbb{S}^{2}}(\sigma\rho+\nabla\psi\cdot\nabla\xi)\mu_{\mathbb{S}^{2}}=4\pi\int_{\mathbb{S}^{2}}(\sigma\rho-(\Delta\psi)\xi)\mu_{\mathbb{S}^{2}}\overset{\text{def}}{=}\langle(\psi,\sigma),(\xi,\rho)\rangle,
\end{split}
\end{equation}
where $(\psi,\sigma)\in\mathfrak{f}^{*}$, $(\xi,\rho)\in\mathfrak{f}$, and where we used \cite[Eq. (20)-(21)]{ModinPreston2025}.

The following result gives the expression for the coadjoint operator on the dual $\mathfrak{f}^{*}\simeq\mathfrak{f}$ of the Lie algebra $\mathfrak{f}=\mathfrak{X}_{\mu}(\mathbb{S}^{2})\ltimes C^{\infty}(\mathbb{S}^{2})$.
\begin{lemma}
The coadjoint operator on $\mathfrak{f}^{*}$ has the following form:
\begin{equation}
\label{finalformulaforadstar}
\mathrm{ad}^{*}_{(\psi,\sigma)}(\xi,\rho)=\left(\Delta^{-1}\left(\left\{\Delta\xi,\psi\right\}+\left\{\sigma,\rho\right\}\right),\left\{\rho,\psi\right\}\right).
\end{equation}
\end{lemma}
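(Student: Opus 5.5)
The plan is to compute $\mathrm{ad}^{*}$ directly from its defining relation, using the pairing \eqref{integralsonS3} and the bracket \eqref{liebracketonpairs}. The constant $4\pi$ plays no role and is dropped. For $(\psi,\sigma),(\eta,\tau)\in\mathfrak{f}$ and $(\xi,\rho)\in\mathfrak{f}^{*}$, I take as definition
\begin{equation*}
\langle \mathrm{ad}^{*}_{(\psi,\sigma)}(\xi,\rho),(\eta,\tau)\rangle=\langle(\xi,\rho),\mathrm{ad}_{(\psi,\sigma)}(\eta,\tau)\rangle .
\end{equation*}
This sign convention is the one consistent with \eqref{coadjopsemidir} and with the Euler--Arnold form \eqref{EuArnsemidir}. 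The goal is to write the right-hand side as $\int_{\mathbb{S}^{2}}(\tau A_{2}-(\Delta A_{1})\eta)\mu$ and then read off $\mathrm{ad}^{*}_{(\psi,\sigma)}(\xi,\rho)=(A_{1},A_{2})$.

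First, I insert \eqref{liebracketonpairs} into the pairing. This gives
\begin{equation*}
\int_{\mathbb{S}^{2}}\Big(\rho\left\{\psi,\tau\right\}-\rho\left\{\eta,\sigma\right\}-(\Delta\xi)\left\{\psi,\eta\right\}\Big)\mu .
\end{equation*}
Second, I move $\eta$ and $\tau$ out of the brackets. The tool is the standard identity $\int_{\mathbb{S}^{2}} f\{g,h\}\mu=\int_{\mathbb{S}^{2}} h\{f,g\}\mu$, which follows from integration by parts: $\{\cdot,\cdot\}\mu=\mathrm{d}\cdot\wedge\mathrm{d}\cdot$ and $\mathbb{S}^{2}$ is closed, so Stokes applies. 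It yields
\begin{equation*}
\int\rho\{\psi,\tau\}\mu=\int\tau\{\rho,\psi\}\mu,\qquad
\int\rho\{\eta,\sigma\}\mu=\int\eta\{\sigma,\rho\}\mu,
\end{equation*}
\begin{equation*}
\int(\Delta\xi)\{\psi,\eta\}\mu=\int\eta\{\Delta\xi,\psi\}\mu .
\end{equation*}
The pairing therefore equals
\begin{equation*}
\int_{\mathbb{S}^{2}}\Big(\tau\{\rho,\psi\}-\eta\big(\{\Delta\xi,\psi\}+\{\sigma,\rho\}\big)\Big)\mu .
\end{equation*}

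Comparing the $\tau$-terms gives $A_{2}=\{\rho,\psi\}$. Comparing the $\eta$-terms gives $\Delta A_{1}=\{\Delta\xi,\psi\}+\{\sigma,\rho\}$, which is formula \eqref{finalformulaforadstar}.

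The only delicate point is inverting $\Delta$ on $\mathbb{S}^{2}$. Its kernel is the constants, so $\Delta^{-1}$ is defined only on mean-zero functions. Every Poisson bracket has zero mean, since $\int\{f,h\}\mu=\int \mathrm{d}f\wedge\mathrm{d}h=0$ by Stokes. Hence $\Delta^{-1}$ is well defined on the right-hand side, and we take the mean-zero representative, matching the fact that stream functions are defined modulo constants. The same reasoning shows the pairing is insensitive to constants in $\psi$ and $\eta$, so the identification $\mathfrak{f}^{*}\simeq\mathfrak{f}$ is consistent. I expect no further obstacle beyond keeping the sign conventions of the pairing and of $\mathrm{ad}^{*}$ consistent.
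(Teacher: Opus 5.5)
Your proof is correct and follows the same route as the paper. Both pair $\mathrm{ad}^{*}_{(\psi,\sigma)}(\xi,\rho)$ against a test element, insert the bracket \eqref{liebracketonpairs} into the pairing \eqref{integralsonS3}, move the test functions out of the Poisson brackets with the cyclic integration-by-parts identity, and read off both components; your extra observation that Poisson brackets have zero mean, so $\Delta^{-1}$ is well defined on the first component, is a point the paper leaves implicit.
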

\begin{proof}Let $(\xi,\rho)\in\mathfrak{f}^{*}$, $(\psi,\sigma),(a,b)\in\mathfrak{f}$. Then, using the definition of the coadjoint operator, we get
\begin{equation*}
\begin{split}
\left\langle\mathrm{ad}^{*}_{(\psi,\sigma)}(\xi,\rho),(a,b)\right\rangle&=\langle(\xi,\rho),\mathrm{ad}_{(\psi,\sigma)}(a,b)\rangle=\langle(\xi,\rho),(\left\{\psi,a\right\},\left\{\psi,b\right\}-\left\{a,\sigma\right\})\rangle{}\\&=4\pi\int_{\mathbb{S}^{2}}\rho(\left\{\psi,b\right\}-\left\{a,\sigma\right\})-4\pi\int_{\mathbb{S}^{2}}(\Delta\xi)\left\{\psi,a\right\}{}\\&=4\pi\int_{\mathbb{S}^{2}}\left\{\rho,\psi\right\}b+4\pi\int_{\mathbb{S}^{2}}(\left\{\rho,\sigma\right\}-\left\{\Delta\xi,\psi\right\})a{}\\&=
\left\langle\left(\Delta^{-1}\left(\left\{\Delta\xi,\psi\right\}+\left\{\sigma,\rho\right\}\right),\left\{\rho,\psi\right\}\right),(a,b)\right\rangle,
\end{split}
\end{equation*}
where we used \eqref{integralsonS3}, and which implies the assertion.
\end{proof}

The next step is to construct the magnetic extension of the semidirect product Lie algebra $\mathfrak{f}=\mathfrak{X}_{\mu}(\mathbb{S}^{2})\ltimes C^{\infty}(\mathbb{S}^{2})$, which we will denote by
\begin{equation}
\label{magneticextensionformhd}
\mathfrak{F}=(\mathfrak{X}_{\mu}(\mathbb{S}^{2})\ltimes C^{\infty}(\mathbb{S}^{2}))\ltimes(\mathfrak{X}_{\mu}(\mathbb{S}^{2})\ltimes C^{\infty}(\mathbb{S}^{2}))^{*}=\mathfrak{f}\ltimes\mathfrak{f}^{*}.
\end{equation}

To get the Euler--Arnold formulation for \eqref{LP3D}, we need to derive the coadjoint operator on the dual $\mathfrak{F}^{*}\simeq\mathfrak{F}$. Let $((\alpha,\beta),(\gamma,\delta))\in\mathfrak{F}$, and let $((\xi,\rho),(\psi,q))\in\mathfrak{F}^{*}$. Then, using formula \eqref{coadjopsemidir} for the coadjoint operator on a semidirect product, we get
\begin{equation}
\label{coadjforsemidirabelanext}
\mathrm{ad}^{*}_{((\alpha,\beta),(\gamma,\delta))}((\xi,\rho),(\psi,q))=\left([(\xi,\rho),(\alpha,\beta)],\mathrm{ad}^{*}_{(\alpha,\beta)}(\psi,q)-\mathrm{ad}^{*}_{(\xi,\rho)}(\gamma,\delta)\right).
\end{equation}
We consider the two components in \eqref{coadjforsemidirabelanext} separately. First, using \eqref{liebracketonpairs}, we get
\begin{equation*}
[(\xi,\rho),(\alpha,\beta)]=\left(\left\{\xi,\alpha\right\},\left\{\xi,\beta\right\}-\left\{\alpha,\rho\right\}\right),
\end{equation*}
and therefore the pair $(\xi,\rho)$ is evolving, according to the Euler--Arnold equations, as
\begin{equation}
\label{sigmapsieulerarnold}
\left\{
\begin{aligned}
&\dot\rho=\left\{\rho,\alpha\right\}+\left\{\xi,\beta\right\}, \\
&\dot\xi=\left\{\xi,\alpha\right\}.
\end{aligned}
\right.
\end{equation}

The Euler--Arnold equation for the pair $(\psi,q)$ is
\begin{equation*}
\frac{\mathrm{d}}{\mathrm{d}t}(\psi,q)=\mathrm{ad}^{*}_{(\alpha,\beta)}(\psi,q)-\mathrm{ad}^{*}_{(\xi,\rho)}(\gamma,\delta),
\end{equation*}
where the $\mathrm{ad}^{*}$ operator is defined by formula \eqref{finalformulaforadstar}. We finally get
\begin{equation}
\label{rhoxieulerarnold}
\left\{
\begin{aligned}
\dot\omega&=\left\{\omega,\alpha\right\}+\left\{\beta,q\right\}+\left\{\xi,\Delta\gamma\right\}+\left\{\delta,\rho\right\}, \\
\dot q&=\left\{q,\alpha\right\}+\left\{\xi,\delta\right\}.
\end{aligned}
\right.
\end{equation}
Altogether, equations \eqref{sigmapsieulerarnold} and \eqref{rhoxieulerarnold} constitute a general form of the Euler--Arnold equation on $\mathfrak{F}^{*}$, with $\mathfrak{F}=(\mathfrak{X}_{\mu}(\mathbb{S}^{2})\ltimes C^{\infty}(\mathbb{S}^{2}))\ltimes(\mathfrak{X}_{\mu}(\mathbb{S}^{2})\ltimes C^{\infty}(\mathbb{S}^{2}))^{*}$ provided an appropriate inertia operator $I\colon\mathfrak{F}\to\mathfrak{F}^{*}$ relating $((\alpha,\beta),(\gamma,\delta))$ and $((\xi,\rho),(\psi,q))$ is given. In essence, inertia operator defines the algebra fields $((\alpha,\beta),(\gamma,\delta))$ in terms of the co-algebra fields $((\xi,\rho),(\psi,q))$, and to make a connection between the general equations \eqref{sigmapsieulerarnold}-\eqref{rhoxieulerarnold} and \eqref{LP3D}, we will put
\begin{equation}
\label{choiceofthehamiltonian}
\beta=q-2\psi,\quad\delta=\rho+2\xi,\quad\alpha=\psi,\quad\gamma=\xi.
\end{equation}
With this choice, equations \eqref{sigmapsieulerarnold}-\eqref{rhoxieulerarnold} coincide with \eqref{LP3D}. Summarizing, we arrive at the first central result of the paper.
\begin{theorem}
The system of axisymmetric MHD equations on $\mathbb{S}^{3}$ defined in \eqref{LP3D} is a Lie--Poisson system
\begin{equation*}
\frac{\mathrm{d}}{\mathrm{d}t}((\xi,\rho),(\psi,q))=\mathrm{ad}^{*}_{((\alpha,\beta),(\gamma,\delta))}((\xi,\rho),(\psi,q))
\end{equation*}
on the dual $\mathfrak{F}^{*}$ of the Lie algebra
\begin{equation*}
\mathfrak{F}=(\mathfrak{X}_{\mu}(\mathbb{S}^{2})\ltimes C^{\infty}(\mathbb{S}^{2}))\ltimes(\mathfrak{X}_{\mu}(\mathbb{S}^{2})\ltimes C^{\infty}(\mathbb{S}^{2}))^{*}.
\end{equation*}
\end{theorem}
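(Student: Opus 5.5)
The plan is to verify directly that, under the identification \eqref{choiceofthehamiltonian}, the general Euler--Arnold equations \eqref{sigmapsieulerarnold}--\eqref{rhoxieulerarnold} on $\mathfrak{F}^{*}$ reproduce \eqref{LP3D} term by term. I will also check that \eqref{choiceofthehamiltonian} is genuinely induced by a Hamiltonian on $\mathfrak{F}^{*}$, i.e.\ that $((\alpha,\beta),(\gamma,\delta))=\delta H/\delta((\xi,\rho),(\psi,q))$ for a suitable quadratic $H$. Only then is the flow Lie--Poisson rather than merely of the form $\dot z=\mathrm{ad}^{*}_{X(z)}z$ for some arbitrary map $X$.

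The first step is to justify \eqref{coadjforsemidirabelanext}. I apply the general semidirect product formula \eqref{coadjopsemidir}, with $\mathfrak{X}_{\mu}(M)$ replaced by $\mathfrak{f}$ and $\Omega^{1}/\mathrm{d}\Omega^{0}$ replaced by $\mathfrak{f}^{*}$. The action of $\mathfrak{f}$ on $\mathfrak{f}^{*}$ is the coadjoint action computed in the preceding Lemma, formula \eqref{finalformulaforadstar}. The formula is purely algebraic: it holds for any $\mathfrak{g}\ltimes\mathfrak{g}^{*}$ with pairing $\langle (w,b),(v,a)\rangle=\langle w,a\rangle+\langle b,v\rangle$, so it transfers verbatim. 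The pairing on $\mathfrak{F}$ is built from \eqref{integralsonS3}.

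The second step is the substitution. Put $\alpha=\psi$, $\beta=q-2\psi$, $\gamma=\xi$, $\delta=\rho+2\xi$ into \eqref{sigmapsieulerarnold}. This gives $\dot\xi=\{\xi,\psi\}$ and
\begin{equation*}
\dot\rho=\{\rho,\psi\}+\{\xi,q\}-2\{\xi,\psi\}=\{\rho,\psi\}+\{\xi,q\}+2\{\psi,\xi\}.
\end{equation*}
In \eqref{rhoxieulerarnold} it gives $\dot q=\{q,\psi\}+\{\xi,\rho\}+2\{\xi,\xi\}=\{q,\psi\}+\{\xi,\rho\}$. For the vorticity,
\begin{equation*}
\dot\omega=\{\omega,\psi\}+\{q-2\psi,q\}+\{\xi,\Delta\xi\}+\{\rho+2\xi,\rho\}=\{\omega,\psi\}+\{\xi,\Delta\xi\}+2\{q,\psi\}+2\{\xi,\rho\}.
\end{equation*}
These four equations agree with \eqref{LP3D}.

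The third step, and the main obstacle, is exhibiting $H$. The sign and normalisation conventions (the factor $4\pi$, and the $-\Delta$ inside the pairing) must make the cross terms $\beta=q-2\psi$ and $\delta=\rho+2\xi$ appear as variational derivatives. I would start from the reduced energy $\frac12\int(|u|^{2}+|B|^{2})$ written through \eqref{integralsonS3}, namely $\frac12\int(\sigma^{2}+|\nabla\psi|^{2}+\rho^{2}+|\nabla\xi|^{2})$ with $\sigma=q-\psi$. I would then compute $\delta H$ with respect to the pairing and compare. If the pairing weights do not directly produce $\beta=q-2\psi$, I would shift $H$ by Casimir-like terms or rescale. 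Alternatively, I would accept the identification \eqref{choiceofthehamiltonian} as defining the inertia operator $I\colon\mathfrak{F}\to\mathfrak{F}^{*}$, provided it is symmetric with respect to the pairing. Since the statement asserts only the Lie--Poisson form $\frac{\mathrm{d}}{\mathrm{d}t}z=\mathrm{ad}^{*}_{I^{-1}z}z$, the first two steps already prove it; the Hamiltonian check is a consistency refinement.
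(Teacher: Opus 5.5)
Your steps 1--2 are exactly the paper's proof: apply \eqref{coadjopsemidir} to $\mathfrak{F}=\mathfrak{f}\ltimes\mathfrak{f}^{*}$ with the coadjoint action \eqref{finalformulaforadstar} of $\mathfrak{f}$, write the general Euler--Arnold system \eqref{sigmapsieulerarnold}--\eqref{rhoxieulerarnold}, and substitute \eqref{choiceofthehamiltonian}; your term-by-term comparison with \eqref{LP3D} is correct. The paper does not attempt your step 3, and it cannot be completed as you sketch it: the linear map $X$ given by \eqref{choiceofthehamiltonian} is not symmetric for the pairing built from \eqref{integralsonS3} (for $z_{1}=((0,\rho_{1}),(0,0))$ and $z_{2}=((\xi_{2},0),(0,0))$ one gets $\langle z_{1},Xz_{2}\rangle=8\pi\int\rho_{1}\xi_{2}$ but $\langle z_{2},Xz_{1}\rangle=0$), so it is not an inertia operator, and rescaling \eqref{energyreduced3D} or adding Casimirs cannot help, because up to the factor $4\pi$ the gradient of \eqref{energyreduced3D} is $((\psi,q),(\xi,\rho))$, whose flow lacks the terms $2\{\psi,\xi\}$ in $\dot\rho$ and $2\{q,\psi\}+2\{\xi,\rho\}$ in $\dot\omega$ (your $\sigma^{2}$-energy is worse, since its gradient has $\alpha=\psi+\Delta^{-1}\sigma$), so the theorem holds exactly in the sense that you and the paper verify, namely $\dot z=\mathrm{ad}^{*}_{X(z)}z$.
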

\subsection{Conservation laws for reduced equations}
Even though system \eqref{LP3D} does not reduce to a number of subsystems evolving on a semidirect product Lie algebra, like reduced equations \eqref{MHDvort}, which, as shown in the coming sections, typically happens in planar MHD, one can notice that some pairs of the fields in \eqref{LP3D} do form the ``semidirect product pairs'', in analogy with \eqref{MHDvort}. Therefore, one can guess that the following quantities are Casimir invariants for \eqref{LP3D}:
\begin{equation}
\label{cas3Dnotfull}
\mathcal{C}_{f}=\int_{\mathbb{S}^{2}}f(\xi)\mu,\quad J_{h}=\int_{\mathbb{S}^{2}}\rho h(\xi)\mu,\quad K_{\textsl{g}}=\int_{\mathbb{S}^{2}}q\textsl{g}(\xi)\mu,
\end{equation}
for arbitrary smooth functions $f,\textsl{g},h$. The fact that the field $\xi$ comes into the Casimirs \eqref{cas3Dnotfull} as an argument of arbitrary functions reflects the transport property of $\xi$, which, in turn, follows from the transport of the magnetic field $B$. There is, however, one more Casimir invariant:
\begin{equation}
\label{crosshelicityreduced}
\mathcal{I}_{k}=\int_{\mathbb{S}^{2}}(\omega k(\xi)-\rho q k^{\prime}(\xi))\mu,
\end{equation}
where $k$ is an arbitrary smooth function.

Given the formula for the $L_{2}$-pairing \eqref{integralsonS3}, one immediately conludes that \eqref{crosshelicityreduced}, in case $k(\xi)=\xi$, is the descendant of the three-dimensional cross-helicity Casimir $\mathcal{I}(u,B)$ in \eqref{casMHD}. Finally, the energy \eqref{energymetric} reduces to
\begin{equation}
\label{energyreduced3D}
H=-\frac{1}{2}\int_{\mathbb{S}^{2}}(\omega\psi-q^{2})\mu-\frac{1}{2}\int_{\mathbb{S}^{2}}(\xi\Delta\xi-\rho^{2})\mu,
\end{equation}
which is also conserved, but is not a Casimir. We arrive at the following result.
\begin{proposition}
The quantities \eqref{cas3Dnotfull}-\eqref{crosshelicityreduced} are Casimir invariants for the system of axisymmetric MHD equations on $\mathbb{S}^{3}$ defined in \eqref{LP3D}. The energy \eqref{energyreduced3D} is also conserved by the flow \eqref{LP3D}.
\end{proposition}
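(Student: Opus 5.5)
The plan is to verify every statement by differentiating in time along solutions of \eqref{LP3D} and integrating over $\mathbb{S}^{2}$. I will not rely on the abstract Lie--Poisson structure. Three elementary identities for the Poisson bracket on the closed surface $\mathbb{S}^{2}$ will be used throughout:
\begin{equation*}
\int_{\mathbb{S}^{2}}\{a,b\}\mu=0,\qquad \int_{\mathbb{S}^{2}}\{a,b\}c\,\mu=\int_{\mathbb{S}^{2}}a\{b,c\}\mu,\qquad \{F(\xi),b\}=F'(\xi)\{\xi,b\}.
\end{equation*}
Two consequences follow immediately. First, any expression of the form $\{K(\xi),b\}$ integrates to zero. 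Second, the ``transport'' terms of a product combine into a total bracket with $\psi$ and therefore drop out; for example, $\{\rho,\psi\}h(\xi)+\rho\{h(\xi),\psi\}=\{\rho h(\xi),\psi\}$.

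For the three families \eqref{cas3Dnotfull} the computation is short.
\begin{itemize}
\item For $\mathcal{C}_{f}$: $\frac{d}{dt}\mathcal{C}_f=\int f'(\xi)\{\xi,\psi\}\mu=\int\{f(\xi),\psi\}\mu=0$.
\item For $J_{h}$: the terms $\{\rho,\psi\}h(\xi)+\rho h'(\xi)\{\xi,\psi\}$ combine to $\{\rho h(\xi),\psi\}$. The remaining terms $\{\xi,q\}h(\xi)$ and $2\{\psi,\xi\}h(\xi)$ equal $\{H(\xi),q\}$ and $2\{\psi,H(\xi)\}$ with $H'=h$. All three integrate to zero.
\item For $K_{\textsl{g}}$ the argument is identical, using $\dot q=\{q,\psi\}+\{\xi,\rho\}$.
\end{itemize}

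For the cross-helicity $\mathcal{I}_k$ I expand
\begin{equation*}
\frac{d}{dt}\mathcal{I}_k=\int\left(\dot\omega k+\omega k'\dot\xi-\dot\rho q k'-\rho\dot q k'-\rho q k''\dot\xi\right)\mu,
\end{equation*}
and sort the terms into three groups.
\begin{itemize}
\item Transport groups. The combination $\{\omega,\psi\}k+\omega\{k,\psi\}$ is a total bracket. So is $\{\rho,\psi\}qk'+\rho\{q,\psi\}k'+\rho q\{k',\psi\}=\{\rho qk',\psi\}$. Both vanish after integration.
\item Terms of the form $\{K(\xi),\cdot\}$ or pure squares. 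These are $\{\xi,\Delta\xi\}k=\{K(\xi),\Delta\xi\}$, $2\{\xi,\rho\}k$, $\{\xi,q\}qk'=\{k(\xi),q^2/2\}$ and $\rho\{\xi,\rho\}k'=\{k(\xi),\rho^2/2\}$. Each integrates to zero.
\item The only genuinely coupled terms: $2\int\{q,\psi\}k\,\mu-2\int\{\psi,\xi\}qk'\mu$. Integrating by parts gives $\int\{q,\psi\}k=\int q\{\psi,k(\xi)\}=\int qk'\{\psi,\xi\}$. Hence these two cancel exactly.
\end{itemize}
This cancellation is where the specific coefficient $-\rho q k'$ in \eqref{crosshelicityreduced} is forced. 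It is the step I expect to need the most care, since sign slips in $\{\psi,\xi\}$ versus $\{\xi,\psi\}$ matter.

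For the energy \eqref{energyreduced3D} I again compute directly. I use $\frac{d}{dt}\int\omega\psi=2\int\dot\omega\psi$ (by self-adjointness of $\Delta$), and similarly $\frac{d}{dt}\int\xi\Delta\xi=2\int\dot\xi\Delta\xi$. This gives
\begin{equation*}
\dot H=\int\left(-\dot\omega\psi+q\dot q-\dot\xi\Delta\xi+\rho\dot\rho\right)\mu.
\end{equation*}
I then substitute \eqref{LP3D} and simplify with the identities above. Terms such as $\{\omega,\psi\}\psi$, $\{q,\psi\}q$ and $\{\rho,\psi\}\rho$ vanish on their own. The remaining cross terms should cancel pairwise:
\begin{itemize}
\item $-\{\xi,\Delta\xi\}\psi$ against $-\{\xi,\psi\}\Delta\xi$;
\item $-2\{\xi,\rho\}\psi$ against $2\rho\{\psi,\xi\}$;
\item $q\{\xi,\rho\}$ against $\rho\{\xi,q\}$.
\end{itemize}
The remaining term $-2\{q,\psi\}\psi$ vanishes since it is $\{q,\psi^2\}$. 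The main obstacle throughout is purely bookkeeping of signs and of the factor-$2$ couplings. I will not try to derive conservation of $H$ from the abstract Lie--Poisson theorem, because the inertia choice \eqref{choiceofthehamiltonian} does not visibly coincide with $\delta H$, so the direct verification is the safer route.
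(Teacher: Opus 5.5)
Your computations are all correct. I checked each grouping for $\mathcal{C}_f$, $J_h$, $K_g$ and $\mathcal{I}_k$, including the $\{q,\psi\}k$ versus $\{\psi,\xi\}qk'$ cancellation. The energy part is essentially the paper's own direct computation of $\dot H$ along \eqref{LP3D}.

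The gap is in what the Casimir part actually establishes. A Casimir must Poisson-commute with every functional on $\mathfrak{F}^{*}$. Equivalently, it must be conserved by the Euler--Arnold equations \eqref{sigmapsieulerarnold}--\eqref{rhoxieulerarnold} for \emph{arbitrary} $(\alpha,\beta,\gamma,\delta)$. The paper insists on this distinction: $H$ is conserved but is explicitly not a Casimir. You substitute \eqref{LP3D}, which is the single choice \eqref{choiceofthehamiltonian}, so you only show that \eqref{cas3Dnotfull}--\eqref{crosshelicityreduced} are first integrals of this one flow. That is why the paper runs the $\mathcal{I}_k$ computation with $(\alpha,\beta,\gamma,\delta)$ left free, before ``the choice of the inertia operator is made.''

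Your worry that \eqref{choiceofthehamiltonian} does not visibly come from $\delta H$ is a fair reason to check $\dot H$ by hand. It is no reason to avoid the general equations for the Casimirs, because that computation never uses the inertia operator.

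The repair needs only your three identities. Along the general equations, the integrand of $\dot{\mathcal{I}}_k$ regroups by algebra variable into total brackets:
\begin{equation*}
\{\omega k,\alpha\}-\{\rho q k',\alpha\}+\{\beta,qk\}+\{\delta,\rho k\}+\{k_{1}(\xi),\Delta\gamma\},\qquad k_{1}'=k.
\end{equation*}
Each of these integrates to zero. Likewise, with $h_{1}'=h$ and $g_{1}'=g$,
\begin{equation*}
\dot J_h=\int(\{\rho h,\alpha\}+\{h_{1}(\xi),\beta\})\mu,\qquad \dot K_g=\int(\{qg,\alpha\}+\{g_{1}(\xi),\delta\})\mu,\qquad \dot{\mathcal{C}}_f=\int\{f(\xi),\alpha\}\mu.
\end{equation*}

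In this form, your ``genuinely coupled'' cancellation is just the $-2\psi$ piece of the $\{\beta,qk\}$ group. In general, the coefficient $-\rho q k'$ is forced because the $\beta$-term of $\dot\rho$ and the $\delta$-term of $\dot q$ are what complete $\{\beta,q\}k$ and $\{\delta,\rho\}k$ to total brackets.
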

\begin{proof}
We will prove this result for the Casimir $\mathcal{I}_{k}$, and the Hamiltonian (others are treated analogously). To show that $\mathcal{I}_{k}$ is indeed a Casimir, we will calculate its evolution along trajectories of the general Euler--Arnold equation \eqref{sigmapsieulerarnold}-\eqref{rhoxieulerarnold}, where the choice of the inertia operator \eqref{choiceofthehamiltonian} is not yet made:
\begin{equation*}
\begin{split}
\frac{\mathrm{d}\mathcal{I}_{k}}{\mathrm{d}t}&=\int_{\mathbb{S}^{2}}\dot\omega k(\xi)+\int_{\mathbb{S}^{2}}\omega\frac{\mathrm{d}k(\xi)}{\mathrm{d}t}-\int_{\mathbb{S}^{2}}\dot\rho qk^{\prime}(\xi)-\int_{\mathbb{S}^{2}}\rho\dot qk^{\prime}(\xi)-\int_{\mathbb{S}^{2}}\rho q\frac{\mathrm{d}k^{\prime}(\xi)}{\mathrm{d}t}{}\\&=\underbrace{\int_{\mathbb{S}^{2}}\left\{\omega,\alpha\right\}k(\xi)+\int_{\mathbb{S}^{2}}\omega\frac{\mathrm{d}k(\xi)}{\mathrm{d}t}}_{=0}+\int_{\mathbb{S}^{2}}\left\{\beta,q\right\}k(\xi)+\int_{\mathbb{S}^{2}}\left\{\xi,\Delta\gamma\right\}k(\xi)+\int_{\mathbb{S}^{2}}\left\{\delta,\rho\right\}k(\xi){}\\&+\int_{\mathbb{S}^{2}}qk^{\prime}(\xi)\left(\left\{\alpha,\rho\right\}+\left\{\beta,\xi\right\}\right)+\int_{\mathbb{S}^{2}}\rho k^{\prime}(\xi)\left(\left\{\alpha,q\right\}+\left\{\delta,\xi\right\}\right)-\int_{\mathbb{S}^{2}}\rho q\frac{\mathrm{d}k^{\prime}(\xi)}{\mathrm{d}t}{}\\&=\underbrace{\int_{\mathbb{S}^{2}}\left(\left\{\beta,q\right\}k(\xi)+qk^{\prime}(\xi)\left\{\beta,\xi\right\}\right)}_{=0}+\underbrace{\int_{\mathbb{S}^{2}}\left(\left\{\delta,\rho\right\}k(\xi)+\rho k^{\prime}(\xi)\left\{\delta,\xi\right\}\right)}_{=0}{}\\&
+\int_{\mathbb{S}^{2}}\left(k^{\prime}(\xi)(q\left\{\alpha,\rho\right\}+\left\{\alpha,q\right\}\rho)\right)-\int_{\mathbb{S}^{2}}\rho q\frac{\mathrm{d}k^{\prime}(\xi)}{\mathrm{d}t},
\end{split}
\end{equation*}
where we used that $\left\{\xi,k(\xi)\right\}=0$, $k^{\prime}(\xi)\left\{q,\xi\right\}=\left\{q,k(\xi)\right\}$, and that if $\xi$ is transported according to the last equation in \eqref{sigmapsieulerarnold}, then $k(\xi)$ is as well. Consider the last integral in detail:
\begin{equation*}
\int_{\mathbb{S}^{2}}\rho q\frac{\mathrm{d}k^{\prime}(\xi)}{\mathrm{d}t}=\int_{\mathbb{S}^{2}}\rho q\left\{k^{\prime}(\xi),\alpha\right\}=\int_{\mathbb{S}^{2}}\left\{\alpha,\rho q\right\}k^{\prime}(\xi)=\int_{\mathbb{S}^{2}}k^{\prime}(\xi)\left(q\left\{\alpha,\rho\right\}+\rho\left\{\alpha,q\right\}\right),
\end{equation*}
which implies that $\dot{\mathcal{I}}_{k}=0$ along trajectories of \eqref{sigmapsieulerarnold}-\eqref{rhoxieulerarnold}, i.e. $\mathcal{I}_{k}$ is a Casimir.

Further, we calculate the evolution of the Hamiltonian \eqref{energyreduced3D} along trajectories of \eqref{LP3D}:
\begin{equation*}
\begin{split}
\frac{\mathrm{d}H}{\mathrm{d}t}&=-\frac{1}{2}\int_{\mathbb{S}^{2}}\dot\omega\psi-\frac{1}{2}\int_{\mathbb{S}^{2}}(\Delta\psi)\dot\psi+\int_{\mathbb{S}^{2}}q\dot q-\frac{1}{2}\int_{\mathbb{S}^{2}}\dot\xi\Delta\xi-\frac{1}{2}\int_{\mathbb{S}^{2}}\xi\Delta\dot\xi+\int_{\mathbb{S}^{2}}\rho\dot\rho{}\\&
=-\int_{\mathbb{S}^{2}}(\dot\omega\psi+\dot\xi\Delta\xi)+\int_{\mathbb{S}^{2}}(q\dot q+\rho\dot\rho){}\\&=-\int_{\mathbb{S}^{2}}\left(\psi(\left\{\omega,\psi\right\}+\left\{\xi,\Delta\xi\right\}+2\left\{q,\psi\right\}+2\left\{\xi,\rho\right\})+\Delta\xi\left\{\xi,\psi\right\}\right){}\\&+\int_{\mathbb{S}^{2}}(q(\left\{q,\psi\right\}+\left\{\xi,\rho\right\})+\rho(\left\{\rho,\psi\right\}+\left\{\xi,q\right\}+2\left\{\psi,\xi\right\}))=0,
\end{split}
\end{equation*}
where one repeatedly uses $\int\left\{f,g\right\}h=\int\left\{g,h\right\}f=\int \left\{h,f\right\}g$.
\end{proof}
\begin{remark}
\begin{enumerate}
\item One observes that if the magnetic field $B$ is trivial, which implies $\rho=\xi=0$, then system \eqref{LP3D} reduces to the axisymmetric Euler equations on $\mathbb{S}^{3}$ derived in \cite{ModinPreston2025}. Furthermore, the second component in the semidirect product Lie algebra \eqref{magneticextensionformhd} vanishes, and one gets the semidirect product Lie algebra $\mathfrak{f}=\mathfrak{X}_{\mu}(\mathbb{S}^{2})\ltimes C^{\infty}(\mathbb{S}^{2})$. In \cite{ModinPreston2025}, the reduced version of the axisymmetric Euler equations on $\mathbb{S}^{3}$ is modeled on the Abelian extension Lie algebra instead, which is a more general construction compared to the semidirect product extension. Here, we have shown that the semidirect product structure is enough to derive the Hamiltonian formulation. 
\item It would be beneficial to get an extension of the semidirect product bracket \eqref{RMHDbracket} to the case of semidirect products of semidirect product  extensions, which will be addressed in future works.
\end{enumerate}
\end{remark}
\subsection{Axisymmetric reduced equations in $\mathbb{R}^{3}$}
One can obtain the reduced version of the 3D MHD equations in planar geometry as well. Consider the system \eqref{MHD} in $\mathbb{R}^{3}(x,y,z)$ with solutions invariant with respect to vertical translations (in $z$ direction). In this case, the fields $u$ and $B$ take the form (see \cite{GTAM2017})
\begin{equation}
\label{uBaxisymmT2}
\begin{split}
u&=X_{\psi}+\sigma(t,x,y)\partial_{z}=-\psi_{y}\partial_{x}+\psi_{x}\partial_{y}+\sigma\partial_{z},\\
B&=X_{-\xi}+\rho(t,x,y)\partial_{z}=\xi_{y}\partial_{x}-\xi_{x}\partial_{y}+\rho\partial_{z}.
\end{split}
\end{equation}
Inserting the ansatz \eqref{uBaxisymmT2} into \eqref{MHD}, one gets the following system of equations for the fields $\omega=\Delta\psi,\rho,\sigma,\xi$:
\begin{equation}
\label{MHDaxisymmvortexT2}
\left\{
\begin{aligned}
&\dot\omega=\left\{\omega,\psi\right\}+\left\{\xi,\Delta\xi\right\}, \\
&\dot \rho=\left\{\rho,\psi\right\}+\left\{\sigma,\xi\right\},\\
&\dot \sigma=\left\{\sigma,\psi\right\}+\left\{\rho,\xi\right\},\\
&\dot\xi=\left\{\xi,\psi\right\}.
\end{aligned}
\right.
\end{equation}
We will assume doubly periodic boundary conditions, i.e. system \eqref{MHDaxisymmvortexT2} evolves on the flat torus $\mathbb{T}^{2}$.

Equations \eqref{MHDaxisymmvortexT2}, as well as their extensions, such as Hall MHD and L\"ust MHD, were derived and studied in \cite{GTAM2017} (see also \cite[Ch. 31.1]{DarrylBook}). We observe that system \eqref{MHDaxisymmvortexT2} splits into subsystems of type \eqref{MHDvort}, and the extended version of the Lie--Poisson bracket \eqref{RMHDbracket} for \eqref{MHDaxisymmvortexT2} takes the form
\begin{equation}
\label{RMHDbracketAxi}
\begin{split}
 \llbracket F,G\rrbracket_{\text{ax}}=\int_{\mathbb{T}^{2}}&\left[
 \xi\left(\left\{\frac{\delta F}{\delta \omega},\frac{\delta G}{\delta \xi}\right\}+\left\{\frac{\delta F}{\delta \xi},\frac{\delta G}{\delta \omega}\right\}+\left\{\frac{\delta F}{\delta \sigma},\frac{\delta G}{\delta \rho}\right\}+\left\{\frac{\delta F}{\delta \rho},\frac{\delta G}{\delta \sigma}\right\}\right)+\omega\left\{\frac{\delta F}{\delta \omega},\frac{\delta G}{\delta \omega}\right\}\right.{}\\&\left.+\rho\left(\left\{\frac{\delta F}{\delta \omega},\frac{\delta G}{\delta \rho}\right\}+\left\{\frac{\delta F}{\delta \rho},\frac{\delta G}{\delta \omega}\right\}\right)+\sigma\left(\left\{\frac{\delta F}{\delta \omega},\frac{\delta G}{\delta \sigma}\right\}+\left\{\frac{\delta F}{\delta \sigma},\frac{\delta G}{\delta \omega}\right\}\right)\right]\mathrm{d}x\mathrm{d}y.
 \end{split}
\end{equation}
Therefore, system \eqref{MHDaxisymmvortexT2} takes the Hamiltonian form $\dot F= \llbracket F,H\rrbracket_{\text{ax}}$, with the Hamltonian
\begin{equation*}
H=\frac{1}{2}\int_{\mathbb{T}^{2}}(|\nabla\psi|^{2}+|\nabla\xi|^{2}+\rho^{2}+\sigma^{2})\mathrm{d}x\mathrm{d}y,
\end{equation*}
and possesses the following set of Casimirs invariants, see \cite{GTAM2017}:
\begin{equation*}
\mathcal{C}_{f}=\int_{\mathbb{T}^{2}}f(\xi)\mathrm{d}x\mathrm{d}y,\quad J_{h}=\int_{\mathbb{T}^{2}}\rho h(\xi)\mathrm{d}x\mathrm{d}y,\quad K_{\textsl{g}}=\int_{\mathbb{T}^{2}}\sigma\textsl{g}(\xi)\mathrm{d}x\mathrm{d}y,\quad \mathcal{I}_{k}=\int_{\mathbb{T}^{2}}\omega k(\xi)\mathrm{d}x\mathrm{d}y,
\end{equation*}
for arbitrary smooth functions $f,g,h,k$.
\begin{remark}
We observe that the underlying Lie algebra for system \eqref{MHDaxisymmvortexT2} is a direct sum of semidirect product extensions. It becomes evident when writing equations \eqref{MHDaxisymmvortexT2} in the form
\begin{equation*}
\left\{
\begin{aligned}
&\dot\omega=\left\{\omega,\psi\right\}+\left\{\xi,\Delta\xi\right\}, \\
&\dot\xi=\left\{\xi,\psi\right\},
\end{aligned}
\right.
\quad
\left\{
\begin{aligned}
&\dot \rho=\left\{\rho,\psi\right\}+\left\{\xi,-\sigma\right\},\\
&\dot\xi=\left\{\xi,\psi\right\},
\end{aligned}
\right.
\quad
\left\{
\begin{aligned}
&\dot \sigma=\left\{\sigma,\psi\right\}+\left\{\xi,-\rho\right\},\\
&\dot\xi=\left\{\xi,\psi\right\},
\end{aligned}
\right.
\end{equation*}
which yields three semidirect product subsystems of the type \eqref{MHDvort}. We conclude that the Lie algebra $\mathfrak{F}$ is
\begin{equation}
\label{LieAlgStructureforT2}
\mathfrak{F}=(\mathfrak{X}_{\mu}(\mathbb{T}^{2})\ltimes C^{\infty}(\mathbb{T}^{2}))\oplus(\mathfrak{X}_{\mu}(\mathbb{T}^{2})\ltimes C^{\infty}(\mathbb{T}^{2}))\oplus(\mathfrak{X}_{\mu}(\mathbb{T}^{2})\ltimes C^{\infty}(\mathbb{T}^{2})).
\end{equation}
Hence, in the case of non-Euclidean geometry, such as axisymmetric flows on $\mathbb{S}^{3}$, the underlying Lie algebra has a more complicated structure (semidirect products) compared to the Euclidean case, such as translationally invariant flows in $\mathbb{R}^{3}$ (direct sum). The same effect is known to hold for the Euler equations. Indeed, in the $\partial_{z}$-invariant Euler equations, the horizontal dynamics is completely decoupled from the vertical dynamics, whereas in the axisymmetric Euler equations on $\mathbb{S}^{3}$ the coupling occurs via the semidirect product action.
\end{remark}
\section{Matrix discretizations for axisymmetric MHD}
\label{sec4}
In this section, we review the key concepts underlying Zeitlin's matrix equations of hydrodynamics and magnetohydrodynamics and extend it to the 3D axisymmetric equations \eqref{LP3D}. As of today, matrix approximations developed by V. Zeitlin in the works \cite{Ze1991,Zeit,Ze2005} remain the only consistent structure-preserving way of discretization of the Euler and MHD equations on the flat torus and the two-sphere. Namely, Zeitlin's equations are themselves Lie--Poisson equations for the truncated and modified finite-dimensional Lie--Poisson structure that converges to the infinite-dimensional one.
\subsection{Zeitlin's model for 2D MHD on the torus $\mathbb{T}^{2}$}
\label{Zeitlin2dMHDtorus}
We start with the overview of Zeitlin's model, which historically appeared first on the flat torus $\mathbb{T}^{2}$ in \cite{Ze1991}. 

The concept of obtaining a consistent truncation for \eqref{MHDvort} begins with an attempt to write equations \eqref{MHDvort} in the Fourier space. Let $\mathbf{x}\in\mathbb{T}^{2}$, let $\mathbf{m}=(m_{1},m_{2})\in\mathbb{Z}^{2}$ be the lattice, and $\mathbb{Z}_{0}^{2}=\mathbb{Z}^{2}\setminus(0,0)$. Then, decomposing the fields $\omega$ and $\psi$ in the basis $\exp(\mathrm{i}(\mathbf{m}\cdot\mathbf{x}))$ as
\begin{equation}
\begin{aligned}
\label{fourierontorus}
\omega(t,\mathbf{x})&=\sum\limits_{\mathbf{m}\in\mathbb{Z}^{2}_{0}}\omega_{\mathbf{m}}(t)\exp(\mathrm{i}(\mathbf{m}\cdot\mathbf{x})),\\\psi(t,\mathbf{x})&=-\sum\limits_{\mathbf{m}\in\mathbb{Z}^{2}_{0}}\omega_{\mathbf{m}}(t)|\mathbf{m}|^{-2}\exp(\mathrm{i}(\mathbf{m}\cdot\mathbf{x})),\\
\end{aligned}
\quad
\begin{aligned}
\theta(t,\mathbf{x})&=\sum\limits_{\mathbf{m}\in\mathbb{Z}^{2}_{0}}\theta_{\mathbf{m}}(t)\exp(\mathrm{i}(\mathbf{m}\cdot\mathbf{x})),\\j(t,\mathbf{x})&=-\sum\limits_{\mathbf{m}\in\mathbb{Z}^{2}_{0}}\theta_{\mathbf{m}}(t)|\mathbf{m}|^{2}\exp(\mathrm{i}(\mathbf{m}\cdot\mathbf{x})),
\end{aligned}
\end{equation}
and inserting them into \eqref{MHDvort}, we get the following dynamical system for the Fourier coefficients $\omega_{\mathbf{m}}(t)$ and $\theta_{\mathbf{m}}(t)$:
\begin{equation}
\label{MHDEqsFourier}
\left\{
\begin{aligned}
\dot\omega_{\mathbf{m}}&=\sum\limits_{\mathbf{k}\in\mathbb{Z}^{2}_{0}}|\mathbf{k}|^{-2}(\mathbf{k}\times\mathbf{m})\omega_{\mathbf{k}+\mathbf{m}}\omega_{-\mathbf{k}}+\sum\limits_{\mathbf{k}\in\mathbb{Z}^{2}_{0}}|\mathbf{k}|^{2}(\mathbf{k}\times\mathbf{m})\theta_{\mathbf{k}+\mathbf{m}}\theta_{-\mathbf{k}},\\
\dot\theta_{\mathbf{m}}&=\sum\limits_{\mathbf{k}\in\mathbb{Z}^{2}_{0}}|\mathbf{k}|^{-2}(\mathbf{k}\times\mathbf{m})\theta_{\mathbf{k}+\mathbf{m}}\omega_{-\mathbf{k}},
\end{aligned}
\right.
\end{equation}
where $\mathbf{k}\times\mathbf{m}=k_{1}m_{2}-k_{2}m_{1}$. Summation in \eqref{fourierontorus} and \eqref{MHDEqsFourier} runs over $\mathbb{Z}^{2}_{0}$, rather than $\mathbb{Z}^{2}$, as due to Stokes' theorem $\int_{\mathbb{T}^{2}}\omega=0$, which implies $\omega_{00}=0$. Furthermore, without loss of generality, one can assume $\int_{\mathbb{T}^{2}}\theta=0$, and therefore $\theta_{00}=0$.

Equations \eqref{MHDEqsFourier} are Hamiltonian with respect to the Fourier transformed Lie--Poisson bracket \eqref{RMHDbracket}:
\begin{equation}
\label{LiePoissonFourier}
\llbracket F,G\rrbracket=\frac{1}{4\pi^{2}}\sum_{\mathbf{k},\mathbf{m}\in\mathbb{Z}^{2}}\omega_{\mathbf{k}+\mathbf{m}}(\mathbf{k}\times\mathbf{m})\frac{\partial F}{\partial\omega_{\mathbf{m}}}\frac{\partial G}{\partial\omega_{\mathbf{k}}}+\frac{1}{4\pi^{2}}\sum_{\mathbf{k},\mathbf{m}\in\mathbb{Z}^{2}}\theta_{\mathbf{k}+\mathbf{m}}(\mathbf{k}\times\mathbf{m})\left(\frac{\partial F}{\partial\theta_{\mathbf{m}}}\frac{\partial G}{\partial\omega_{\mathbf{k}}}+\frac{\partial F}{\partial\omega_{\mathbf{m}}}\frac{\partial G}{\partial\theta_{\mathbf{k}}}\right),
\end{equation}
for $F=F(\omega_{\mathbf{m}})$, $G=G(\omega_{\mathbf{m}})$, $\mathbf{m}\in\mathbb{Z}^{2}$, and with the Hamiltonian being
\begin{equation}
\label{HamiltonianFourier}
H=\frac{1}{2}\sum\limits_{\mathbf{k}\in\mathbb{Z}^{2}_{0}}\mathbf{k}^{-2}\omega_{\mathbf{k}}\omega_{-\mathbf{k}}+\frac{1}{2}\sum\limits_{\mathbf{k}\in\mathbb{Z}^{2}_{0}}\mathbf{k}^{2}\theta_{\mathbf{k}}\theta_{-\mathbf{k}}.
\end{equation}

The next natural step is to truncate equations \eqref{MHDEqsFourier} by running the frequency indices $\mathbf{k}$ and $\mathbf{m}$ over the box $\mathcal{B}=([-N,N]\times[-N,N])\setminus(0,0)\ni\mathbf{k},\mathbf{m}$ rather than over $\mathbb{Z}^{2}$. However, this destroys the Hamiltonian structure of the infinite-dimensional equations \eqref{MHDEqsFourier}. In particular, all the Casimirs are lost, except for the quadratic ones, and the truncated bracket \eqref{LiePoissonFourier} fails to fulfill the Jacobi identity.

The way to overcome the mentioned difficulties of the described above ``naive truncation" was suggested by Zeitlin \cite{Ze1991}, based on the ideas of geometric quantization \cite{FZ1989,FFZ1989,Bord1991,Bord1994}. The key idea of Zeitlin's method is as follows. In any Euler--Arnold equation, there are two independent ingredients that define it: the structure constants of the underlying Lie algebra (or, the coadjoint operator) and the metric (or, equivalently, the inertia operator). Therefore, if one introduces appropriate approximations for them, determined by some truncation parameter $N$, converging to the original structure constants and the metric, as $N\to\infty$, then, the discrete analogue of \eqref{MHDvort} will still be an Euler--Arnold system for the modified finite-dimensional Lie--Poisson structure approximating the original infinite-dimensional one.

Let $\mathcal{L}$ be an abstract Lie algebra, i.e., a vector space with a Lie bracket $[\cdot,\cdot]$. If one fixes a basis $L_{1},L_{2},\ldots$ in $\mathcal{L}$, then the corresponding structure constants $C_{ij}^{k}$ are defined as
\begin{equation*}
[L_{i},L_{j}]=\sum_{k}C_{ij}^{k}L_{k}.
\end{equation*}

Let $\mathcal{L}^{*}$ be the dual of $\mathcal{L}$. The Lie--Poisson structure on $\mathcal{L}^{*}$ is given by
\begin{equation*}
\llbracket f,g\rrbracket=\sum_{i,j,k}C_{ij}^{k}\omega_{k}\frac{\delta f}{\delta\omega_{i}}\frac{\delta g}{\delta\omega_{j}},\quad f,g\in C^{\infty}(\mathcal{L}^{*}).
\end{equation*}

The inertia operator $I\colon\mathcal{L}\to\mathcal{L}^{*}$ is usually provided by the metric tensor $g^{lj}$ and defines the energy
\begin{equation*}
H=\frac{1}{2}\sum_{i,j}g^{ij}\omega_{i}\omega_{j}.
\end{equation*}

For the given $g^{lj}$ and $C_{ij}^{k}$, the general Euler--Arnold equation takes the form
\begin{equation}
\label{abstractEulerArnold}
\dot\omega_{i}=\sum_{j,k,l}g^{lj}C_{ij}^{k}\omega_{l}\omega_{k}.
\end{equation}

To put the Fourier transformed MHD equations \eqref{MHDEqsFourier} into the Euler--Arnold framework, we specify $\mathcal{L}$ to be the Lie algebra of smooth functions on $\mathbb{T}^{2}$ with the basis given by $\phi_{\mathbf{n}}=\exp(\mathrm{i}(\mathbf{n}\cdot\mathbf{x}))$ and the Lie algebra structure given by the Poisson bracket $\left\{\cdot,\cdot\right\}$, so that the structure constants are
\begin{equation}
\label{structureconstantsMHDT2}
\left\{\phi_{\mathbf{n}},\phi_{\mathbf{m}}\right\}=(\mathbf{m}\times\mathbf{n})\phi_{\mathbf{n}+\mathbf{m}}=\sum_{\mathbf{k}\in\mathbb{Z}^{2}}C_{\mathbf{n}\mathbf{m}}^{\mathbf{k}}\phi_{\mathbf{k}}\Longrightarrow C_{\mathbf{n}\mathbf{m}}^{\mathbf{k}}=(\mathbf{m}\times\mathbf{n})\delta(\mathbf{k}-\mathbf{n}-\mathbf{m}).
\end{equation}
The metric tensor consists of two blocks, $g_{1}^{\mathbf{k}\mathbf{l}}$ and $g_{2}^{\mathbf{k}\mathbf{l}}$ acting on the vorticity part and the magnetic part respectively:
\begin{equation}
\label{metrictensorMHD}
g_{1}^{\mathbf{k}\mathbf{l}}=\delta(\mathbf{k}+\mathbf{l})|\mathbf{k}|^{-2},\quad g_{2}^{\mathbf{k}\mathbf{l}}=\delta(\mathbf{k}+\mathbf{l})|\mathbf{k}|^{2},
\end{equation}
so that equations \eqref{MHDEqsFourier} take the form of the Euler--Arnold equation \eqref{abstractEulerArnold}:
\begin{equation}
\label{EulerArnoldMHDwithCandG}
\left\{
\begin{aligned}
\dot\omega_{\mathbf{m}}&=\sum\limits_{\mathbf{n},\mathbf{k},\mathbf{l}\in\mathbb{Z}^{2}_{0}}g_{1}^{\mathbf{l}\mathbf{k}}C_{\mathbf{m}\mathbf{k}}^{\mathbf{n}}\omega_{\mathbf{l}}\omega_{\mathbf{n}}+\sum\limits_{\mathbf{n},\mathbf{k},\mathbf{l}\in\mathbb{Z}^{2}_{0}}g_{2}^{\mathbf{l}\mathbf{k}}C_{\mathbf{m}\mathbf{k}}^{\mathbf{n}}\theta_{\mathbf{l}}\theta_{\mathbf{n}},\\
\dot\theta_{\mathbf{m}}&=\sum\limits_{\mathbf{n},\mathbf{k},\mathbf{l}\in\mathbb{Z}^{2}_{0}}g_{1}^{\mathbf{l}\mathbf{k}}C_{\mathbf{m}\mathbf{k}}^{\mathbf{n}}\omega_{\mathbf{l}}\theta_{\mathbf{n}}.
\end{aligned}
\right.
\end{equation}

Zeitlin's idea of truncating \eqref{EulerArnoldMHDwithCandG} suggested in \cite{Ze1991,Ze2005} is to keep a finite number of modes $(\omega_{\mathbf{m}},\theta_{\mathbf{m}})$ while simultaneously modifying the structure constants $C_{\mathbf{n}\mathbf{m}}^{\mathbf{k}}$. The metric tensor \eqref{metrictensorMHD} is suggested to be kept the same.

Let $N$ be some fixed odd number, and consider the box $\mathcal{B}_{0}=[-(N-1)/2,(N-1)/2]^{2}\setminus(0,0)$. Since the vorticity field $\omega(t,\mathbf{x})$ and the magnetic potential field $\theta(t,\mathbf{x})$ are real-valued, the complex Fourier coefficients $\omega_{\mathbf{m}}$ and $\theta_{\mathbf{m}}$ fulfill the \textit{reality condition}. Indeed, $\omega(t,\mathbf{x})=\bar{\omega}(t,\mathbf{x})$ implies $\omega_{-\mathbf{m}}=\bar{\omega}_{\mathbf{m}}$, and also $\theta(t,\mathbf{x})=\bar{\theta}(t,\mathbf{x})$ implies $\theta_{-\mathbf{m}}=\bar{\theta}_{\mathbf{m}}$. Consider a sequence of matrix Lie algebras $\mathfrak{sl}(N,\mathbb{C})$. There exists a basis in $\mathfrak{sl}(N,\mathbb{C})$ with structure constants approximating \eqref{structureconstantsMHDT2}, see, e.g., \cite{FFZ1989,FZ1989}.

Indeed, consider the following two matrices in $\mathfrak{sl}(N,\mathbb{C})$:
\begin{equation*}
F=\mathrm{diag}(1,\varepsilon,\ldots,\varepsilon^{N-1}),\quad H=\begin{pNiceMatrix}[left-margin]
0 & 1 & 0 & \ldots & 0  \\
0  & 0 &  1 & \ldots &  0  \\
\vdots  & \vdots  &   & \ddots   &   \\
0 &  0 &  \ldots  & \ldots  & 1  \\
1 &  0 &  \ldots  & \ldots  & 0  \\
\end{pNiceMatrix},
\end{equation*}
where $\varepsilon$ is a primitive root of $N$-th degree of unity, e.g. $\varepsilon=e^{\mathrm{i}4\pi/N}$.

Further, one defines $(N^{2}-1)$ matrices $J_{\mathbf{m}}$, $\mathbf{m}=(m_{1},m_{2})\in\mathcal{B}_{0}$:
\begin{equation*}
J_{\mathbf{m}}=\varepsilon^{m_{1}m_{2}/2}F^{m_{1}}H^{m_{2}},
\end{equation*}
spanning $\mathfrak{sl}(N,\mathbb{C})$ and satisfying the commutation relations
\begin{equation*}
[J_{\mathbf{n}},J_{\mathbf{m}}]=2\mathrm{i}\cdot\sin\left(\frac{2\pi(\mathbf{n}\times\mathbf{m})}{N}\right)J_{\mathbf{n}+\mathbf{m}},
\end{equation*}
and also $J^{\dag}_{\mathbf{m}}=J_{-\mathbf{m}}$. Rescaling $J_{\mathbf{m}}$ as $L_{\mathbf{m}}=(\mathrm{i}N/4\pi)J_{\mathbf{m}}$, we get a basis with structure constants approximating \eqref{structureconstantsMHDT2}, as $N\to+\infty$:
\begin{equation}
\label{sineStructureConst}
[L_{\mathbf{n}},L_{\mathbf{m}}]=\frac{N}{2\pi}\sin\left(\frac{2\pi(\mathbf{m}\times\mathbf{n})}{N}\right)L_{\mathbf{m}+\mathbf{n}}\Longrightarrow C_{\mathbf{n}\mathbf{m}}^{\mathbf{k}}=\frac{N}{2\pi}\sin\left(\frac{2\pi(\mathbf{m}\times\mathbf{n})}{N}\right)\delta(\mathbf{k}-\mathbf{n}-\mathbf{m}).
\end{equation}

Finally, taking equations \eqref{EulerArnoldMHDwithCandG} with the structure constants \eqref{sineStructureConst}, we get the \textit{sine-MHD equations}:
\begin{equation}
\label{sineMHD}
\left\{
\begin{aligned}
\dot\omega_{\mathbf{m}}&=\sum\limits_{\mathbf{k}\in\mathcal{B}_{0}}|\mathbf{k}|^{-2}\frac{N}{2\pi}\sin\left(\frac{2\pi(\mathbf{k}\times\mathbf{m})}{N}\right)\omega_{(\mathbf{k}+\mathbf{m})|_{\operatorname{mod}N}}\omega_{-\mathbf{k}}+\sum\limits_{\mathbf{k}\in\mathcal{B}_{0}}|\mathbf{k}|^{2}\frac{N}{2\pi}\sin\left(\frac{2\pi(\mathbf{k}\times\mathbf{m})}{N}\right)\theta_{(\mathbf{k}+\mathbf{m})|_{\operatorname{mod}N}}\theta_{-\mathbf{k}},\\
\dot\theta_{\mathbf{m}}&=\sum\limits_{\mathbf{k}\in\mathcal{B}_{0}}|\mathbf{k}|^{-2}\frac{N}{2\pi}\sin\left(\frac{2\pi(\mathbf{k}\times\mathbf{m})}{N}\right)\theta_{(\mathbf{k}+\mathbf{m})|_{\operatorname{mod}N}}\omega_{-\mathbf{k}}.
\end{aligned}
\right.
\end{equation}

Introducing the matrices
\begin{equation*}
\begin{split}
&W=\sum\limits_{\mathbf{m}\in\mathcal{B}_{0}}\omega_{\mathbf{m}}L_{\mathbf{m}},\quad P=\Delta_{N}^{-1}(W)=-\sum_{\mathbf{k}\in\mathcal{B}_{0}}\omega_{\mathbf{k}}|\mathbf{k}|^{-2}L_{\mathbf{k}},\quad {}\\&\Theta=\sum\limits_{\mathbf{m}\in\mathcal{B}_{0}}\theta_{\mathbf{m}}L_{\mathbf{m}},\quad J=\Delta_{N}(\Theta)=-\sum_{\mathbf{k}\in\mathcal{B}_{0}}\theta_{\mathbf{k}}|\mathbf{k}|^{2}L_{\mathbf{k}},
\end{split}
\end{equation*}
one can turn \eqref{sineMHD} into a compact matrix form:
\begin{equation}
\label{MHDZeitlinTorusZeit}
\dot W=[W,P]+[\Theta,J],\quad\dot\Theta=[\Theta,P],\quad W,\Theta,J,P\in\mathfrak{su}(N).
\end{equation}
The matrices $W,\Theta,J,P$ are indeed skew-hermitian, which follows from the reality condition:
\begin{equation*}
W^{\dag}=\sum\limits_{\mathbf{m}\in\mathcal{B}_{0}}\bar{\omega}_{\mathbf{m}}L^{\dag}_{\mathbf{m}}=\sum\limits_{\mathbf{m}\in\mathcal{B}_{0}}\omega_{-\mathbf{m}}\left(\frac{\mathrm{i}N}{4\pi}J_{\mathbf{m}}\right)^{\dag}=\sum\limits_{\mathbf{m}\in\mathcal{B}_{0}}\omega_{-\mathbf{m}}\left(-\frac{\mathrm{i}N}{4\pi}J_{-\mathbf{m}}\right)=-W.
\end{equation*}

The Hamiltonian nature of the sine-MHD equations becomes evident when written in the matrix form \eqref{MHDZeitlinTorusZeit}. Namely, we have the following result, see for details \cite{ModRoop}:
\begin{theorem}
The sine-MHD equations \eqref{MHDZeitlinTorusZeit} are a Lie--Poisson flow on the dual $\mathfrak{f}^{*}$ of the Lie algebra $\mathfrak{f}=\mathfrak{su}(N)\ltimes\mathfrak{su}(N)^{*}$. The quantities \begin{equation*}
\mathcal{C}_{f}^{N}=\mathrm{tr}(f(\Theta)),\quad \mathcal{I}_{m}^{N}=\mathrm{tr}(W\textsl{g}(\Theta)),
\end{equation*}
for arbitrary smooth functions $f$ and $\textsl{g}$, are Casimirs for \eqref{MHDZeitlinTorusZeit}, and the Hamiltonian
\begin{equation*}
H^{N}=-\frac{1}{2}\mathrm{tr}(WP)-\frac{1}{2}\mathrm{tr}(\Theta J).
\end{equation*} is conserved by the flow.
\end{theorem}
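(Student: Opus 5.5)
The plan is to mirror the continuous argument of Sect.~\ref{sec2}, replacing functions on $\mathbb{T}^{2}$ by matrices in $\mathfrak{su}(N)$ and the Poisson bracket by the matrix commutator. The first step is the Lie--Poisson structure. Equip $\mathfrak{su}(N)$ with the commutator and identify $\mathfrak{su}(N)^{*}\simeq\mathfrak{su}(N)$ through the pairing $\langle A,B\rangle=-\mathrm{tr}(AB)$. Under this identification $\mathrm{ad}^{*}_{P}W=[W,P]$ up to the sign convention, since the trace is $\mathrm{ad}$-invariant:
\[
\mathrm{tr}(W[P,X])=\mathrm{tr}([W,P]X).
\]
Now form the semidirect product $\mathfrak{f}=\mathfrak{su}(N)\ltimes\mathfrak{su}(N)^{*}$, with bracket $[(P,A),(Q,B)]=([P,Q],\mathrm{ad}^{*}_{Q}A-\mathrm{ad}^{*}_{P}B)$, exactly as in \eqref{Liealgebrastructureonsemidir}. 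Formula \eqref{coadjopsemidir} then gives
\[
\mathrm{ad}^{*}_{(v,a)}(w,b)=([w,v],\,\mathrm{ad}^{*}_{v}b-\mathrm{ad}^{*}_{w}a).
\]
Make the substitutions $w=\Theta$, $b=W$, $v=P$ and $a=J$. The inertia map is $\Theta\mapsto J=\Delta_{N}\Theta$ and $W\mapsto P=\Delta_{N}^{-1}W$, matching \eqref{EuArnsemidir}. With these, the Euler--Arnold equation reads
\[
\dot\Theta=[\Theta,P],\qquad \dot W=[W,P]-[J,\Theta]=[W,P]+[\Theta,J],
\]
which is \eqref{MHDZeitlinTorusZeit}. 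The Hamiltonian is $H^{N}=\tfrac12\langle W,P\rangle+\tfrac12\langle\Theta,J\rangle$, and its differential is $(P,J)$ because $\Delta_{N}$ is self-adjoint with respect to the trace pairing. Self-adjointness holds because $\Delta_{N}$ is diagonal in the basis $L_{\mathbf m}$, with eigenvalue $-|\mathbf m|^{2}$, and $\mathrm{tr}(L_{\mathbf m}L_{\mathbf k})\propto\delta_{\mathbf m+\mathbf k,0}$.

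The main obstacle is fixing the sign and normalization conventions consistently: the pairing, the sign of $\mathrm{ad}^{*}$, and the factor $\mathrm{i}N/4\pi$. I would verify these by checking one component against the structure constants \eqref{sineStructureConst}, confirming that the matrix equations reproduce \eqref{sineMHD}.

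Next come the Casimirs, which I would check by direct differentiation. For $\mathcal{C}^{N}_{f}$, the computation is
\[
\tfrac{d}{dt}\mathrm{tr} f(\Theta)=\mathrm{tr}(f'(\Theta)[\Theta,P])=0,
\]
by cyclicity and because $f'(\Theta)$ commutes with $\Theta$. The same computation works for any ad-structure $P$, so $\mathcal{C}^{N}_{f}$ is a Casimir and not merely conserved. For $\mathcal{I}^{N}_{g}$, note that $\Theta$ evolves by isospectral conjugation, so $\tfrac{d}{dt}g(\Theta)=[g(\Theta),P]$ (via a power series, or via $\Theta(t)=U\Theta_0U^{-1}$). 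This gives
\[
\tfrac{d}{dt}\mathrm{tr}(Wg(\Theta))=\mathrm{tr}([W,P]g(\Theta))+\mathrm{tr}([\Theta,J]g(\Theta))+\mathrm{tr}(W[g(\Theta),P]).
\]
The first and third terms cancel by cyclicity, and the middle term vanishes since $g(\Theta)$ commutes with $\Theta$. Again $P$ and $J$ are arbitrary here, which is exactly the Casimir property for the Lie--Poisson bracket on $\mathfrak{f}^{*}$.

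Finally, conservation of $H^{N}$ follows from the general Lie--Poisson fact $\dot H=\{H,H\}=0$. It can also be checked directly:
\[
\dot H^{N}=-\mathrm{tr}(\dot W P)-\mathrm{tr}(\dot\Theta J)=-\mathrm{tr}([\Theta,J]P)-\mathrm{tr}([\Theta,P]J)=0,
\]
by cyclicity, using the symmetry of $\Delta_{N}$ to combine the two halves of each quadratic term.
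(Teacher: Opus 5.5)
Your argument is correct and is essentially the route the paper takes. The paper defers this statement to \cite{ModRoop}, but it proves its axisymmetric matrix analogue in the same way: the semidirect-product coadjoint formula \eqref{coadjopsemidir} under a trace pairing with algebra variables $(P,J)$ reproduces \eqref{MHDZeitlinTorusZeit}, and the Casimirs follow from trace cyclicity along the Euler--Arnold flow with arbitrary algebra variables. The convention issue you flag is already settled by your own identity $\mathrm{tr}(W[P,X])=\mathrm{tr}([W,P]X)$, which gives $\mathrm{ad}^{*}_{P}W=[W,P]$ for either sign of the pairing, so the check against \eqref{sineStructureConst} is not needed for the statement as posed.
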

\subsection{Zeitlin's model for 3D axisymmetric MHD on $\mathbb{R}^{3}$}
Translationally invariant reduced equations \eqref{MHDaxisymmvortexT2} on $\mathbb{T}^{2}$ can also be discretized analogously to \eqref{MHDvort}. Namely, we observe that the matrix equations \eqref{MHDZeitlinTorusZeit} resemble the structure of their continuous counterparts \eqref{MHDvort}, and are obtained from \eqref{MHDvort} by a formal replacement of functions by matrices defined by their Fourier modes, and Poisson brackets by a matrix commutator, hence one may call this way of discretizing the Euler and MHD equations \textit{quantization}. Following the same strategy, we get the discrete version of \eqref{MHDaxisymmvortexT2}, the \textit{axisymmetric sine-MHD equations}:
\begin{equation}
\label{MHDaxisymmvortexT2matrix}
\left\{
\begin{aligned}
&\dot W=[W,P]+[\Xi,J], \\
&\dot R=[R,P]+[\Sigma,\Xi],\\
&\dot \Sigma=[\Sigma,P]+[R,\Xi],\\
&\dot\Xi=[\Xi,P],
\end{aligned}
\right.
\end{equation}
where the matrices are defined as follows:
\begin{equation*}
\begin{split}
&W=\sum\limits_{\mathbf{m}\in\mathcal{B}_{0}}\omega_{\mathbf{m}}L_{\mathbf{m}},\quad P=-\sum_{\mathbf{k}\in\mathcal{B}_{0}}\omega_{\mathbf{k}}|\mathbf{k}|^{-2}L_{\mathbf{k}},\quad \Xi=\sum\limits_{\mathbf{m}\in\mathcal{B}_{0}}\xi_{\mathbf{m}}L_{\mathbf{m}},\quad J=-\sum_{\mathbf{k}\in\mathcal{B}_{0}}\xi_{\mathbf{k}}|\mathbf{k}|^{2}L_{\mathbf{k}},{}\\& R=\sum\limits_{\mathbf{m}\in\mathcal{B}_{0}}\rho_{\mathbf{m}}L_{\mathbf{m}},\quad \Sigma=\sum\limits_{\mathbf{m}\in\mathcal{B}_{0}}\sigma_{\mathbf{m}}L_{\mathbf{m}},
\end{split}
\end{equation*}
with the respective Fourier coefficients $\omega_{\mathbf{m}},\xi_{\mathbf{m}},\rho_{\mathbf{m}},\sigma_{\mathbf{m}}$.

Equations \eqref{MHDaxisymmvortexT2matrix} possess a Lie--Poisson formulation on the dual of the matrix version of \eqref{LieAlgStructureforT2}. 
\begin{theorem}
The axisymmetric sine-MHD equations \eqref{MHDaxisymmvortexT2matrix} are a Lie--Poisson flow on the dual $\mathfrak{f}^{*}$ of the Lie algebra $\mathfrak{f}=(\mathfrak{su}(N)\ltimes\mathfrak{su}(N)^{*})\oplus(\mathfrak{su}(N)\ltimes\mathfrak{su}(N)^{*})\oplus(\mathfrak{su}(N)\ltimes\mathfrak{su}(N)^{*})$. The quantities
\begin{equation*}
\mathcal{C}_{f}^{N}=\mathrm{tr}(f(\Xi)),\quad J_{h}^{N}=\mathrm{tr}(R h(\Xi)),\quad K_{\textsl{g}}^{N}=\mathrm{tr}(\Sigma\textsl{g}(\Xi)),\quad \mathcal{I}_{k}^{N}=\mathrm{tr}(W k(\Xi)),
\end{equation*}
for arbitrary smooth functions $f$, $\textsl{g}$, $h$, $k$, are Casimirs for \eqref{MHDaxisymmvortexT2matrix}, and the Hamiltonian
\begin{equation*}
H^{N}=-\frac{1}{2}\mathrm{tr}(WP)-\frac{1}{2}\mathrm{tr}(\Theta J)+\mathrm{tr}(R^{2})+\mathrm{tr}(\Sigma^{2})
\end{equation*}
is conserved by the flow.
\end{theorem}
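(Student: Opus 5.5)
The plan is to treat the three assertions separately, in each case working directly with the matrix system \eqref{MHDaxisymmvortexT2matrix} and the $\mathrm{ad}$-invariant pairing $\langle A,B\rangle=-\mathrm{tr}(AB)$ on $\mathfrak{su}(N)$. Two facts will be used repeatedly: cyclicity of the trace, $\mathrm{tr}(A[B,C])=\mathrm{tr}([A,B]C)$, and the identity $[\Xi,f(\Xi)]=0$.

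\textbf{Lie--Poisson structure.} I will write down the matrix analogue of the bracket \eqref{RMHDbracketAxi}. Poisson brackets $\left\{\cdot,\cdot\right\}$ become commutators, and integrals become $-\mathrm{tr}$. Functional derivatives are taken with respect to the trace pairing, so that $\delta F/\delta W\in\mathfrak{su}(N)$, and similarly for $\Xi,R,\Sigma$. Each grouped term of \eqref{RMHDbracketAxi} is exactly the semidirect product bracket of the 2D sine-MHD theorem, with $\omega$ as the "$\mathfrak{su}(N)$" variable and one of $\xi,\rho,\sigma$ as the "$\mathfrak{su}(N)^{*}$" variable. I therefore expect the matrix bracket to be the Lie--Poisson bracket of the matrix Lie algebra built, as in \eqref{LieAlgStructureforT2}, from three copies of $\mathfrak{su}(N)\ltimes\mathfrak{su}(N)^{*}$ that share the vorticity factor. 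Jacobi then follows without computation, because every Lie--Poisson bracket of a Lie algebra satisfies it. Next I will take the Hamiltonian, compute its derivatives $\delta H/\delta W=P$, $\delta H/\delta\Xi=J$, $\delta H/\delta R\propto R$, $\delta H/\delta\Sigma\propto\Sigma$, and read off $\dot F=\llbracket F,H\rrbracket$ component by component, checking that it reproduces \eqref{MHDaxisymmvortexT2matrix}.

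\textbf{Casimirs.} Here I verify conservation directly, since the computation is short.
\begin{itemize}
\item The equation $\dot\Xi=[\Xi,P]$ is isospectral: $\Xi(t)=U(t)\Xi(0)U(t)^{*}$. Hence $\frac{\mathrm{d}}{\mathrm{d}t}h(\Xi)=[h(\Xi),P]$ for any smooth $h$, and $\mathrm{tr}f(\Xi)$ is constant.
\item For $J_h^N=\mathrm{tr}(Rh(\Xi))$,
\begin{equation*}
\dot J_h^N=\mathrm{tr}\bigl([R,P]h(\Xi)\bigr)+\mathrm{tr}\bigl([\Sigma,\Xi]h(\Xi)\bigr)+\mathrm{tr}\bigl(R[h(\Xi),P]\bigr).
\end{equation*}
The first and third terms cancel by cyclicity. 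The middle term equals $\mathrm{tr}(\Sigma[\Xi,h(\Xi)])=0$.
\item $K_{\textsl{g}}^N$ is handled identically.
\item For $\mathcal{I}_k^N$, the contributions of $[W,P]$ and of $\frac{\mathrm{d}}{\mathrm{d}t}k(\Xi)$ cancel in the same way. The coupling term gives $\mathrm{tr}([\Xi,J]k(\Xi))=\mathrm{tr}(J[k(\Xi),\Xi])=0$.
\end{itemize}
Since these quantities are conserved for the general Euler--Arnold flow, that is, for arbitrary $P,J$ and arbitrary velocities in the $R,\Sigma$ slots, the same computation with generic Hamiltonian derivatives shows that they Poisson-commute with every functional. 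So they are Casimirs and not merely integrals of this particular flow.

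\textbf{Energy.} First I will fix the normalisation of the Hamiltonian. The natural choice is
\begin{equation*}
H^N=-\tfrac12\mathrm{tr}(WP)-\tfrac12\mathrm{tr}(\Xi J)-\tfrac12\mathrm{tr}(R^2)-\tfrac12\mathrm{tr}(\Sigma^2).
\end{equation*}
This reads the $\Theta$ in the statement as $\Xi$. It is positive, since the matrices are skew-hermitian, and it matches the continuous Hamiltonian. The signs and constants in front of $\mathrm{tr}(R^2)$ and $\mathrm{tr}(\Sigma^2)$ do not affect conservation, provided the two coefficients are equal. Because $\Delta_N$ is self-adjoint for the trace pairing, $\dot H^N=-\mathrm{tr}(\dot WP)-\mathrm{tr}(\dot\Xi J)+c\,\mathrm{tr}(\dot RR+\dot\Sigma\Sigma)$ for a constant $c$.
\begin{itemize}
\item In the $(W,\Xi)$ part, $\mathrm{tr}([W,P]P)=0$, and $\mathrm{tr}([\Xi,J]P)+\mathrm{tr}([\Xi,P]J)=\mathrm{tr}(\Xi[J,P])+\mathrm{tr}(\Xi[P,J])=0$.
\item In the $(R,\Sigma)$ part, $\mathrm{tr}([R,P]R)=\mathrm{tr}([\Sigma,P]\Sigma)=0$, and $\mathrm{tr}([\Sigma,\Xi]R)+\mathrm{tr}([R,\Xi]\Sigma)=\mathrm{tr}(\Xi[R,\Sigma])+\mathrm{tr}(\Xi[\Sigma,R])=0$.
\end{itemize}

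The step I expect to be the main obstacle is the first one: pinning down precisely the Lie algebra, the pairing, and the sign conventions, so that the matrix bracket produces exactly the signs in \eqref{MHDaxisymmvortexT2matrix}. In particular, the coupling terms $[\Sigma,\Xi]$ and $[R,\Xi]$ must come out correctly, corresponding to the $\left\{\xi,-\sigma\right\}$ and $\left\{\xi,-\rho\right\}$ rewriting in the preceding remark. The shared $W$, $\Xi$ factor also needs care, so that the ``direct sum'' is interpreted consistently. I would first fix the Lie bracket on tuples $(W,\Xi,R,\Sigma)$ by transcribing \eqref{RMHDbracketAxi} term by term. I would then check the Jacobi identity for that bracket directly on the finite-dimensional space, as a fallback in case the abstract identification turns out to be off by a sign.
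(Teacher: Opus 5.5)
Your proposal has a genuine gap: $\mathcal{I}^N_k=\mathrm{tr}(Wk(\Xi))$ is conserved by this flow but is not a Casimir. Your ``general Euler--Arnold flow'' keeps $\dot W=[W,P]+[\Xi,J]$ and frees only the velocities in the coupling terms. But the bracket you propose, the transcription of \eqref{RMHDbracketAxi}, contains $\mathrm{tr}\bigl(R[F_R,G_W]\bigr)$ and $\mathrm{tr}\bigl(\Sigma[F_\Sigma,G_W]\bigr)$, which produce $[R,P]$ and $[\Sigma,P]$. By antisymmetry it also contains their partners $\mathrm{tr}\bigl(R[F_W,G_R]\bigr)$ and $\mathrm{tr}\bigl(\Sigma[F_W,G_\Sigma]\bigr)$. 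Writing $G_X=\delta G/\delta X$ for the pairing $\mathrm{tr}(AB)$, the general flow is
\[
\dot W=[G_W,W]+[G_\Xi,\Xi]+[G_R,R]+[G_\Sigma,\Sigma],\quad \dot\Xi=[G_W,\Xi],\quad \dot R=[G_W,R]+[G_\Sigma,\Xi],\quad \dot\Sigma=[G_W,\Sigma]+[G_R,\Xi].
\]
The terms $[G_R,R]+[G_\Sigma,\Sigma]$ vanish only for the energy, where $\delta H/\delta R\propto R$ and $\delta H/\delta\Sigma\propto\Sigma$. Hence
\[
\llbracket \mathcal{I}^N_k,G\rrbracket=\mathrm{tr}\bigl(G_R\,[R,k(\Xi)]\bigr)+\mathrm{tr}\bigl(G_\Sigma\,[\Sigma,k(\Xi)]\bigr).
\]
This is nonzero: take $k(\Xi)=\Xi$ at a point with $[R,\Xi]\neq 0$ and $G=\mathrm{tr}(R\Phi)$ with the fixed matrix $\Phi=[R,\Xi]$; the bracket is $\mathrm{tr}([R,\Xi]^2)<0$. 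So your computation shows only that $\mathrm{tr}(Wk(\Xi))$ is a first integral of this particular flow, and as a Casimir statement the claim is false. The same computation with \eqref{RMHDbracketAxi} shows $\int\omega k(\xi)$ is not a Casimir in the continuum either; the paper gives no argument beyond the splitting remark. The genuine Casimir needs a correction quadratic in $(R,\Sigma)$. For the bracket in which $-\tfrac12\mathrm{tr}(WP)-\tfrac12\mathrm{tr}(\Xi J)+\tfrac12\mathrm{tr}(R^2)+\tfrac12\mathrm{tr}(\Sigma^2)$ generates \eqref{MHDaxisymmvortexT2matrix}, one checks that $\mathrm{tr}(W\Xi^m)+\sum_{j=0}^{m-1}\mathrm{tr}(R\,\Xi^{j}\Sigma\,\Xi^{m-1-j})$ Poisson-commutes with everything. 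This is the matrix analogue of $\int(\omega k(\xi)+\rho\sigma k'(\xi))$. For $k(\xi)=\xi$ it is the reduced cross-helicity $\int u\cdot B=\int(\omega\xi+\rho\sigma)$, exactly parallel to the term $-\rho qk'(\xi)$ in \eqref{crosshelicityreduced}. Your arguments for $\mathcal{C}^N_f$, $J^N_h$, $K^N_g$ are fine, since they never use $\dot W$.

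The Lie--Poisson step also rests on a misidentified algebra. Three copies of $\mathfrak{su}(N)\ltimes\mathfrak{su}(N)^{*}$ sharing the vorticity factor, i.e.\ $\mathfrak{su}(N)$ acting on three abelian modules, cannot produce the couplings $[\Sigma,\Xi]$ and $[R,\Xi]$. These come from the cross term $\mathrm{tr}\bigl(\Xi([F_\Sigma,G_R]+[F_R,G_\Sigma])\bigr)$, which is not a semidirect-product term with $\omega$ as the Lie algebra variable, contrary to your claim. The directions dual to $R$ and $\Sigma$ bracket into the direction dual to $\Xi$. The algebra is $\mathfrak{su}(N)\otimes\mathbb{R}[\eta_1,\eta_2]/(\eta_1^2,\eta_2^2)$, with $W,\Xi,R,\Sigma$ dual to $1,\eta_1\eta_2,\eta_1,\eta_2$. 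Jacobi holds because the coefficient ring is commutative and associative, so your direct-check fallback will succeed. However, this algebra has dimension $4(N^2-1)$, while the direct sum in the statement has $6(N^2-1)$, so the stated identification can only be read loosely. Finally, your ``natural'' Hamiltonian is not positive. Since $\Delta_N^{-1}$ is negative definite, $\mathrm{tr}(W\Delta_N^{-1}W)\ge 0$ for skew-hermitian $W$, so $-\tfrac12\mathrm{tr}(WP)\le 0$. Its $(W,\Xi)$ and $(R,\Sigma)$ blocks also have the same sign, whereas the transcribed continuous energy gives them opposite signs. This is harmless for conservation, since each block is conserved separately, as your computation shows. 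With the transcribed bracket, though, it gives the wrong relative sign between $[R,P]$ and $[\Sigma,\Xi]$ in $\dot R$, and likewise in $\dot\Sigma$. Repairing that by flipping the cross term flips the sign of the Casimir correction above.
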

\subsection{Zeitlin's model for 2D MHD on the sphere $\mathbb{S}^{2}$}
\label{Zeitlin2dMHDsphere}

For the two-sphere $\mathbb{S}^{2}$, quantization was developed in the works of Hoppe and Yau \cite{Hopp,Hopp1,HoppYau}. The starting point of the quantization theory for the two-sphere in the quantized version $\Delta_{N}\colon\mathfrak{u}(N)\to\mathfrak{u}(N)$ of the Laplace--Beltrami operator $\Delta$. The operator $\Delta_{N}$ is called the \textit{Hoppe--Yau Laplacian}. Its explicit form is given in \cite{HoppYau}:
\begin{equation*}
\Delta_{N}(\cdot)=\frac{1}{\hbar^2}\left([X_{1}^{N},[X_{1}^{N},\cdot]]+[X_{2}^{N},[X_{2}^{N},\cdot]]+[X_{3}^{N},[X_{3}^{N},\cdot]]\right),\quad\hbar=\frac{2}{\sqrt{N^{2}-1}},
\end{equation*}
where $X_{a}^{N}$, $a=1,2,3$ are generators of a unitary irreducible ``spin $(N-1)/2$" representation of $\mathfrak{so}(3)$, i.e.,
\begin{equation*}
\frac{1}{\hbar}[X_{a}^{N},X_{b}^{N}]=\mathrm{i}\varepsilon_{abc} X_{c},\quad (X_{1}^{N})^{2}+(X_{2}^{N})^{2}+(X_{3}^{N})^{2}=I,
\end{equation*}
where $\varepsilon_{abc}$ is the Levi--Civita symbol, $I$ is the identity matrix, and summation is assumed over repeated indices.

Hoppe and Yau also showed that the spectrum of the quantized Laplacian $\Delta_{N}$ coincides with the spectrum of the continuous Laplacian $\Delta$:
\begin{equation*}
\Delta_{N}T_{lm}^{N}=-l(l+1)T_{lm}^{N},
\end{equation*}
where the matrices $T_{lm}^{N}$, the eigenmatrices of the Hoppe--Yau Laplacian $\Delta_{N}$, are called \textit{matrix harmonics} (in analogy with spherical harmonics, the eigenfunctions of the Laplace operator $\Delta$) and are explicitly given by
\begin{equation}
\label{definitionTLMwigner}
\left(T^{N}_{lm}\right)_{m_{1}m_{2}}=(-1)^{[(N-1)/2]-m_{1}}\sqrt{2l+1}
\begin{pmatrix}
\frac{N-1}{2}&l&\frac{N-1}{2}\\
-m_{1} & m &m_{2}
\end{pmatrix},
\end{equation}
with $(:::)$ being the Wigner 3j-symbol. 

The projection $p_{N}\colon C^{\infty}(\mathbb{S}^{2})\to\mathfrak{u}(N)$ can be defined as follows for $\omega\in C^{\infty}(\mathbb{S}^{2})$:
\begin{equation*}
p_{N}(\omega)=\sum\limits_{l=0}^{N-1}\sum\limits_{m=-l}^{l}\mathrm{i}\omega^{lm}T_{lm}^{N}=W,
\end{equation*}
and given a matrix $W\in\mathfrak{u}(N)$, one reconstructs the first $N^{2}$ Fourier coefficients $\omega^{lm}$ of $\omega\in C^{\infty}(\mathbb{S}^{2})$, using the Frobenius orthogonality of $T_{lm}^{N}$.

Convergence results with the rate $O(1/N)$ for the matrix approximations, i.e. for the functions and Poisson brackets, have been established by Charles and Polterovich in \cite{ChPolt2017}:
\begin{theorem}[Charles, Polterovich]
For every $f,g\in C^{3}(\mathbb{S}^{2})$ there exist $c_{0}, c_{1}>0$, such that
\vspace{0.3cm}
\begin{itemize}
\item 
$
\displaystyle\|f\|_{L^{\infty}}-c_{0}\|f\|_{C^{2}}\hbar\le\|p_{N}(f)\|_{L_{N}^{\infty}}\le\|f\|_{L^{\infty}}
$
\end{itemize}
\begin{itemize}
\item
$
\displaystyle\left\|\frac{1}{\hbar}[p_{N}(f),p_{N}(g)]-p_{N}\left(\left\{f,g\right\}\right)\right\|_{L_{N}^{\infty}}\le c_{1}\hbar 
\left(\|f\|_{C^{1}}\|g\|_{C^{3}}+\|f\|_{C^{2}}\|g\|_{C^{2}}+\|f\|_{C^{3}}\|g\|_{C^{1}}\right),
$
\end{itemize}
where \begin{equation*}
\hbar=\frac{2}{\sqrt{N^{2}-1}},\quad\|f\|_{C^{k}}=\max\limits_{i\le k}\,\sup|\nabla^{i} f|,\quad \|A\|_{L_{N}^{\infty}}=\sup\limits_{\|x\|=1}\|Ax\|,\quad A\in\mathfrak{u}(N).
\end{equation*}
\end{theorem}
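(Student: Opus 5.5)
This is the theorem of Charles and Polterovich \cite{ChPolt2017}, quoted without proof. If I had to prove it, I would go through Berezin--Toeplitz quantization of $\mathbb{S}^{2}\simeq\mathbb{CP}^{1}$. The plan rests on one identification: $p_{N}$ agrees (up to the factor $\mathrm{i}$ and normalization) with the Toeplitz map $T_{N}\colon f\mapsto \Pi_{N}M_{f}\Pi_{N}$. Here $\Pi_{N}$ is the orthogonal projection from $L^{2}$ sections of $\mathcal{O}(N-1)$ onto its $N$-dimensional space of holomorphic sections, and $M_{f}$ is multiplication by $f$.

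\textbf{Step 1: identify $p_{N}$ with $T_{N}$.} The space of holomorphic sections carries the spin-$(N-1)/2$ representation of $\mathrm{SU}(2)$, and $T_{N}$ is $\mathrm{SU}(2)$-equivariant. By Schur's lemma, $T_{N}$ sends the spherical harmonic $Y_{lm}$ to $c_{l,N}T^{N}_{lm}$ for some scalar $c_{l,N}$. The $T^{N}_{lm}$ are the matrix harmonics \eqref{definitionTLMwigner}, which are themselves fixed by equivariance and by the Wigner $3j$ structure.

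I would compute $c_{l,N}$ explicitly from the Gaunt-type integral of three monopole harmonics. If $c_{l,N}\equiv 1$ under the paper's normalization of $p_{N}$, then $p_{N}=\mathrm{i}T_{N}$ exactly. Otherwise I would use the variant of the theorem in which $p_{N}$ is literally defined as $T_{N}$. I would then show that the difference $\sum (c_{l,N}-1)f^{lm}T^{N}_{lm}$ is $O(\hbar\|f\|_{C^{2}})$, using $c_{l,N}=1-O(l(l+1)\hbar)$. This difference only affects the error terms.

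\textbf{Step 2: the norm estimate.} The upper bound $\|T_{N}(f)\|\le\|f\|_{L^{\infty}}$ is immediate, since $T_{N}(f)$ is a compression of $M_{f}$.

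For the lower bound I would use normalized coherent states $e_{x}$ peaked at $x\in\mathbb{S}^{2}$. The quantity $\langle e_{x},T_{N}(f)e_{x}\rangle$ is the Berezin transform $\mathcal{B}_{N}f(x)=\int f\,|\langle e_{x},e_{y}\rangle|^{2}\,\mathrm{d}y$. Its kernel is an explicit probability density of width $O(\sqrt{\hbar})$ and has vanishing first moment by rotational symmetry about $x$. A second-order Taylor expansion of $f$ therefore gives $|\mathcal{B}_{N}f(x)-f(x)|\le c_{0}\hbar\|f\|_{C^{2}}$. Choosing $x$ where $|f|$ is maximal yields $\|T_{N}(f)\|\ge\|f\|_{L^{\infty}}-c_{0}\hbar\|f\|_{C^{2}}$.

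\textbf{Step 3: the commutator estimate.} I would prove a first-order product rule with explicit remainder:
\begin{equation*}
T_{N}(f)T_{N}(g)=T_{N}(fg)+\hbar\,T_{N}(C_{1}(f,g))+R_{N}(f,g),
\end{equation*}
where $C_{1}(f,g)$ is, up to a constant, $\langle\partial f,\bar\partial g\rangle$, and $\|R_{N}\|\le c\hbar^{2}(\|f\|_{C^{1}}\|g\|_{C^{3}}+\|f\|_{C^{2}}\|g\|_{C^{2}}+\|f\|_{C^{3}}\|g\|_{C^{1}})$. Antisymmetrizing, the $T_{N}(fg)$ terms cancel, and $C_{1}(f,g)-C_{1}(g,f)$ is proportional to $\mathrm{i}\{f,g\}$. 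Dividing by $\hbar$ gives the second estimate. Any $O(\hbar)$ discrepancy from Step 1 is absorbed using Step 2 applied to $\{f,g\}$, whose $C^{2}$ norm is bounded by the stated products of norms.

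To get the remainder I would write the kernel of $\Pi_{N}M_{f}\Pi_{N}M_{g}\Pi_{N}$ and Taylor-expand $g$ to third order about the diagonal against the Gaussian-like Bergman kernel. On $\mathbb{CP}^{1}$ that kernel is explicit, namely $|\langle e_{x},e_{y}\rangle|^{2}=\cos^{2(N-1)}(d(x,y)/2)$. The two-sided form of the remainder would come from splitting the expansion symmetrically between $f$ and $g$.

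\textbf{Main obstacle.} I expect the hardest part to be Step 3: controlling the remainder in operator norm with only $C^{3}$ regularity and the precise bilinear norm structure, rather than through an asymptotic expansion requiring $C^{\infty}$. My first attempt would use the Schur test on the remainder kernel, bounding the moments of the explicit kernel $\cos^{2(N-1)}(d/2)$ by powers of $\hbar$.
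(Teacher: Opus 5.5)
The paper does not prove this statement; it only quotes \cite{ChPolt2017}. Your Steps~2--3 follow the standard Berezin--Toeplitz route: compression for the upper bound, coherent states and the Berezin transform for the lower bound, and a Bergman-kernel Taylor expansion with a Schur test for the commutator. Granting the remainder analysis you flag, they give the statement for the Toeplitz map $T_N$, which is what Charles and Polterovich actually prove. The genuine gap is Step~1, and your fallback there cannot work.

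The paper's $p_N$ sends the $L^2$-orthonormal $Y_{lm}$ to the Frobenius-orthonormal $T^N_{lm}$. Once rescaled by $\sqrt{N/4\pi}$ so that $p_N(1)=\mathrm{i}I$, it is an isometric, Stratonovich--Weyl type quantization. (As displayed, $\|p_N(1)\|_{L^\infty_N}=\sqrt{4\pi/N}$, so the lower bound already fails for $f=1$.) The Toeplitz map is not isometric. Since $T_N$ is adjoint to the covariant-symbol map and $p_N$ is isometric, the $c_{l,N}^2$ are the Berezin-transform eigenvalues $\prod_{k=1}^{l}\tfrac{N-k}{N+k}$. Hence $c_{l,N}\neq1$ for every $l\ge1$.

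Your claim that ``this difference only affects the error terms'' then fails for the upper bound $\|p_N(f)\|_{L^\infty_N}\le\|f\|_{L^\infty}$, which has no error term. That bound is in fact false for this $p_N$. Take $f=1-x_3^2$ and odd $N\ge3$. Isometry and equivariance force $p_N(f)=\mathrm{i}\bigl(\tfrac23 I-\kappa(J_3^2-\tfrac{J}{3}I)\bigr)$, with $J=\tfrac{N^2-1}{4}$ and $\kappa=2/\sqrt{4J^2-3J}$ (sign chosen so that $c_{2,N}>0$, as your Step~1 assumes). Its eigenvalue at $m=0$ is $\tfrac23+\tfrac{2}{3\sqrt{4-3/J}}>1=\|f\|_{L^\infty}$; for $N=3$ this is about $1.088$, while $\|T_N(f)\|=4/5$. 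So for $p_N$ your plan can give at most $\|p_N(f)\|_{L^\infty_N}\le\|f\|_{L^\infty}+C\hbar\|f\|_{C^2}$.

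Even that weakened transfer is not justified as written, and the commutator estimate does not transfer at all. A coefficient bound $|1-c_{l,N}|\le C\,l(l+1)\hbar$ does not control the operator norm of $\sum_{l,m}(1-c_{l,N})f^{lm}T^N_{lm}$ by $\hbar\|f\|_{C^2}$. You also cannot invert the relation, $p_N(f)=\mathrm{i}T_N\bigl(\sum_{l,m}c_{l,N}^{-1}f^{lm}Y_{lm}\bigr)$, and use $\|T_N(g)\|\le\|g\|_{L^\infty}$, because $c_{l,N}$ is exponentially small for $l$ near $N$.

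More seriously, write $E=p_N-\mathrm{i}T_N$. The discrepancy enters the commutator as $\hbar^{-1}\bigl([E(f),p_N(g)]+[\mathrm{i}T_N(f),E(g)]\bigr)$. An $O(\hbar)$ bound on $E$ makes this $O(1)$, not $O(\hbar)$. Applying Step~2 to $\{f,g\}$ only handles the separate term $p_N(\{f,g\})-\mathrm{i}T_N(\{f,g\})$.

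Closing the gap would require the first-order form of $E(f)$, namely a multiple of $\hbar\,T_N(\Delta f)$ up to $O(\hbar^2)$, together with a commutator bound for $[T_N(\Delta f),T_N(g)]$. But the $O(\hbar^2)$ remainder involves $\Delta^2 f$, which uses more derivatives than the $C^1\times C^3$, $C^2\times C^2$, $C^3\times C^1$ structure allows. The consistent fix is to state and prove the theorem for $T_N$, as in \cite{ChPolt2017}, rather than verbatim for the matrix-harmonic map $p_N$.
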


We have now all the ingredients to discretize the reduced MHD equations \eqref{MHDvort} on the two-dimensional sphere $\mathbb{S}^{2}$. We well refer to the equations below as \textit{MHD--Zeitlin equations}:
\begin{equation}
\label{qMHD}
\left\{
\begin{aligned}
\displaystyle\dot W&=\frac{1}{\hbar}[W,\Delta^{-1}_{N}W]+\frac{1}{\hbar}[\Theta,\Delta_{N}\Theta], \\
\displaystyle\dot\Theta&=\frac{1}{\hbar}[\Theta,\Delta^{-1}_{N}W].
\end{aligned}
\right.
\end{equation}

The system of equations \eqref{qMHD} is itself a Lie--Poisson system on the dual of a semidirect product matrix Lie algebra, as shown in \cite{ModRoop}:
\begin{theorem}
System \eqref{qMHD} is a Lie--Poisson flow on the dual $\mathfrak{f}^{*}$ of the Lie algebra $\mathfrak{f}=\mathfrak{su}(N)\ltimes\mathfrak{su}(N)^{*}$, with the Hamiltonian
\begin{equation}
\label{hamiltforqMHD2d}
H^{N}(W,\Theta)=-\frac{2\pi}{N}\left(\mathrm{tr}(W\Delta^{-1}_{N}W)+\mathrm{tr}(\Theta\Delta_{N}\Theta)\right).
\end{equation}
The Casimir invariants for \eqref{qMHD} are
\begin{equation}
\label{casimirsforqMHD2d}
\mathcal{C}^N_{f}(\Theta)=\frac{4\pi}{N}\mathrm{tr}\left(f(\Theta)\right),\quad \mathcal{I}^N_{\textsl{g}}(W,\Theta)=\frac{4\pi}{N}\mathrm{tr}(W\textsl{g}(\Theta)),
\end{equation}
for arbitrary smooth functions $f$ and $\textsl{g}$. 
\end{theorem}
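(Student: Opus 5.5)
We establish the theorem in three steps: identify the Lie--Poisson structure, recover \eqref{qMHD} from it, and verify the conserved quantities. We mirror the continuous construction of Section~\ref{sec2} with the Lie algebra $\mathfrak{X}_{\mu}$ replaced by $\mathfrak{su}(N)$ equipped with the bracket $[A,B]_{\hbar}=\frac{1}{\hbar}[A,B]$. The Frobenius pairing $\langle A,B\rangle=-\frac{4\pi}{N}\mathrm{tr}(AB)$ identifies $\mathfrak{su}(N)^{*}\simeq\mathfrak{su}(N)$. Under this identification, $\mathrm{ad}_{A}^{*}B=\frac{1}{\hbar}[B,A]$, by ad-invariance of the trace: $\mathrm{tr}(B[A,C])=\mathrm{tr}([B,A]C)$. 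On the semidirect product $\mathfrak{f}=\mathfrak{su}(N)\ltimes\mathfrak{su}(N)^{*}$, with bracket modeled on \eqref{Liealgebrastructureonsemidir}, the coadjoint operator \eqref{coadjopsemidir} becomes
\begin{equation*}
\mathrm{ad}^{*}_{(P,A)}(\Theta,W)=\Bigl(\tfrac{1}{\hbar}[\Theta,P],\ \tfrac{1}{\hbar}[W,P]+\tfrac{1}{\hbar}[\Theta,A]\Bigr),
\end{equation*}
up to sign conventions, which we fix by comparison with \eqref{EuArnsemidir}. Jacobi holds for $\mathfrak{f}$ automatically, since it is a semidirect product of a Lie algebra with the dual of its adjoint representation.

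Next, we compute the variational derivatives of $H^{N}$ in \eqref{hamiltforqMHD2d} with respect to the pairing. Self-adjointness of $\Delta_{N}$ (and hence of $\Delta_{N}^{-1}$) in the Frobenius inner product follows from its double-commutator form, since $X_{a}^{N}$ are Hermitian. This gives $\delta H/\delta W=\Delta_{N}^{-1}W=P$ and $\delta H/\delta\Theta=\Delta_{N}\Theta$, up to the normalization constant $-\frac{2\pi}{N}$ versus $-\frac{4\pi}{N}\cdot\frac12$, which we choose consistently. Substituting $(P,A)=(\Delta_{N}^{-1}W,\Delta_{N}\Theta)$ into the Euler--Arnold equation $\frac{\mathrm{d}}{\mathrm{d}t}(\Theta,W)=\mathrm{ad}^{*}_{(\delta H)}(\Theta,W)$ reproduces exactly \eqref{qMHD}. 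Equivalently, the explicit Lie--Poisson bracket is the matrix analogue of \eqref{RMHDbracket}, namely
\begin{equation*}
\llbracket F,G\rrbracket=\tfrac{4\pi}{N\hbar}\mathrm{tr}\Bigl(W\bigl[F_{W},G_{W}\bigr]+\Theta\bigl([F_{\Theta},G_{W}]+[F_{W},G_{\Theta}]\bigr)\Bigr),
\end{equation*}
and one checks $\dot F=\llbracket F,H^{N}\rrbracket$ directly.

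For the Casimirs, it suffices to show $\llbracket\mathcal{C},G\rrbracket=0$ for all $G$. For $\mathcal{C}^{N}_{f}$ we have $F_{W}=0$ and $F_{\Theta}\propto f'(\Theta)$, which commutes with $\Theta$. The only surviving term is $\mathrm{tr}(\Theta[f'(\Theta),G_{W}])=\mathrm{tr}([\Theta,f'(\Theta)]G_{W})=0$. For $\mathcal{I}^{N}_{\textsl{g}}$ we have $F_{W}=\textsl{g}(\Theta)$, and $F_{\Theta}$ is the derivative of $\Theta\mapsto\mathrm{tr}(W\textsl{g}(\Theta))$, which is the main obstacle since $W$ and $\Theta$ do not commute. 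I would handle it via Daleckii--Krein, or more simply by working directly with the flow. Under \eqref{qMHD}, $\Theta$ evolves by an isospectral commutator, so $\frac{\mathrm{d}}{\mathrm{d}t}\textsl{g}(\Theta)=\frac1\hbar[\textsl{g}(\Theta),P]$; this follows by polynomial approximation, since the commutator is a derivation. Then
\begin{equation*}
\tfrac{\mathrm{d}}{\mathrm{d}t}\mathrm{tr}(W\textsl{g}(\Theta))=\tfrac1\hbar\mathrm{tr}\bigl(([W,P]+[\Theta,\Delta_{N}\Theta])\textsl{g}(\Theta)+W[\textsl{g}(\Theta),P]\bigr).
\end{equation*}
The $P$-terms cancel by cyclicity, and $\mathrm{tr}([\Theta,J]\textsl{g}(\Theta))=\mathrm{tr}(J[\textsl{g}(\Theta),\Theta])=0$. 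To get the genuine Casimir property rather than mere conservation, I would run the same computation with general $(P,A)$ replacing $(\Delta_N^{-1}W,\Delta_N\Theta)$, as in the proof for $\mathcal{I}_{k}$ above: nothing in the cancellation used the specific inertia operator. Finally, conservation of $H^{N}$ follows from skew-symmetry of the bracket, or directly: $\mathrm{tr}(P[W,P])=0$, and the cross terms $\mathrm{tr}(P[\Theta,J])+\mathrm{tr}(J[\Theta,P])=0$ cancel by cyclicity.
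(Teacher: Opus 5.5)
Your proposal is correct and follows essentially the same route as the paper, which imports this result from \cite{ModRoop} and uses the same scheme for the 3D matrix case. Your coadjoint operator coincides with \eqref{coadjointoperatorforabelinext} with $(U,V)=(\Theta,W)$ and $(P,B)=(\Delta_N^{-1}W,\Delta_N\Theta)$, and checking the Casimirs along the Euler--Arnold flow with arbitrary algebra variables is exactly how the paper verifies its Casimirs. Your hedges (``up to sign conventions'', ``up to the normalization constant'') are unnecessary: with the pairing $-\frac{4\pi}{N}\mathrm{tr}(AB)$ one has exactly $H^N=\frac12\langle W,\Delta_N^{-1}W\rangle+\frac12\langle\Theta,\Delta_N\Theta\rangle$, and the signs in your $\mathrm{ad}^*$ are already right.
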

Vanishing trace of the matrices $W,\Theta$ follows from the vanishing mean value of the fields $\omega,\theta$, which itself follows from the Stokes theorem. Convergence of the Casimirs \eqref{casimirsforqMHD2d} and the Hamiltonian \eqref{hamiltforqMHD2d} to the corresponding continuous counterparts \eqref{cascont} as $N\to\infty$ is also established in \cite{ModRoop}.
\begin{remark}
We observe that Zeitlin's models for reduced MHD equations \eqref{MHDvort} both on the torus \eqref{MHDZeitlinTorusZeit} and on the sphere \eqref{qMHD} have the same form and evolve on the same matrix Lie algebra $\mathfrak{f}=\mathfrak{su}(N)\ltimes\mathfrak{su}(N)^{*}$.
\end{remark}
\subsection{Zeitlin's model for 3D axisymmetric MHD on $\mathbb{S}^{3}$} Formally replacing functions with matrices, as well as Poisson brackets with scaled matrix commutators $[\cdot,\cdot]_{N}=\frac{1}{\hbar}[\cdot,\cdot]$ in \eqref{LP3D}, we get the following matrix form of equations \eqref{LP3D}:
\begin{equation}
\label{LP3Dquant}
\left\{
\begin{aligned}
\displaystyle\dot W&=\frac{1}{\hbar}[W,\Psi]+\frac{1}{\hbar}[\Xi,J]+\frac{2}{\hbar}[Q,\Psi]+\frac{2}{\hbar}[\Xi,P], \\
\displaystyle\dot P&=\frac{1}{\hbar}[P,\Psi]+\frac{1}{\hbar}[\Xi,Q]+\frac{2}{\hbar}[\Psi,\Xi],\\
\displaystyle\dot Q&=\frac{1}{\hbar}[Q,\Psi]+\frac{1}{\hbar}[\Xi,P],\\
\displaystyle\dot\Xi&=\frac{1}{\hbar}[\Xi,\Psi],
\end{aligned}
\right.
\end{equation}
where $\Psi,W=\Delta_{N}\Psi,J=\Delta_{N}\Xi, Q,P,\Xi\in\mathfrak{su}(N)$, and $\Delta_{N}\colon\mathfrak{su}(N)\to\mathfrak{su}(N)$ is the Hoppe--Yau Laplacian. We will refer to equations \eqref{LP3Dquant} as \textit{axisymmetric MHD--Zeitlin equations}.  

To get the Euler--Arnold formulation of \eqref{LP3Dquant}, we first recall the semidirect product extension $\mathfrak{f}=\mathfrak{su}(N)\ltimes\mathfrak{su}(N)^{*}$ with the Lie algebra structure (we will use the notation $[\cdot,\cdot]_{N}$ for the scaled bracket), analogously to \eqref{Liealgebrastructureonsemidir}:
\begin{equation}
\label{adonmatrices}
[(P,B),(U,V)]=\mathrm{ad}_{(P,B)}(U,V)=([P,U]_{N},\mathrm{ad}^{*}_{U}B-\mathrm{ad}^{*}_{P}V),
\end{equation}
where $P,U\in\mathfrak{su}(N)$, $B,V\in\mathfrak{su}(N)^{*}$.

With the identification of $\mathfrak{f}^{*}$ with $\mathfrak{f}$ via the pairing
\begin{equation*}
\langle(P_{1},B_{1}),(P_{2},B_{2})\rangle=\langle B_{2},P_{1}\rangle_{F}+\langle B_{1},P_{2}\rangle_{F}=\mathrm{tr}(B_{2}^{\dag},P_{1})+\mathrm{tr}(B_{1}^{\dag},P_{2}),
\end{equation*}
where $\langle\cdot,\cdot\rangle_{F}$ is the Frobenius inner product inducing the pairing between $\mathfrak{su}(N)$ and $\mathfrak{su}(N)^{*}$, we get the coadjoint operator on $\mathfrak{f}^{*}$:
\begin{equation}
\begin{split}
\label{coadjointoperatorforabelinext}
\mathrm{ad}^{*}_{(P,B)}(U,V)&=\left([U,P]_{N},\mathrm{ad}^{*}_{P}V-\mathrm{ad}^{*}_{U}B\right){}\\&=([U,P]_{N},[P^{\dag},V]_{N}-[U^{\dag},B]_{N})=([U,P]_{N},[V,P]_{N}+[U,B]_{N}).
\end{split}
\end{equation}

Next, we construct the magnetic extension $\mathfrak{F}$ of $\mathfrak{f}$ in the standard way, $\mathfrak{F}=\mathfrak{f}\ltimes\mathfrak{f}^{*}=(\mathfrak{su}(N)\ltimes\mathfrak{su}(N)^{*})\ltimes(\mathfrak{su}(N)\ltimes\mathfrak{su}(N)^{*})^{*}$. Using formula \eqref{coadjopsemidir} for the coadjoint operator on a semidirect product, we get
\begin{equation}
\label{coadjopforextensonextension}
\mathrm{ad}^{*}_{((\alpha,\beta),(\gamma,\delta))}((\Xi,-Q),(P,W))=\left([(\Xi,-Q),(\alpha,\beta)],\mathrm{ad}^{*}_{(\alpha,\beta)}(P,W)-\mathrm{ad}^{*}_{(\Xi,-Q)}(\gamma,\delta)\right),
\end{equation}
for $((\alpha,\beta),(\gamma,\delta))\in\mathfrak{F}$, and where the bracket $[\cdot,\cdot]$ is given by \eqref{adonmatrices}, and $\mathrm{ad}^{*}$ is defined in \eqref{coadjointoperatorforabelinext}.

Using \eqref{adonmatrices}, \eqref{coadjointoperatorforabelinext}, and \eqref{coadjopforextensonextension}, the Euler--Arnold equation for the pair $(\Xi,Q)$ is
\begin{equation}
\label{matrixequationforPsigma}
\left\{
\begin{aligned}
&\dot Q=[Q,\alpha]_{N}+[\beta,\Xi]_{N}, \\
&\dot\Xi=[\Xi,\alpha]_{N}.
\end{aligned}
\right.
\end{equation}
For the pair $(P,W)$, the equations are
\begin{equation*}
\frac{\mathrm{d}}{\mathrm{d}t}(P,W)=\mathrm{ad}^{*}_{(\alpha,\beta)}(P,W)-\mathrm{ad}^{*}_{(\Xi,-Q)}(\gamma,\delta),
\end{equation*}
and using \eqref{coadjointoperatorforabelinext}, we finally get
\begin{equation}
\label{rhoxieulerarnoldmatrix}
\left\{
\begin{aligned}
\dot W&=[W,\alpha]_{N}+[\Xi,\delta]_{N}+[P,\beta]_{N}+[\gamma,Q]_{N}, \\
\dot P&=[P,\alpha]_{N}+[\Xi,\gamma]_{N}.
\end{aligned}
\right.
\end{equation}
Choosing the algebra variables $\alpha,\beta,\gamma,\delta$ to be
\begin{equation*}
\alpha=\Psi,\quad\beta=-P-2\Xi,\quad\gamma=Q-2\Psi,\quad\delta=J,
\end{equation*}
we observe that equations \eqref{matrixequationforPsigma}-\eqref{rhoxieulerarnoldmatrix} coincide with \eqref{LP3Dquant}. Thus, we arrive at the second central result of this paper.
\begin{theorem}
The axisymmetric MHD--Zeitlin equations \eqref{LP3Dquant} have the form of a Lie--Poisson equation
\begin{equation*}
\frac{\mathrm{d}}{\mathrm{d}t}((\Xi,-Q),(P,W))=\mathrm{ad}^{*}_{((\alpha,\beta),(\gamma,\delta))}((\Xi,-Q),(P,W))
\end{equation*}
on the dual $\mathfrak{F}^{*}$ of the Lie algebra $\mathfrak{F}=(\mathfrak{su}(N)\ltimes\mathfrak{su}(N))\ltimes(\mathfrak{su}(N)\ltimes\mathfrak{su}(N))^{*}$, with the Hamiltonian function given by
\begin{equation}
\label{energyreduced3Dmatrix}
H^{N}=-\frac{2\pi}{N}\mathrm{tr}\left(W\Psi-Q^{2}\right)-\frac{2\pi}{N}\mathrm{tr}\left(\Xi J-P^{2}\right).
\end{equation}
\end{theorem}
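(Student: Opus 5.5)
The proof has two parts. First, the right-hand side of the abstract Euler--Arnold equation on $\mathfrak{F}^{*}$, computed in \eqref{matrixequationforPsigma}--\eqref{rhoxieulerarnoldmatrix}, must reduce to \eqref{LP3Dquant} under the stated choice of algebra variables. Second, that choice must be the Legendre transform $\delta H^{N}$ of the Hamiltonian \eqref{energyreduced3Dmatrix}, taken with respect to the pairing used to identify $\mathfrak{F}^{*}$ with $\mathfrak{F}$. Only the second part makes the system a genuine Lie--Poisson flow $\dot F=\llbracket F,H^{N}\rrbracket$ rather than just an equation of Euler--Arnold shape.

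For the first part, I start from the semidirect product coadjoint formula \eqref{coadjopsemidir}, applied to $\mathfrak{F}=\mathfrak{f}\ltimes\mathfrak{f}^{*}$. I insert the bracket \eqref{adonmatrices} and the coadjoint operator \eqref{coadjointoperatorforabelinext} with the coadjoint element $((\Xi,-Q),(P,W))$.
\begin{itemize}
\item The first component $[(\Xi,-Q),(\alpha,\beta)]$ gives $\dot\Xi=[\Xi,\alpha]_{N}$ together with the $Q$-equation. The sign convention for $-Q$ must be tracked here, using skew-hermiticity $U^{\dag}=-U$ on $\mathfrak{su}(N)$.
\item The second component yields \eqref{rhoxieulerarnoldmatrix}.
\end{itemize}
I then substitute $\alpha=\Psi$, $\beta=-P-2\Xi$, $\gamma=Q-2\Psi$, $\delta=J$ and expand each term using bilinearity and antisymmetry of $[\cdot,\cdot]_{N}$:
\begin{itemize}
\item $[Q,\Psi]_{N}+[-P-2\Xi,\Xi]_{N}=[Q,\Psi]_{N}+[\Xi,P]_{N}$;
\item $[P,\Psi]_{N}+[\Xi,Q-2\Psi]_{N}=[P,\Psi]_{N}+[\Xi,Q]_{N}+2[\Psi,\Xi]_{N}$;
\item $[W,\Psi]_{N}+[\Xi,J]_{N}+[P,-P-2\Xi]_{N}+[Q-2\Psi,Q]_{N}$, which I regroup into the $W$-equation of \eqref{LP3Dquant}.
\end{itemize}
The expected mismatch in the $W$-equation is a matter of which terms $[Q,\Psi]_{N}$ and $[\Xi,P]_{N}$ carry a factor of 2. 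The terms $[P,P]_{N}$ and $[Q,Q]_{N}$ vanish, so the identity hinges on the cross terms $2[\Xi,P]_{N}$ and $2[\Psi,Q]_{N}$. Resolving this may require a different sign convention for the embedding $(\Xi,-Q)$ or for the dual pairing. I would fix these conventions by demanding consistency with the continuous choice \eqref{choiceofthehamiltonian}, which in the continuous setting did reproduce \eqref{LP3D}.

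For the second part, I compute the variational derivatives of $H^{N}$ with respect to the Frobenius pairing, scaled by $4\pi/N$ so that it matches the quantization of \eqref{integralsonS3}. Symmetry of $\Delta_{N}$ under the Frobenius product gives
\begin{itemize}
\item $\partial H^{N}/\partial W\propto\Psi$, since $\mathrm{tr}(W\Delta_{N}^{-1}W)$ is quadratic;
\item $\partial H^{N}/\partial \Xi\propto J$, since $\mathrm{tr}(\Xi\Delta_{N}\Xi)$ is quadratic;
\item $\partial H^{N}/\partial Q\propto Q$ and $\partial H^{N}/\partial P\propto P$.
\end{itemize}
Because the dual pairing on $\mathfrak{f}$ is off-diagonal, the derivative with respect to one slot lands in the partner slot of the algebra. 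This produces $\alpha$ from $\partial_{W}$ and $\delta$ from $\partial_{\Xi}$, and the $P,Q$ contributions feed $\beta,\gamma$.

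The hardest step is this second part. The stated $\beta$ and $\gamma$ contain the extra terms $-2\Xi$ and $-2\Psi$, which are not gradients of a quadratic Hamiltonian unless the Hamiltonian contains cross terms such as $\mathrm{tr}(\Xi P)$ or $\mathrm{tr}(\Psi Q)$. I would first check whether these terms are $\mathrm{ad}^{*}$-invisible, that is, whether shifting $\beta$ by $-2\Xi$ and $\gamma$ by $-2\Psi$ changes the equations only by terms that cancel. A lift by a Casimir direction, analogous to the continuous Casimirs $J_h$ and $K_g$ with linear $h,g$, would also do. If neither works, I would argue as in the continuous case, where \eqref{choiceofthehamiltonian} also carries such shifts: the theorem asserts the Euler--Arnold form with the given inertia map, and $H^{N}$ is the conserved energy, whose conservation I would verify directly by the trace-cyclicity computation that mirrors the continuous proof of conservation of \eqref{energyreduced3D}.
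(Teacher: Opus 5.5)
Your first part is the paper's whole argument. The paper derives \eqref{matrixequationforPsigma}--\eqref{rhoxieulerarnoldmatrix} from \eqref{coadjopforextensonextension} and substitutes $\alpha=\Psi$, $\beta=-P-2\Xi$, $\gamma=Q-2\Psi$, $\delta=J$. There is no mismatch and no convention to adjust. Indeed $[P,-P-2\Xi]_N=2[\Xi,P]_N$ and $[Q-2\Psi,Q]_N=-2[\Psi,Q]_N=2[Q,\Psi]_N$ are exactly the two factor-$2$ terms in the $W$-equation of \eqref{LP3Dquant}. Your ``$2[\Psi,Q]_N$'' is a sign slip.

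Your second part is the step you rightly single out as what makes this a genuine Lie--Poisson statement. The paper does not do it: it picks the algebra variables by hand and never compares them with the gradient of $H^N$. With the embedding $\mu=((\Xi,-Q),(P,W))$, that comparison fails.
\begin{itemize}
\item The pairing behind \eqref{coadjopforextensonextension} (off-diagonal Frobenius pairing on $\mathfrak{f}$, cross pairing of the two factors of $\mathfrak{F}$) is $\langle\mu,((\alpha,\beta),(\gamma,\delta))\rangle=-\mathrm{tr}(W\alpha)-\mathrm{tr}(\beta P)-\mathrm{tr}(\delta\Xi)+\mathrm{tr}(Q\gamma)$.
\item The gradient of $H^N$ is therefore $\tfrac{4\pi}{N}((\Psi,-P),(Q,J))$.
\item Its Lie--Poisson flow is \eqref{LP3Dquant} with $2[Q,\Psi]_N+2[\Xi,P]_N$ removed from $\dot W$ and $2[\Psi,\Xi]_N$ removed from $\dot P$.
\end{itemize}
So the shifts $-2\Xi$ in $\beta$ and $-2\Psi$ in $\gamma$ are exactly what generate those terms, and they are not $\mathrm{ad}^*$-invisible. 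Since the theorem fixes $H^N$, the only admissible change of Hamiltonian is by Casimirs, and that does not alter the flow. Of your cross terms, $\mathrm{tr}(\Xi P)$ is the Casimir $J^N_1$ up to a factor. $\mathrm{tr}(\Psi Q)$ is not a Casimir, and adding it would also move $\alpha$ away from $\Psi$.

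The missing idea is to change the embedding, not the Hamiltonian or the signs. The extra terms are $2([Q,\Psi]_N+[\Xi,P]_N)=2\dot Q$ and $2[\Psi,\Xi]_N=-2\dot\Xi$. Hence \eqref{LP3Dquant} is equivalent to
\[
\dot\Xi=[\Xi,\Psi]_N,\quad \dot Q=[Q,\Psi]_N+[\Xi,P]_N,\quad \tfrac{\mathrm{d}}{\mathrm{d}t}(P+2\Xi)=[P,\Psi]_N+[\Xi,Q]_N,\quad \tfrac{\mathrm{d}}{\mathrm{d}t}(W-2Q)=[W,\Psi]_N+[\Xi,J]_N.
\]
Put $\hat\mu=((\Xi,-Q),(\hat P,\hat W))$ with $\hat P=P+2\Xi$ and $\hat W=W-2Q$. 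View $H^N$ as a function of $\hat\mu$ through $P=\hat P-2\Xi$ and $W=\hat W+2Q$. The gradient of $H^N$ with respect to $\hat\mu$ is then $\tfrac{4\pi}{N}((\Psi,-P),(Q-2\Psi,J+2P))$. Insert this into \eqref{matrixequationforPsigma}--\eqref{rhoxieulerarnoldmatrix}, with $\hat P,\hat W$ in the $P,W$ slots. This returns precisely the four equations above, up to a time rescaling by $4\pi/N$.

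With this shift, the Hamiltonian clause holds in the strict sense. Without it, only your fallback can be proved: the Euler--Arnold form with a hand-picked inertia map, plus conservation of $H^N$. That inertia map is not self-adjoint for the pairing, so it is the gradient of no function. This fallback is also all the paper's argument delivers.
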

\subsection{Casimir functions for the matrix equations}
As a standard implication of the Lie--Poisson nature of the flow \eqref{LP3Dquant}, there exist matrix analogues of the Casimirs \eqref{cas3Dnotfull}-\eqref{crosshelicityreduced}. Since polynomials are dense in $C^{\infty}$, we can restrict the arbitrary function in \eqref{cas3Dnotfull}-\eqref{crosshelicityreduced} to be monomials, i.e. by Casimirs for the system \eqref{LP3Dquant} we will mean
\begin{equation}
\label{casimirsquantizedmonomials}
\begin{split}
&\mathcal{C}_{m}^{N}=\frac{4\pi}{N}\mathrm{tr}(\Xi^{m}),\quad J_{m}^{N}=\frac{4\pi}{N}\mathrm{tr}(P \Xi^{m}),\quad K_{m}^{N}=\frac{4\pi}{N}\mathrm{tr}(Q\Xi^{m}),{}\\&
\mathcal{I}_{m}^{N}=\frac{4\pi}{N}\mathrm{tr}\left(W\Xi^{m}-\sum_{j=0}^{m-1}\Xi^{j}P\Xi^{m-1-j}Q\right).
\end{split}
\end{equation}
which are matrix analogues for
\begin{equation}
\label{continuousmonomialsformahd3d}
\mathcal{I}_{m}=\int_{\mathbb{S}^{2}}(\omega\xi^{m}-m\rho q\xi^{m-1})\mu,\quad\mathcal{C}_{m}=\int_{\mathbb{S}^{2}}\xi^{m}\mu,\quad J_{m}=\int_{\mathbb{S}^{2}}\rho \xi^{m}\mu,\quad K_{m}=\int_{\mathbb{S}^{2}}q\xi^{m}\mu.
\end{equation}

The formula for the Casimir $\mathcal{I}_{m}^{N}$ in \eqref{casimirsquantizedmonomials} is a counterintuitive analogue for $\mathcal{I}_{m}$ in \eqref{continuousmonomialsformahd3d}. Indeed, replacing matrices in \eqref{casimirsquantizedmonomials} by functions, and the trace operation by integration, one recovers \eqref{continuousmonomialsformahd3d}.
\begin{proposition}
The quantities \eqref{casimirsquantizedmonomials} are Casimir invariants for \eqref{LP3Dquant}. The Hamiltonian \eqref{energyreduced3Dmatrix} is also conserved by the flow.
\end{proposition}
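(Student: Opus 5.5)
The plan is to follow the continuous proof of the preceding Proposition: differentiate each quantity in \eqref{casimirsquantizedmonomials} along the \emph{general} Euler--Arnold equations \eqref{matrixequationforPsigma}--\eqref{rhoxieulerarnoldmatrix}, leaving $\alpha,\beta,\gamma,\delta$ arbitrary. A quantity whose derivative vanishes for all choices of these algebra variables is a Casimir. Two identities do all the work:
\begin{itemize}
\item cyclicity of the trace, $\mathrm{tr}([A,B]C)=\mathrm{tr}(A[B,C])$, so in particular $\mathrm{tr}([A,B])=0$;
\item the derivation property of the commutator, which turns $\dot\Xi=[\Xi,\alpha]_{N}$ into $\frac{\mathrm{d}}{\mathrm{d}t}\Xi^{m}=[\Xi^{m},\alpha]_{N}$.
\end{itemize}
The scaling $1/\hbar$ and the factor $4\pi/N$ are common to every term and can be ignored.

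For $\mathcal{C}_{m}^{N}$, the derivative is $\mathrm{tr}([\Xi^{m},\alpha]_{N})=0$. For $J_{m}^{N}$, the $\alpha$-parts combine to $\mathrm{tr}([P\Xi^{m},\alpha]_{N})=0$. The remaining term is $\mathrm{tr}([\Xi,\gamma]_{N}\Xi^{m})=\mathrm{tr}(\gamma[\Xi^{m},\Xi]_{N})$, which vanishes because $\Xi^{m}$ commutes with $\Xi$. The argument for $K_{m}^{N}$ is the same, with $\beta$ in place of $\gamma$.

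The main step is $\mathcal{I}_{m}^{N}$, because of the symmetrized sum $S_{m}=\sum_{j=0}^{m-1}\mathrm{tr}(\Xi^{j}P\Xi^{m-1-j}Q)$.
\begin{itemize}
\item For $\mathrm{tr}(W\Xi^{m})$: the $\alpha$-terms cancel as before, and $\mathrm{tr}([\Xi,\delta]_{N}\Xi^{m})=0$. What remains is $\mathrm{tr}([P,\beta]_{N}\Xi^{m})+\mathrm{tr}([\gamma,Q]_{N}\Xi^{m})$.
\item In $\dot S_{m}$: every factor is advected by $\alpha$, so the $\alpha$-contributions sum to the trace of a commutator and vanish.
\item The $\gamma$-contribution to $\dot S_{m}$ comes from $\dot P\ni[\Xi,\gamma]_{N}$. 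The sum $\sum_{j}\Xi^{j}[\Xi,\gamma]\Xi^{m-1-j}$ telescopes to $[\Xi^{m},\gamma]$, giving $\mathrm{tr}([\Xi^{m},\gamma]_{N}Q)=\mathrm{tr}(\Xi^{m}[\gamma,Q]_{N})$. This cancels the $\gamma$-term of $\mathrm{tr}(W\Xi^{m})$.
\item The $\beta$-contribution comes from $\dot Q\ni[\beta,\Xi]_{N}$. After cycling the trace, $\sum_{j}\Xi^{m-1-j}[\beta,\Xi]\Xi^{j}$ telescopes to $[\beta,\Xi^{m}]$, giving $\mathrm{tr}([\beta,\Xi^{m}]_{N}P)=\mathrm{tr}([P,\beta]_{N}\Xi^{m})$. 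This cancels the $\beta$-term.
\end{itemize}
Hence $\dot{\mathcal{I}}_{m}^{N}=0$. This telescoping is exactly why the naive analogue $m\,\mathrm{tr}(PQ\Xi^{m-1})$ fails and the symmetrized ordering is needed. I expect the main obstacle to be getting the signs and index orderings right in this step.

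For the Hamiltonian, I would substitute the specific choice $\alpha=\Psi$, $\beta=-P-2\Xi$, $\gamma=Q-2\Psi$, $\delta=J$, that is, work directly with \eqref{LP3Dquant}. The Hoppe--Yau Laplacian $\Delta_{N}$ is self-adjoint for the Frobenius product, so
\begin{equation*}
\tfrac{\mathrm{d}}{\mathrm{d}t}\mathrm{tr}(W\Psi)=2\,\mathrm{tr}(\dot W\Psi),\qquad \tfrac{\mathrm{d}}{\mathrm{d}t}\mathrm{tr}(\Xi J)=2\,\mathrm{tr}(\dot\Xi J).
\end{equation*}
This gives
\begin{equation*}
\dot H^{N}\propto-\mathrm{tr}(\dot W\Psi+\dot\Xi J)+\mathrm{tr}(Q\dot Q+P\dot P).
\end{equation*}
Inserting \eqref{LP3Dquant}, the self-terms $\mathrm{tr}([X,\Psi]X)$ and $\mathrm{tr}([W,\Psi]\Psi)$ vanish. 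The remaining terms cancel pairwise by cyclicity, mirroring the continuous computation term by term:
\begin{itemize}
\item $\mathrm{tr}([\Xi,J]\Psi)$ against $\mathrm{tr}([\Xi,\Psi]J)$;
\item $2\,\mathrm{tr}([Q,\Psi]\Psi)=0$;
\item $2\,\mathrm{tr}([\Xi,P]\Psi)$ against $2\,\mathrm{tr}(P[\Psi,\Xi])$;
\item $\mathrm{tr}(Q[\Xi,P])$ against $\mathrm{tr}(P[\Xi,Q])$.
\end{itemize}
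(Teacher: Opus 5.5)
Your proposal is correct and follows the paper's proof: you differentiate along the general Euler--Arnold equations \eqref{matrixequationforPsigma}--\eqref{rhoxieulerarnoldmatrix} with $\alpha,\beta,\gamma,\delta$ arbitrary, the $\alpha$-terms drop out as traces of commutators, and the $\beta$- and $\gamma$-terms in $\dot{\mathcal{I}}^{N}_{m}$ cancel through the telescoping of the symmetrized sum, which the paper carries out in \eqref{formulawithPbeta}--\eqref{formulawithgammaQ}. Your checks of $\mathcal{C}^{N}_{m}$, $J^{N}_{m}$, $K^{N}_{m}$ and your direct computation of $\dot H^{N}$ (which mirrors the continuous proof) cover the cases the paper leaves as ``analogous'', and they are correct.
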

\begin{proof}
We will prove the statement for the cross-helicity Casimir $\mathcal{I}_{m}^{N}$, others are treated analogously. To that end, we calculate the evolution of the Casimir $\mathcal{I}_{m}^{N}$ along trajectories of the general matrix Euler--Arnold equation \eqref{matrixequationforPsigma}-\eqref{rhoxieulerarnoldmatrix}.  We get
\begin{equation*}
\begin{split}
\frac{\mathrm{d}\mathcal{I}_{m}^{N}}{\mathrm{d}t}&=\frac{4\pi}{N}\mathrm{tr}\left(\dot W\Xi^{m}+W\frac{\mathrm{d}}{\mathrm{d}t}\left(\Xi^{m}\right)\right.{}\\&-\left.\sum_{j=0}^{m-1}\left(\frac{\mathrm{d}}{\mathrm{d}t}\left(\Xi^{j}\right)P\Xi^{m-1-j}Q+\Xi^{j}\dot P\Xi^{m-1-j}Q+\Xi^{j}P\frac{\mathrm{d}}{\mathrm{d}t}\left(\Xi^{m-1-j}\right)Q+\Xi^{j}P\Xi^{m-1-j}\dot Q\right)\right).
\end{split}
\end{equation*}
Further, for the first term, using \eqref{rhoxieulerarnoldmatrix}, we get
\begin{equation}
\label{firsttermincasimir}
\begin{split}
\mathrm{tr}\left(\dot W\Xi^{m}+W\frac{\mathrm{d}}{\mathrm{d}t}\left(\Xi^{m}\right)\right)&=\mathrm{tr}\left([W,\alpha]\Xi^{m}+[\Xi,\delta]\Xi^{m}+[P,\beta]\Xi^{m}+[\gamma, Q]\Xi^{m}+W[\Xi^{m},\alpha]\right){}\\&
=\mathrm{tr}\left(P\beta\Xi^{m}-\beta P\Xi^{m}+\gamma Q\Xi^{m}-Q\gamma\Xi^{m}\right).
\end{split}
\end{equation}
Using \eqref{matrixequationforPsigma}, we further obtain
\begin{equation}
\label{manytermsincasimir}
\begin{split}
&\mathrm{tr}\left(\frac{\mathrm{d}}{\mathrm{d}t}\left(\Xi^{j}\right)P\Xi^{m-1-j}Q\right)=\mathrm{tr}([\Xi^{j},\alpha]P\Xi^{m-1-j}Q)=\mathrm{tr}(\Xi^{j}\alpha P\Xi^{m-1-j}Q-\alpha\Xi^{j}P\Xi^{m-1-j}Q),{}\\&
\mathrm{tr}(\Xi^{j}\dot P\Xi^{m-1-j}Q)=\mathrm{tr}\left(\Xi^{j}P\alpha\Xi^{m-1-j}Q-\Xi^{j}\alpha P\Xi^{m-1-j}Q+\Xi^{j+1}\gamma\Xi^{m-1-j}Q-\Xi^{j}\gamma\Xi^{m-j}Q\right),{}\\&
\mathrm{tr}\left(\Xi^{j}P\frac{\mathrm{d}}{\mathrm{d}t}\left(\Xi^{m-1-j}\right)Q\right)=\mathrm{tr}(\Xi^{j}P\Xi^{m-1-j}\alpha Q-\Xi^{j}P\alpha\Xi^{m-1-j}Q),{}\\&
\mathrm{tr}\left(\Xi^{j}P\Xi^{m-1-j}\dot Q\right)=\mathrm{tr}(\Xi^{j}P\Xi^{m-1-j}Q\alpha-\Xi^{j}P\Xi^{m-1-j}\alpha Q+\Xi^{j}P\Xi^{m-1-j}\beta\Xi-\Xi^{j}P\Xi^{m-j}\beta),
\end{split}
\end{equation}
Summing up all the terms in \eqref{manytermsincasimir} and \eqref{firsttermincasimir}, we find
\begin{equation}
\label{almostfinalwithsums}
\begin{split}
\frac{\mathrm{d}\mathcal{I}_{m}^{N}}{\mathrm{d}t}=\frac{4\pi}{N}\mathrm{tr}&\left(P\beta\Xi^{m}-\beta P\Xi^{m}-\sum_{j=0}^{m-1}(\Xi^{j+1}P\Xi^{m-1-j}\beta-\Xi^{j}P\Xi^{m-j}\beta)\right.{}\\&\left.+\gamma Q\Xi^{m}-Q\gamma\Xi^{m}-\sum_{j=0}^{m-1}(\Xi^{j+1}\gamma\Xi^{m-1-j}Q-\Xi^{j}\gamma\Xi^{m-j}Q)\right).
\end{split}
\end{equation}
Rearranging the sums in \eqref{almostfinalwithsums} we obtain
\begin{equation}
\label{formulawithPbeta}
\begin{split}
\mathrm{tr}\left(\sum_{j=0}^{m-1}(\Xi^{j+1}P\Xi^{m-1-j}\beta-\Xi^{j}P\Xi^{m-j}\beta)\right)&=\mathrm{tr}\left(\sum_{j^{\prime}=1}^{m}(\Xi^{j^{\prime}}P\Xi^{m-j^{\prime}}\beta-\sum_{j=0}^{m-1}\Xi^{j}P\Xi^{m-j}\beta\right){}\\&=\mathrm{tr}(\Xi^{m}P\beta-P\Xi^{m}\beta)=\mathrm{tr}(P\beta\Xi^{m}-\beta P\Xi^{m}),
\end{split}
\end{equation}
and in a similar way we also find
\begin{equation}
\label{formulawithgammaQ}
\mathrm{tr}\left(\sum_{j=0}^{m-1}(\Xi^{j+1}\gamma\Xi^{m-1-j}Q-\Xi^{j}\gamma\Xi^{m-j}Q)\right)=\mathrm{tr}(\gamma Q\Xi^{m}-Q\gamma\Xi^{m}).
\end{equation}
From \eqref{formulawithPbeta} and \eqref{formulawithgammaQ}, we conclude that the right hand side of \eqref{almostfinalwithsums} vanishes, which finalizes the proof.
\end{proof}
Let us establish convergence results for the discretized versions of Casimirs \eqref{casimirsquantizedmonomials} and the Hamiltonian \eqref{energyreduced3Dmatrix}. 

We observe that convergence of $\mathcal{C}_{m}^{N}$, $J_{m}^{N}$, and $K_{m}^{N}$ to the corresponding quantities $\mathcal{C}_{m}$, $J_{m}$, and $K_{m}$, as $N\to\infty$, follows from the results obtained in \cite{ModRoop}. Namely, the following holds:
\begin{theorem}
Let $m>2$. There exist a constant $c_{1}>0$ such that for all $\xi,\rho,q\in C^2(\mathbb{S}^2)$  we have
  \begin{equation*}
    \lvert \mathcal{C}^N_m - \mathcal{C}_m \rvert \leq \beta_{1}(\xi,m)\hbar,\quad
   \left|J^{N}_{m}-J_{m}\right|\le4\pi\|\rho\|_{\infty}\beta_{2}(\xi,m)\hbar,\quad\left|K^{N}_{m}-K_{m}\right|\le4\pi\|q\|_{\infty}\beta_{2}(\xi,m)\hbar,
  \end{equation*}
  where
  \begin{equation*}
  \begin{split}
&\beta_{1}(\xi,m)=c_{1}\sum\limits_{j=2}^{m-1}(4\pi\|\xi\|_{\infty})^{m-j}\left(\|\xi\|_{\infty}\|\xi^{j-1}\|_{C^{2}}+\|\xi\|_{C^{1}}\|\xi^{j-1}\|_{C^{1}}\right),{}\\&
\beta_{2}(\xi,m)=c_{1}\sum\limits_{j=2}^{m}(4\pi\|\xi\|_{\infty})^{m-j}\left(\|\xi\|_{\infty}\|\xi^{j-1}\|_{C^{2}}+\|\xi\|_{C^{1}}\|\xi^{j-1}\|_{C^{1}}\right)
\end{split}
\end{equation*}
  In particular, $\lvert \mathcal{C}^N_m - \mathcal{C}_m \rvert = \mathcal{O}(1/N)$, $\lvert J^N_m - J_m \rvert = \mathcal{O}(1/N)$, $\lvert K^N_m - K_m \rvert = \mathcal{O}(1/N)$ as $N\to\infty$, for $m>2$.
\end{theorem}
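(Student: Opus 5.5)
The plan is a telescoping argument that compares matrix products with the quantization of pointwise products, one factor at a time. Throughout, write $\Xi=p_N(\xi)$, $P=p_N(\rho)$ and $Q=p_N(q)$. I read $\|\cdot\|_{\infty}$ in $\beta_1,\beta_2$ as the sup norm on $\mathbb{S}^2$.

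\textbf{Step 1: exactness for linear expressions.} The normalized trace is exact on quantized functions:
\[
\frac{4\pi}{N}\mathrm{tr}\bigl(p_N(f)\bigr)=\int_{\mathbb{S}^{2}}f\,\mu
\]
for the truncated function, and the pairing $\frac{4\pi}{N}\mathrm{tr}(p_N(f)^{\dagger}p_N(g))$ reproduces $\int fg\,\mu$ on the first $N^2$ modes. This follows from the Frobenius orthogonality of the $T^N_{lm}$ and the normalization of $p_N$. The consequence is that all the error comes from replacing the matrix power $\Xi^{m}$ by $p_N(\xi^{m})$.

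\textbf{Step 2: the product estimate.} I need a quantized analogue of the Charles--Polterovich bracket estimate for products:
\[
\|p_N(f)p_N(g)-p_N(fg)\|_{L^\infty_N}\le c_1\hbar\bigl(\|f\|_{\infty}\|g\|_{C^2}+\|f\|_{C^1}\|g\|_{C^1}\bigr).
\]
This is the product (rather than bracket) part of the Berezin--Toeplitz asymptotics in \cite{ChPolt2017}. It is the result used in \cite{ModRoop} for the 2D magnetic helicity, and I would cite it from there. This is the main input, and essentially the only non-elementary one.

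\textbf{Step 3: telescoping.} Write
\[
\Xi^{m}-p_N(\xi^{m})=\sum_{j=2}^{m}\Xi^{m-j}\bigl(\Xi\,p_N(\xi^{j-1})-p_N(\xi^{j})\bigr),
\]
and bound each summand by Step 2 with $f=\xi$ and $g=\xi^{j-1}$. The prefactors $\Xi^{m-j}$ are controlled by $\|\Xi\|_{L^\infty_N}^{m-j}\le\|\xi\|_\infty^{m-j}$, using the upper bound in the Charles--Polterovich theorem. The trace is then estimated by
\[
\tfrac{4\pi}{N}|\mathrm{tr}(AB)|\le 4\pi\|A\|_{L^\infty_N}\|B\|_{L^\infty_N},
\]
and this produces the powers of $4\pi$.

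\textbf{Step 4: the three estimates.}
\begin{itemize}
\item For $J^N_m$: $J^N_m-J_m=\frac{4\pi}{N}\mathrm{tr}\bigl(P(\Xi^{m}-p_N(\xi^{m}))\bigr)$ plus a truncation term that vanishes by Step 1. The bound $4\pi\|\rho\|_\infty\beta_2(\xi,m)\hbar$ follows, with the $j$-sum running over $2,\dots,m$.
\item For $K^N_m$: the same argument with $Q$ in place of $P$.
\item For $\mathcal{C}^N_m$: write $\mathrm{tr}(\Xi^m)=\mathrm{tr}(\Xi\cdot\Xi^{m-1})$, so the first factor is exact by Step 1 and the remaining telescoping runs only over $j=2,\dots,m-1$. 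This explains why $\beta_1$ has one fewer term.
\end{itemize}

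\textbf{Expected obstacle.} The delicate point is Step 2 and the bookkeeping of constants: ensuring a single constant $c_1$, independent of $j$ and $m$, and tracking the factors $4\pi$ consistently with the normalization of $\|\cdot\|_{L^\infty_N}$. Since the result is stated as following from \cite{ModRoop}, I would ultimately reduce to the lemma proved there for $\mathcal{C}^N_f$ and the cross-helicity, applied verbatim with $\rho$ or $q$ in place of $\omega$.
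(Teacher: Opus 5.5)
Your argument is essentially the route the paper takes. It telescopes $\Xi^m-p_N(\xi^m)$ and uses a product estimate for $p_N$, the bound $\|p_N(f)\|_{L^\infty_N}\le\|f\|_\infty$, and $\frac{4\pi}{N}|\mathrm{tr}(AB)|\le 4\pi\|A\|_{L^\infty_N}\|B\|_{L^\infty_N}$. The paper gives no separate proof; it only states that the bounds follow from \cite{ModRoop}, i.e.\ from the 2D estimates for $\mathcal{C}^N_m$ and for $\mathrm{tr}(W\Theta^m)$ with $\rho$ or $q$ in the role of $\omega$, which is where you also end up.

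Two bookkeeping remarks:
\begin{itemize}
\item The single trace inequality produces one factor $4\pi$, not $(4\pi)^{m-j}$. Since $(4\pi)^{m-j}\ge 1$, your bound is the sharper one.
\item The asymmetric form of the product estimate is not essential. An extra $\|f\|_{C^2}\|g\|_\infty$ term would only contribute multiples of $\|\xi\|_{C^2}\|\xi\|_\infty^{m-1}$, which the $j=2$ summand absorbs.
\end{itemize}

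One step is false as written: the ``truncation term'' does not vanish. In your normalization, Step 1 gives $\frac{4\pi}{N}\mathrm{tr}\bigl(p_N(f)^{\dagger}p_N(g)\bigr)=\int(\Pi_N f)(\Pi_N g)\,\mu$, where $\Pi_N$ is the $L^2$-projection onto spherical harmonics of degree $l<N$. Hence
\[
\int(\Pi_N\rho)\,\Pi_N(\xi^{m})\,\mu-\int\rho\,\xi^{m}\,\mu=-\int(1-\Pi_N)\rho\;(1-\Pi_N)(\xi^{m})\,\mu ,
\]
which is nonzero for general $\rho,\xi\in C^{2}(\mathbb{S}^{2})$. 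Likewise, your claim that the first factor in $\mathrm{tr}(\Xi\cdot\Xi^{m-1})$ is ``exact'' would give $\mathcal{C}^N_2=\mathcal{C}_2$ when $m=2$. In fact the difference is $-\|(1-\Pi_N)\xi\|_{L^2}^{2}$, which is consistent with the paper handling the quadratic Casimirs separately.

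The repair is short. Use $\|(1-\Pi_N)g\|_{L^2}\le\|\Delta g\|_{L^2}/(N(N+1))$ together with $N^{-2}\le\hbar$.
\begin{itemize}
\item For $J_m$ and $K_m$, this bounds the extra term by $C\|\rho\|_\infty\|\xi^{m}\|_{C^2}\hbar$ (resp.\ with $\|q\|_\infty$). By the Leibniz rule, $\|\xi^m\|_{C^2}$ is controlled by the $j=m$ summand of $\beta_2$ plus $\|\xi\|_\infty^{m-2}$ times its $j=2$ summand.
\item For $\mathcal{C}_m$, put the derivatives on the lone factor: $\|(1-\Pi_N)\xi\|_{L^2}\,\|\xi^{m-1}\|_{L^2}\le C\hbar\|\xi\|_{C^2}\|\xi\|_\infty^{m-1}$. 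This is dominated by the $j=2$ summand of $\beta_1$, which is present precisely because $m>2$.
\end{itemize}
Enlarging $c_1$ then yields the stated estimates.
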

For quadratic Casimirs, i.e. for $\mathcal{C}_{2}^{N}$ and $\mathcal{I}_{1}^{N}$, $J_{1}^{N}$, $K_{1}^{N}$, as well as for the Hamiltonian $H^{N}$, one can get a sharper estimate giving $\mathcal{O}(1/N^{s})$ convergence, where $s$ is the regularity parameter. Indeed, the scaled Frobenius inner product $\langle A,B\rangle_{F}=(4\pi/N)\operatorname{tr}(A^{\dag}B)$ on $\mathfrak{su}(N)$ converges to the $L^{2}$ inner product with the rate depending on the regularity of the approximated fields, i.e. (see \cite[lemma 5]{MoPe2024})
\begin{equation}
  \label{estim2CfN}
  \left|\langle p_{N}(f),p_{N}(g)\rangle_{F}-\langle f,g\rangle_{L^{2}}\right|\le2\hbar^{s}\|f\|_{H^{s}(\mathbb{S}^{2})}\|g\|_{H^{s}(\mathbb{S}^{2})}=\mathcal{O}(1/N^{s}),
\end{equation}
for every $s>1$.

From estimate \eqref{estim2CfN}, we deduce the following result:
\begin{theorem}
Let $\rho,q\in H^{s}(\mathbb{S}^{2})$, $\xi,\psi\in H^{s+2}(\mathbb{S}^{2})$. Then, $\mathcal{I}^{N}_{1}$ and $H^{N}$ in \eqref{casimirsquantizedmonomials} and \eqref{energyreduced3Dmatrix} respectively, converge to $\mathcal{I}_{1}$ and $H$, as $N\to\infty$, and the following estimate holds:
\begin{equation*}
\begin{split}
&|\mathcal{I}^{N}_{1}-\mathcal{I}_{1}|\le2\hbar^{s}\left(\|\xi\|_{H^{s}(\mathbb{S}^{2})}\|\psi\|_{H^{s+2}(\mathbb{S}^{2})}+\|\rho\|_{H^{s}(\mathbb{S}^{2})}\|q\|_{H^{s}(\mathbb{S}^{2})}\right),{}\\&
|H^{N}-H|\le2\hbar^{s}(\|\psi\|_{H^{s}(\mathbb{S}^{2})}\|\psi\|_{H^{s+2}(\mathbb{S}^{2})}+\|q\|^{2}_{H^{s}(\mathbb{S}^{2})}+\|\rho\|^{2}_{H^{s}(\mathbb{S}^{2})}+\|\xi\|_{H^{s}(\mathbb{S}^{2})}\|\xi\|_{H^{s+2}(\mathbb{S}^{2})}).
\end{split}
\end{equation*}
\end{theorem}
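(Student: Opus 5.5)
The plan is to write each discrete quantity as a sum of scaled Frobenius inner products of projected fields, and to match each term with the corresponding $L^{2}$ pairing in the continuous quantity. Estimate \eqref{estim2CfN} is then applied term by term. Throughout I identify $\Psi=p_{N}(\psi)$, $W=p_{N}(\omega)$, $Q=p_{N}(q)$, $P=p_{N}(\rho)$, $\Xi=p_{N}(\xi)$, and $J=p_{N}(\Delta\xi)$.

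For $m=1$, the cross-helicity in \eqref{casimirsquantizedmonomials} reduces to $\mathcal{I}^{N}_{1}=\frac{4\pi}{N}\mathrm{tr}(W\Xi-PQ)$, while $\mathcal{I}_{1}=\int(\omega\xi-\rho q)\mu$.

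The first key step is that $p_{N}$ intertwines the Laplacians: $\Delta_{N}p_{N}=p_{N}\Delta$. This holds because $T^{N}_{lm}$ and the spherical harmonics share the eigenvalues $-l(l+1)$. Hence $W=p_{N}(\Delta\psi)=\Delta_{N}\Psi$, and the two sides of the identification are consistent.

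Next, the matrices are skew-Hermitian, so $\mathrm{tr}(AB)=-\mathrm{tr}(A^{\dag}B)$. Matching the sign and normalization conventions (including the factor $\mathrm{i}$ in $p_{N}$) gives
\[
\tfrac{4\pi}{N}\mathrm{tr}(W\Xi)=\langle p_{N}(\Delta\psi),p_{N}(\xi)\rangle_{F}
\]
up to the same fixed sign as in the continuous pairing. In the same way $\tfrac{4\pi}{N}\mathrm{tr}(PQ)$ corresponds to $\langle p_{N}(\rho),p_{N}(q)\rangle_{F}$. The cleanest route is to choose the normalization so that these identities hold exactly, as is implicitly done in \cite{ModRoop}.

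The triangle inequality then gives
\[
|\mathcal{I}^{N}_{1}-\mathcal{I}_{1}|\le|\langle p_{N}(\Delta\psi),p_{N}(\xi)\rangle_{F}-\langle\Delta\psi,\xi\rangle_{L^{2}}|+|\langle p_{N}(\rho),p_{N}(q)\rangle_{F}-\langle\rho,q\rangle_{L^{2}}|.
\]
Applying \eqref{estim2CfN} to each term, with $\|\Delta\psi\|_{H^{s}}\le\|\psi\|_{H^{s+2}}$, yields the stated bound. This is where the hypotheses $\psi\in H^{s+2}$ and $\rho,q,\xi\in H^{s}$ enter.

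The Hamiltonian is handled the same way. First I note that $\frac{2\pi}{N}=\frac12\cdot\frac{4\pi}{N}$, so that
\[
H^{N}=-\tfrac12\langle W,\Psi\rangle_{F}+\tfrac12\langle Q,Q\rangle_{F}-\tfrac12\langle \Xi,J\rangle_{F}+\tfrac12\langle P,P\rangle_{F},
\]
which mirrors \eqref{energyreduced3D} term by term. Here $\langle W,\Psi\rangle_{F}$ is estimated with $f=\Delta\psi$, $g=\psi$, giving $\|\psi\|_{H^{s}}\|\psi\|_{H^{s+2}}$, and $\langle\Xi,J\rangle_{F}$ with $f=\xi$, $g=\Delta\xi$, giving $\|\xi\|_{H^{s}}\|\xi\|_{H^{s+2}}$. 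The quadratic terms give $\|q\|^{2}_{H^{s}}$ and $\|\rho\|^{2}_{H^{s}}$. The factors $\tfrac12$ only improve the constant, so the stated bound with constant $2$ follows.

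The main obstacle is bookkeeping rather than analysis: one must check that the trace normalization $\frac{4\pi}{N}\mathrm{tr}(AB)$ for skew-Hermitian projections equals the paper's scaled Frobenius product $\langle p_{N}(f),p_{N}(g)\rangle_{F}$ with the correct sign. This comes down to $p_{N}(f)^{\dag}=-p_{N}(\bar f)$ for real $f$ together with the Frobenius orthonormality of $T^{N}_{lm}$. One must also confirm the intertwining $\Delta_{N}p_{N}=p_{N}\Delta$, which follows from the Hoppe--Yau spectral result.
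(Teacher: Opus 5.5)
Your argument is the paper's argument: the paper deduces both bounds directly from the Frobenius--$L^{2}$ estimate \eqref{estim2CfN} applied term by term. You supply the details it leaves implicit, namely $\Delta_{N}p_{N}=p_{N}\Delta$, $\|\Delta\psi\|_{H^{s}}\le\|\psi\|_{H^{s+2}}$, and the fact that the factors $\tfrac12$ only improve the constant in $|H^{N}-H|$.

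One correction to your bookkeeping remark. Skew-Hermiticity $p_{N}(f)^{\dag}=-p_{N}(f)$ gives $\tfrac{4\pi}{N}\mathrm{tr}(AB)=-\langle A,B\rangle_{F}$, which is the \emph{opposite} sign from the one you want. So with the literal conventions, $\mathcal{I}^{N}_{1}$ and $H^{N}$ approximate $-\mathcal{I}_{1}$ and $-H$. The paper makes the same tacit identification. You should therefore fix the sign as a convention, for example by pairing matrices via $-\tfrac{4\pi}{N}\mathrm{tr}(AB)$, rather than claim it follows from skew-Hermiticity. With that convention in place, your estimates go through unchanged.
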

\section{Time discretization of matrix equations}
\label{sec5}
Now that we have obtained the spatially discretized version \eqref{LP3Dquant} of equations \eqref{LP3D} via geometric quantization, the following question arises: how does one integrate equations \eqref{LP3Dquant} in time in a way that preserves their geometric structure, which in particular means the preservation of Casimirs \eqref{casimirsquantizedmonomials} and Hamiltonian \eqref{energyreduced3Dmatrix}? Since both the Casimirs and the Hamiltonian are preserved by the exact flow \eqref{LP3Dquant}, it is desirable to achieve the same properties when discretizing the flow in time, in order to assure qualitatively correct long time behavior.

As was shown in the previous sections, system \eqref{LP3Dquant} is a Lie--Poisson system, which means that the flow given in \eqref{LP3Dquant} is confined to the coadjoint orbits of the $\mathfrak{F}$-action on $\mathfrak{F}^{*}$, and that the flow is a symplectic map on the corresponding orbit. 

Let $0<t_{1}<t_{2}<\cdots<t_{n}<\cdots<T$ be the equidistant time mesh, i.e., $h(n)=h=t_{n}-t_{n-1}$, and let us use the subscript $n$ to denote the matrices in \eqref{LP3Dquant} at the time moment $t_{n}$, e.g., $\Xi(t_{n})=\Xi_{n}$. We will rescale the time appropriately to avoid the appearance of the multiplier $1/\hbar$ in \eqref{LP3Dquant}.
\begin{definition}
\label{definitionofLPintegr}
The integration map $\phi_{h}\colon\mathfrak{F}^{*}\to\mathfrak{F}^{*},\,\phi_{h}\colon(W_{n},P_{n},Q_{n},\Xi_{n})\mapsto(W_{n+1},P_{n+1},Q_{n+1},\Xi_{n+1})$ is called a \textit{Lie--Poisson integrator}, if it satisfies the following properties:
\begin{itemize}
\item The map $\phi_{h}$ preserves the coadjoint orbits, in particular, preserves the Casimirs \eqref{casimirsquantizedmonomials}:
\begin{equation}
\label{casimirspreservation}
\begin{split}
&\mathcal{I}^{N}_{m}(W_{n},P_{n},Q_{n},\Xi_{n})=\mathcal{I}^{N}_{m}(W_{n+1},P_{n+1},Q_{n+1},\Xi_{n+1}),\quad\mathcal{C}^{N}_{m}(\Xi_{n})=\mathcal{C}_{m}^{N}(\Xi_{n+1}),{}\\& J^{N}_{m}(P_{n},\Xi_{n})=J_{m}^{N}(P_{n+1},\Xi_{n+1}),\quad K^{N}_{m}(Q_{n},\Xi_{n})=K_{m}^{N}(Q_{n+1},\Xi_{n+1}).
\end{split}
\end{equation}
\item The map $\phi_{h}$ is a symplectomorphism of the coadjoint orbit.
\end{itemize}
\end{definition}
\begin{remark}
\begin{enumerate}
\item The equalities in \eqref{casimirspreservation} must be understood up to machine precision. It means that the only contribution to the numerical errors in Casimirs can come from round-off errors when implementing on a computer.
\item Preservation of the Casimirs $\mathcal{C}_{m}^{N}$ is equivalent to the preservation of the spectrum of the matrix $\Xi$. Indeed, the last equation in \eqref{LP3Dquant} prescribes the dynamics for $\Xi$ to be isospectral.
\item 
The flow provided by a Lie--Poisson integrator is a Hamiltonian flow on the coadjoint orbit, but for a modified Hamiltonian function. The closeness of the modified Hamiltonian to the exact Hamiltonian is defined by the convergence order of the integrator, and therefore one can only expect near preservation of the Hamiltonian, $
H^{N}(W_{n},P_{n},Q_{n},\Xi_{n})\approx H^{N}(W_{n+1},P_{n+1},Q_{n+1},\Xi_{n+1})
$ in the sense of backward error analysis. For a more detailed discussion, we refer to \cite{HaiLubWan}.
\end{enumerate}
\end{remark}

For canonical Hamiltonian systems on $T^{*}\mathbb{R}^{n}$, there exists a collection of symplectic integrators called symplectic Runge--Kutta schemes \cite{Sa1988}. An example of such a scheme is the implicit midpoint method. However, symplectic Runge--Kutta schemes do not yield Lie--Poisson integrators when applied directly to a Lie--Poisson system, and the problem of finding Lie--Poisson integrators is therefore more involved.

There exist several approaches to constructing Lie--Poisson (or, more generally, Poisson) integrators. If the Hamiltonian can be split into a sum of integrable Hamiltonians, one can use splitting methods developed in \cite{McQui1,McQui2}. In \cite{ChanScov,MarsPekSh}, the discrete Lie--Poisson flow is constructed from discrete invariant Lagrangians. The constrained symplectic integrator RATTLE is used in \cite{ChanScov1,Jay,McLModVerdWil}. Most of the mentioned approaches result in numerical schemes relying on group to algebra maps, such as exponential maps, and therefore become computationally expensive if the dimension of the corresponding Lie--Poisson system is large, which typically is the case for the Lie--Poisson flows originating from matrix equations for fluids.

A collection of computationally cheap implicit time integrators for isospectral flows (both Hamiltonian and non-Hamiltonian) was developed in \cite{ModViv,Milo}. The  motivation for such integrators came in part from the Zeitlin model of the incompressible Euler equations, obtained from \eqref{qMHD} by neglecting the magnetic field matrix $\Theta$, i.e. by putting $\Theta=0$, and therefore making the flow for the vorticity matrix $W$ isospectral.  The methods are formulated explicitly on the Lie algebra $\mathfrak{g}=\mathfrak{su}(N)$ and avoid computationally expensive exponential maps. The methods developed in \cite{ModViv,Milo} are referred to as \textit{isospectral symplectic Runge--Kutta methods (IsoSyRK)}. The IsoSyRK methods were later applied to the Euler--Zeitlin equations to simulate two-dimensional hydrodynamical turbulence on the sphere \cite{ModViv1,CiViMo2023}. The \textit{magnetic extension} of the IsoSyRK methods was developed in \cite{ModRoop}. The extended version of the IsoSyRK methods is a Lie--Poisson integrator for the Lie--Poisson flows on the dual of the magnetic extension Lie algebra $\mathfrak{f}=\mathfrak{su}(N)\ltimes\mathfrak{su}(N)^{*}$. The developed methods were applied to the MHD--Zeitlin equations \eqref{qMHD} in \cite{ModRoopJPP} to study the MHD turbulence on the sphere. The key idea behind the derivation of IsoSyRK and their magnetic extension is the discrete Lie--Poisson reduction. Namely, a Lie--Poisson flow on $\mathfrak{g}^{*}$ can be lifted to a $G$-invariant Hamiltonian system on $T^{*}G$, and a symplectic $G$-equivariant method $\Phi_{h}\colon T^{*}G\to T^{*}G$ descends to a Lie--Poisson integrator $\phi_{h}\colon\mathfrak{g}^{*}\to\mathfrak{g}^{*}$ via the momentum map $\mu\colon T^{*}G\to\mathfrak{g}^{*}$, see \cite{ModViv,ModRoop}, and also \cite[Ch. VII.5.5]{HaiLubWan}. In this section, we derive two Lie--Poisson integrators for \eqref{LP3Dquant} and demonstrate their preservation properties via numerical simulations.
\subsection{Lie--Poisson integrator for semidirect products}
One observes that the key structure that unifies all the versions of matrix MHD models, both on the sphere and on the torus, is the semidirect product Lie algebra structure. The Lie--Poisson time integrator for the MHD--Zeitlin system \eqref{qMHD} was developed in \cite{ModRoop} and has the following form:
\begin{equation}
\label{integratorforsemidirect}
    \begin{split}
      W_{n}
        &=\widetilde W-\frac{h}{2}[\widetilde W,\widetilde{P}]-\frac{h}{2}[\widetilde\Theta,\widetilde J]-\frac{h^{2}}{4}\left(\widetilde{P}\widetilde{W}\widetilde{P}+\widetilde J\widetilde\Theta\widetilde{P}+\widetilde{P}\widetilde\Theta\widetilde J\right),\\
      \Theta_{n}
        &= \widetilde\Theta-\frac{h}{2}[\widetilde\Theta,\widetilde{P}]-\frac{h^{2}}{4}\widetilde{P}\widetilde\Theta\widetilde{P},
    \\
      W_{n+1}
        &= W_{n}+h[\widetilde W,\widetilde P]+h[\widetilde\Theta,\widetilde J],
    \\
      \Theta_{n+1}
        &= \Theta_{n}+h[\widetilde\Theta,\widetilde{P}],
    \end{split}
\end{equation}
where $\widetilde W=\Delta_{N}\widetilde P$, $\widetilde J=\Delta_{N}\widetilde\Theta$.

Since the structure of the Lie algebra underlying the sine-MHD equations \eqref{sineMHD} is the same, the integrator \eqref{integratorforsemidirect} is also applicable to the reduced MHD equations on the torus $\mathbb{T}^{2}$. 

Furthermore, using that the structure of the axisymmetric sine-MHD equations \eqref{MHDaxisymmvortexT2matrix} suggests their splitting into a number of subsystems evolving on semidirect products, namely,
\begin{equation*}
\left\{
\begin{aligned}
&\dot W=[W,P]+[\Xi,J], \\
&\dot\Xi=[\Xi,P],
\end{aligned}
\right.
\quad
\left\{
\begin{aligned}
&\dot R=[R,P]+[\Xi,-\Sigma],\\
&\dot\Xi=[\Xi,P],
\end{aligned}
\right.
\quad
\left\{
\begin{aligned}
&\dot \Sigma=[\Sigma,P]+[\Xi,-R],\\
&\dot\Xi=[\Xi,P].
\end{aligned}
\right.
\end{equation*}
one can adopt the integrator \eqref{integratorforsemidirect} to the matrix equations \eqref{MHDaxisymmvortexT2matrix}:
\begin{equation}
\label{integratorforsemidirectT2}
\left\{
\begin{aligned}
      &W_{n}
        =\widetilde W-\frac{h}{2}[\widetilde W,\widetilde{P}]-\frac{h}{2}[\widetilde\Xi,\widetilde J]-\frac{h^{2}}{4}\left(\widetilde{P}\widetilde{W}\widetilde{P}+\widetilde J\widetilde\Xi\widetilde{P}+\widetilde{P}\widetilde\Xi\widetilde J\right),\\
      &R_{n}
        =\widetilde R-\frac{h}{2}[\widetilde R,\widetilde{P}]+\frac{h}{2}[\widetilde\Xi,\widetilde \Sigma]-\frac{h^{2}}{4}\left(\widetilde{P}\widetilde{R}\widetilde{P}-\widetilde \Sigma\widetilde\Xi\widetilde{P}-\widetilde{P}\widetilde\Xi\widetilde \Sigma\right),\\
    &\Sigma_{n}
        =\widetilde\Sigma-\frac{h}{2}[\widetilde \Sigma,\widetilde{P}]+\frac{h}{2}[\widetilde\Xi,\widetilde R]-\frac{h^{2}}{4}\left(\widetilde{P}\widetilde{\Sigma}\widetilde{P}-\widetilde {R}\widetilde\Xi\widetilde{P}-\widetilde {P}\widetilde\Xi\widetilde{R}\right),\\
      &\Xi_{n}
        = \widetilde\Xi-\frac{h}{2}[\widetilde\Xi,\widetilde{P}]-\frac{h^{2}}{4}\widetilde{P}\widetilde\Xi\widetilde{P},
   \end{aligned}
   \right.
\end{equation}
\begin{equation}
\label{integratorforsemidirectT22}
\left\{
   \begin{aligned}
      &W_{n+1}
        = W_{n}+h[\widetilde W,\widetilde P]+h[\widetilde\Xi,\widetilde J],
    \\
      &R_{n+1}
        = R_{n}+h[\widetilde R,\widetilde P]+h[\widetilde\Sigma,\widetilde \Xi],
    \\
      &\Sigma_{n+1}
        = \Sigma_{n}+h[\widetilde \Sigma,\widetilde P]+h[\widetilde R,\widetilde\Xi],
    \\
      &\Xi_{n+1}
        = \Xi_{n}+h[\widetilde\Xi,\widetilde{P}].
   \end{aligned}
   \right.
\end{equation}
Both time integration schemes \eqref{integratorforsemidirect} and \eqref{integratorforsemidirectT2}-\eqref{integratorforsemidirectT22} are Lie--Poisson integrators for the respective Lie--Poisson matrix systems \eqref{qMHD} and \eqref{sineMHD}.
\subsection{Lie--Poisson integrator for axisymmetric MHD--Zeitlin equations}
Let $\mathfrak{f}\subset\mathfrak{gl}(N,\mathbb{C})$ be a matrix Lie subalgebra defined by a $\mathcal{J}$-quadratic constraint:
\begin{equation}
\label{JquadraticConstr}
A\in\mathfrak{f}\Longleftrightarrow A^{\dag}\mathcal{J}+\mathcal{J}A=0,
\end{equation}
with $\mathcal{J}$ being such a matrix that $\mathcal{J}^{2}=cI_{N}$, where $I_{N}$ is the identity matrix, $c\in\mathbb{R}\setminus\left\{0\right\}$. A Lie subalgebra $\mathfrak{f}\subset\mathfrak{gl}(N,\mathbb{C})$ is called \textit{reductive}, if $[\mathfrak{f}^{\dag},\mathfrak{f}]\subset\mathfrak{f}$.

Consider a Hamiltonian isospectral flow on $\mathfrak{f}$
\begin{equation}
\label{isospectrHamilt}
\dot A=[A,B(A)],\quad A\in\mathfrak{f},
\end{equation}
where $B(A)=\nabla H(A)^{\dag}$ for some Hamiltonian function $H(A)$. The following result holds (see \cite{ModViv,Milo}):
\begin{theorem}[Modin,Viviani]
\label{thmMiloKlas}
Let $\mathfrak{f}\subset\mathfrak{gl}(N,\mathbb{C})$ be a reductive, $\mathcal{J}$-quadratic Lie subalgebra. Then, the map $\phi\colon\mathfrak{f}^{*}\to\mathfrak{f}^{*},\,\phi_{h}\colon A_{n}\mapsto A_{n+1}$ defined by the equations
\begin{equation}
\label{LPintegrMilo}
\begin{aligned}
A_{n}&=\left(I_{N}+\frac{h}{2}B(\widetilde A)\right)\widetilde A\left(I_{N}-\frac{h}{2}B(\widetilde A)\right)=\widetilde A-\frac{h}{2}[\widetilde A,B(\widetilde A)]-\frac{h^{2}}{4}B(\widetilde A)\widetilde A B(\widetilde A),\\
A_{n+1}&=\left(I_{N}-\frac{h}{2}B(\widetilde A)\right)\widetilde A\left(I_{N}+\frac{h}{2}B(\widetilde A)\right)=\widetilde A+\frac{h}{2}[\widetilde A,B(\widetilde A)]-\frac{h^{2}}{4}B(\widetilde A)\widetilde A B(\widetilde A),
\end{aligned}
\end{equation}
is a Lie--Poisson integrator for the flow \eqref{isospectrHamilt}. In particular, the integrator \eqref{LPintegrMilo} preserves the spectrum of the matrix $A$, or, equivalently,
\begin{equation}
\label{isospectralityMiloKlas}
\mathrm{tr}(A_{n}^{m})=\mathrm{tr}(A_{n+1}^{m}),\quad m=1,2,\ldots,N.
\end{equation}
\end{theorem}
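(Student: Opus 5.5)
The plan is to split the proof into two parts: isospectrality, which follows from a conjugation identity, and the Lie--Poisson (symplectic) property, which follows from discrete reduction. Set $\widetilde B=B(\widetilde A)$. For $h$ small, or whenever $\pm h/2$ is not an eigenvalue of $\widetilde B$, the Cayley-type factors $I_N\pm\frac{h}{2}\widetilde B$ are invertible, and they commute with one another. Denote $U_\pm=I_N\pm\frac h2\widetilde B$. The definition gives $A_n=U_+\widetilde A U_-$ and $A_{n+1}=U_-\widetilde A U_+$. Eliminating $\widetilde A$ yields
\[
A_{n+1}=U_-U_+^{-1}A_nU_-^{-1}U_+ = C A_n C^{-1},\qquad C=U_-U_+^{-1}.
\]
This uses only that $U_+$ and $U_-$ commute, being polynomials in $\widetilde B$. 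Hence $A_{n+1}$ is similar to $A_n$, and \eqref{isospectralityMiloKlas} follows immediately. I will also check that $A_{n+1}\in\mathfrak f$. From the $\mathcal J$-quadratic constraint \eqref{JquadraticConstr} together with $\widetilde B\in\mathfrak f$, which holds by reductivity since $B=\nabla H^\dagger$, we get $\mathcal J^{-1}U_\pm^\dagger\mathcal J=U_\mp$. So $C$ lies in the $\mathcal J$-quadratic group, and conjugation by $C$ preserves $\mathfrak f$. The expanded forms on the right of \eqref{LPintegrMilo} are a direct multiplication.

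For the symplectic, coadjoint-orbit-preserving property, I will follow the discrete Lie--Poisson reduction route already sketched in the text. First, lift the flow \eqref{isospectrHamilt} to a $G$-invariant Hamiltonian system on $T^*G$, where $G$ is the $\mathcal J$-quadratic group with Lie algebra $\mathfrak f$. Since $G$ is a quadratic Lie group, it embeds as a constrained submanifold of $\mathfrak{gl}(N,\mathbb C)$ cut out by the constraint $X^\dagger\mathcal J X=cI$. I would apply the implicit midpoint rule, or equivalently a RATTLE-type constrained symplectic scheme, to the lifted system $(X,Y)$ with the constraint. The midpoint rule preserves quadratic invariants, so the discrete step stays on $T^*G$. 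Because the scheme is symplectic and $G$-equivariant under right multiplication, it descends through the momentum map $\mu(X,Y)=X^{-1}Y$, projected to $\mathfrak f$ using reductivity and $\mathcal J^2=cI$. The result is a Poisson map on $\mathfrak f^*$ that maps each coadjoint orbit to itself symplectically.

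The key computation is to write the midpoint step with the midpoint momentum $\widetilde A=\mu(\widetilde X,\widetilde Y)$. From it I will show that the reduced relations are exactly $A_n=U_+\widetilde AU_-$ and $A_{n+1}=U_-\widetilde AU_+$. I expect this to be the main obstacle. It requires carefully handling the Lagrange multiplier from the constraint and verifying that the reductive and $\mathcal J$-quadratic hypotheses make the projected momentum land in $\mathfrak f$ rather than in all of $\mathfrak{gl}(N,\mathbb C)$. If this becomes unwieldy, my fallback is a direct approach. The conjugation $A_{n+1}=CA_nC^{-1}$ already gives orbit preservation. I would then verify symplecticity on the orbit by showing that $\phi_h$ is generated by a discrete Lie--Poisson generating function, in the manner of \cite{MarsPekSh}.
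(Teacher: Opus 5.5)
Your isospectrality argument is correct and is the standard one. Since $U_\pm=I_N\pm\frac h2\widetilde B$ commute, $A_{n+1}=CA_nC^{-1}$ with the Cayley factor $C=U_-U_+^{-1}$. The identity $\mathcal J^{-1}U_\pm^\dagger\mathcal J=U_\mp$ puts $C$ in $\{X\colon X^\dagger\mathcal JX=\mathcal J\}$, so $A_{n+1}\in\mathfrak f$ lies on the coadjoint orbit of $A_n$. Two minor slips: $U_\pm$ is singular when $\mp 2/h$, not $\pm h/2$, is an eigenvalue of $\widetilde B$; and the group constraint is $X^\dagger\mathcal JX=\mathcal J$, not $cI$.

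Orbit preservation is only the easy half, however. The Lie--Poisson property is symplecticity on the orbits, and there your proposal has a genuine gap. The paper takes the theorem from \cite{ModViv,Milo} and points to discrete Lie--Poisson reduction, which is your route. But you defer the decisive computation as ``the main obstacle'', and the set-up you describe would not produce \eqref{LPintegrMilo}:
\begin{itemize}
\item The implicit midpoint rule and RATTLE are not equivalent. A multiplier-constrained scheme such as RATTLE evaluates the vector field only at points with $q\in G$. In the actual reduction the midpoint stage lies off the group (compare \eqref{formulasAandAtilde}, where $\widetilde q^\dagger\widetilde q\neq I$), so a constrained scheme gives a different map.
\item $H(X^{-1}Y)$ produces lifted equations involving $X^{-\dagger}$, which are not linear in $(X,Y)$ for frozen $B$. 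The midpoint rule applied to them does not collapse to the Cayley-conjugation form.
\item $X^{-1}Y$ is equivariant, not invariant, under the right multiplication you want the method to commute with, so it cannot serve as the quotient map in your descent argument.
\end{itemize}

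The missing idea is to work unconstrained on the ambient linear space with the quadratic momentum map of the left action, which is invariant under right multiplication. For $\mathfrak{su}(N)$ this is $\mu(q,p)=\frac12(pq^\dagger-qp^\dagger)$, as in Section~\ref{sectdiscrLPreduction}. Then $H\circ\mu$ is defined on all of $\mathfrak{gl}(N,\mathbb C)^2$, and its Hamiltonian equations are $\dot q=-Bq$, $\dot p=-Bp$ with $B=B(\mu(q,p))$. The midpoint rule gives $q_n=U_+\widetilde q$, $p_n=U_+\widetilde p$, $q_{n+1}=U_-\widetilde q$, $p_{n+1}=U_-\widetilde p$, with $\widetilde B=B(\mu(\widetilde q,\widetilde p))$. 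Since $U_+^\dagger=U_-$, this immediately yields $\mu(q_n,p_n)=U_+\widetilde AU_-$ and $\mu(q_{n+1},p_{n+1})=U_-\widetilde AU_+$ with $\widetilde A=\mu(\widetilde q,\widetilde p)$, which is exactly \eqref{LPintegrMilo}.

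No multiplier is needed. As a symplectic Runge--Kutta method, the midpoint rule preserves the quadratic invariant $q^\dagger q$ (in general $q^\dagger\mathcal Jq$), so $q_n\in G$ implies $q_{n+1}\in G$, even though the stage $\widetilde q$ is off $G$. The reduced map is independent of the chosen lift of $A_n$: right multiplication $(q,p)\mapsto(qg,pg)$ and the shift $(q,p)\mapsto(q,p+qS)$, $S$ Hermitian, are linear symmetries of the lifted vector field, and Runge--Kutta methods commute with linear symmetries. Theorem~\ref{thmdescendLPintegr} then gives the Lie--Poisson property. Your fallback via a discrete generating function is too vague to close the gap on its own.
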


Let us introduce the Lie subalgebra $\mathfrak{f}\subset\mathfrak{gl}(4N,\mathbb{C})$ consisting of the following block lower triangular matrices:
\begin{equation}
\label{definitionofAB1}
A=\begin{pmatrix}
A_{1} & 0\\
A_{2} & A_{1}
\end{pmatrix}
,\quad
B(A)=\begin{pmatrix}
B_{1} & 0\\
B_{2} & B_{1}
\end{pmatrix}
,
\end{equation}
where the blocks $A_{1},A_{2}$, and $B_{1},B_{2}$ have the following structure:
\begin{equation}
\label{definitionofAB2}
A_{1}=\begin{pmatrix}
\Xi & 0\\
-Q & \Xi
\end{pmatrix}
,\quad
A_{2}=\begin{pmatrix}
P & 0\\
W & P
\end{pmatrix}
,\quad
B_{1}=\begin{pmatrix}
a & 0\\
b & a
\end{pmatrix}
,\quad
B_{2}=\begin{pmatrix}
c & 0\\
d & c
\end{pmatrix}
,
\end{equation}
with $a=\Psi$, $b=-2\Xi-P$, $c=-2\Psi+Q$, $d=J$. By direct computations, one arrives at the following result:
\begin{proposition}
The system of axisymmetric MHD--Zeitlin equations \eqref{LP3Dquant} is given by the isospectral flow \eqref{isospectrHamilt} with the matrices $A$ and $B(A)$ defined in \eqref{definitionofAB1}-\eqref{definitionofAB2}.
\end{proposition}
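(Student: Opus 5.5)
The plan is to verify the identity $\dot A=[A,B(A)]$ block by block, using the fact that both nested block-triangular structures in \eqref{definitionofAB1}--\eqref{definitionofAB2} are ``dual-number'' structures. A matrix of the form $\begin{pmatrix} X & 0\\ Y & X\end{pmatrix}$ can be written as $X+\varepsilon Y$ with $\varepsilon^{2}=0$, and $\varepsilon$ commutes with all blocks. Products and commutators of such matrices therefore obey
\begin{equation*}
[X_{1}+\varepsilon Y_{1},X_{2}+\varepsilon Y_{2}]=[X_{1},X_{2}]+\varepsilon\left([X_{1},Y_{2}]+[Y_{1},X_{2}]\right).
\end{equation*}
Applying this twice, with an inner nilpotent $\eta$ for the $2N\times 2N$ blocks and an outer nilpotent $\varepsilon$ for the $4N\times 4N$ matrix, we get
\begin{equation*}
A=\Xi-\eta Q+\varepsilon P+\varepsilon\eta W,\qquad B(A)=\Psi+\eta b+\varepsilon c+\varepsilon\eta d.
\end{equation*}
Here $\eta,\varepsilon$ commute and $\eta^{2}=\varepsilon^{2}=0$. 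I would first record that this identification is a faithful algebra isomorphism onto the lower triangular block matrices, which is immediate from block multiplication. This also shows that the set of such $A$ is closed under commutators, so the flow stays in the prescribed block form.

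Next, I would expand $[A,B(A)]$ in the basis $1,\eta,\varepsilon,\varepsilon\eta$. With time rescaled as in Section~\ref{sec5} so that the factor $1/\hbar$ is absorbed, the coefficients should be:
\begin{itemize}
\item[$\bullet$] coefficient of $1$: $[\Xi,\Psi]$;
\item[$\bullet$] coefficient of $\eta$: $[\Xi,b]-[Q,\Psi]$;
\item[$\bullet$] coefficient of $\varepsilon$: $[\Xi,c]+[P,\Psi]$;
\item[$\bullet$] coefficient of $\varepsilon\eta$: $[\Xi,d]-[Q,c]+[P,b]+[W,\Psi]$.
\end{itemize}
Comparing these with the coefficients of $\dot A=\dot\Xi-\eta\dot Q+\varepsilon\dot P+\varepsilon\eta\dot W$ reduces the proposition to four matrix identities.

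The final step is to substitute $a=\Psi$, $b=-2\Xi-P$, $c=Q-2\Psi$, $d=J$ and simplify, using only $[X,X]=0$ and antisymmetry.
\begin{itemize}
\item[$\bullet$] The $1$-component gives $\dot\Xi=[\Xi,\Psi]$.
\item[$\bullet$] The $\eta$-component gives $-\dot Q=-[\Xi,P]-[Q,\Psi]$, because $[\Xi,-2\Xi]=0$.
\item[$\bullet$] The $\varepsilon$-component gives $\dot P=[P,\Psi]+[\Xi,Q]+2[\Psi,\Xi]$.
\item[$\bullet$] The $\varepsilon\eta$-component gives $\dot W=[W,\Psi]+[\Xi,J]+2[Q,\Psi]+2[\Xi,P]$. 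This uses $[Q,Q]=[P,P]=0$ and $-[Q,-2\Psi]=2[Q,\Psi]$.
\end{itemize}
These are exactly the four equations in \eqref{LP3Dquant}. The choice of $a,b,c,d$ matches the choice of $\alpha,\beta,\gamma,\delta$ made before the second central theorem, which serves as a consistency check against \eqref{matrixequationforPsigma}--\eqref{rhoxieulerarnoldmatrix}.

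The main obstacle I expect is bookkeeping rather than anything conceptual. The signs (the $-Q$ entry) and the order of the nilpotent factors must be handled correctly; choosing commuting $\eta$ and $\varepsilon$ is what makes the $\varepsilon\eta$ and $\eta\varepsilon$ cross terms combine without extra signs. If the reader also wants the Hamiltonian claim $B(A)=\nabla H(A)^{\dag}$, I would additionally check that, under the trace pairing on this block algebra, the gradient of \eqref{energyreduced3Dmatrix} reproduces $a,b,c,d$ up to the normalization $2\pi/N$. This amounts to differentiating each quadratic term and reading off which block it pairs with, and I would treat it as a remark rather than part of the proposition.
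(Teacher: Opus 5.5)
Your proof is correct and is the same argument as the paper's, which only says the proposition follows ``by direct computations''. The commuting nilpotents $\eta,\varepsilon$ just organize that block multiplication, and your four coefficient identities, after substituting $a=\Psi$, $b=-2\Xi-P$, $c=Q-2\Psi$, $d=J$, reproduce \eqref{LP3Dquant} term by term (before substitution they match \eqref{matrixequationforPsigma}--\eqref{rhoxieulerarnoldmatrix}).

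One caution about your closing remark, which you rightly kept out of the proof. Under the cross pairing $W\leftrightarrow a$, $P\leftrightarrow b$, $-Q\leftrightarrow c$, $\Xi\leftrightarrow d$, the gradient of \eqref{energyreduced3Dmatrix} is $\tfrac{4\pi}{N}(\Psi,-P,Q,J)$. The extra terms $-2\Xi$ in $b$ and $-2\Psi$ in $c$ make $A\mapsto B(A)$ non-self-adjoint, so $B(A)$ is not a gradient of $H^{N}$, or of any function, and that check would fail.
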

By applying the scheme \eqref{LPintegrMilo} to the matrices \eqref{definitionofAB1}-\eqref{definitionofAB2}, one gets the following integrator for equations \eqref{LP3Dquant}:
\begin{subequations}
  \label{integrationMethodFor3DMHD}
  \begin{align}
    \begin{split}
      W_{n}
        &= \begin{aligned}[t]
            \widetilde W&-\frac{h}{2}[\widetilde W,\widetilde a]-\frac{h}{2}[\widetilde P,\widetilde b]-\frac{h}{2}[\widetilde \Xi,\widetilde d]-\frac{h}{2}[\widetilde c,\widetilde Q]
           \\&
             -\frac{h^{2}}{4}\left(\widetilde d\,\widetilde\Xi\,\widetilde a-\widetilde c\,\widetilde Q\,\widetilde a+\widetilde c\,\widetilde \Xi\,\widetilde b+\widetilde b\,\widetilde P\,\widetilde a+\widetilde a\,\widetilde W\,\widetilde a+\widetilde a\,\widetilde P\,\widetilde b+\widetilde b\,\widetilde \Xi\,\widetilde c-\widetilde a\,\widetilde Q\,\widetilde c+\widetilde a\,\widetilde \Xi\,\widetilde d\right)
           \end{aligned}
    \end{split}
    \label{eqWn3DMHD}
    \\
    \begin{split}
      P_{n}
        &= \widetilde P-\frac{h}{2}[\widetilde P,\widetilde a]-\frac{h}{2}[\widetilde\Xi,\widetilde c]-\frac{h^{2}}{4}\left(\widetilde a\,\widetilde P\,\widetilde a+\widetilde c\,\widetilde \Xi\,\widetilde a+\widetilde a\,\widetilde \Xi\,\widetilde c\right)
    \end{split}
    \label{eqPn3DMHD} 
    \\
    \begin{split}
      Q_{n}
        &= \widetilde Q-\frac{h}{2}[\widetilde Q,\widetilde a]-\frac{h}{2}[\widetilde b,\widetilde\Xi]-\frac{h^{2}}{4}\left(\widetilde a\,\widetilde Q\,\widetilde a-\widetilde b\,\widetilde \Xi\,\widetilde a-\widetilde a\,\widetilde \Xi\,\widetilde b\right)
    \end{split}
    \label{eqQn3DMHD}
    \\
    \begin{split}
      \Xi_{n}
        &= \widetilde \Xi-\frac{h}{2}[\widetilde \Xi,\widetilde a]-\frac{h^{2}}{4}\widetilde a\,\widetilde \Xi\,\widetilde a,
    \end{split}
    \label{eqSigman3DMHD}
    \\
    \begin{split}
      W_{n+1}
        &= W_{n}+h[\widetilde W,\widetilde a]+h[\widetilde P,\widetilde b]+h[\widetilde\Xi,\widetilde d]+h[\widetilde c,\widetilde Q],
    \end{split}
    \label{eqWnplus13DMHD}
    \\
    \begin{split}
      P_{n+1}
        &= P_{n}+h[\widetilde P,\widetilde a]+h[\widetilde \Xi,\widetilde c],
    \end{split}
    \label{eqPnplus13DMHD}
    \\
    \begin{split}
      Q_{n+1}
        &= Q_{n}+h[\widetilde Q,\widetilde a]+h[\widetilde b,\widetilde\Xi],
    \end{split}
    \label{eqQnplus13DMHD}
    \\
    \begin{split}
      \Xi_{n+1}
        &= \Xi_{n}+h[\widetilde \Xi,\widetilde a],
    \end{split}
    \label{eqSigmanplus13DMHD}
  \end{align}
\end{subequations}
where $\widetilde a=\widetilde\Psi$, $\widetilde b=-2\widetilde\Xi-\widetilde P$, $\widetilde c=-2\widetilde\Psi+\widetilde Q$, $\widetilde d=\Delta_{N}\widetilde\Xi$, and $\widetilde\Psi=\Delta_{N}^{-1}(\widetilde W)$.

One can conclude that the integrator given by \eqref{integrationMethodFor3DMHD} constitutes a Lie--Poisson integrator for \eqref{LP3Dquant} provided that the conditions of Theorem~\ref{thmMiloKlas} are satisfied. Namely, that the Lie subalgebra $\mathfrak{f}\subset\mathfrak{gl}(4N,\mathbb{C})$ is reductive and $\mathcal{J}$-quadratic for some $\mathcal{J}\in\mathfrak{gl}(4N,\mathbb{C})$.
\begin{lemma}
\label{lemJquadr}
Let $\mathfrak{g}\subset\mathfrak{gl}(N,\mathbb{C})$ be $\mathcal{J}$-quadratic. Then, a Lie algebra $\widetilde{\mathfrak{g}}\subset\mathfrak{gl}(2N,\mathbb{C})$ consisting of block matrices of the form \eqref{definitionofAB2} is a Lie subalgebra of a $\widetilde{\mathcal{J}}$-quadratic Lie algebra with 
\begin{equation}
\label{defofwidetildeJ}
\widetilde{\mathcal{J}}=\begin{pmatrix}
0 & \mathcal{J}\\
\mathcal{J} & 0
\end{pmatrix}
.
\end{equation}
\end{lemma}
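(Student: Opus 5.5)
We need to show that every block matrix $\widetilde A=\begin{pmatrix} X & 0\\ Y & X\end{pmatrix}$ with $X,Y\in\mathfrak{g}$ satisfies $\widetilde A^{\dag}\widetilde{\mathcal{J}}+\widetilde{\mathcal{J}}\widetilde A=0$. We also need $\widetilde{\mathcal{J}}^{2}=cI_{2N}$. The $\widetilde{\mathcal{J}}$-quadratic Lie algebra is then $\{\widetilde A\in\mathfrak{gl}(2N,\mathbb{C}) : \widetilde A^{\dag}\widetilde{\mathcal{J}}+\widetilde{\mathcal{J}}\widetilde A=0\}$, and $\widetilde{\mathfrak g}$ sits inside it. Finally, $\widetilde{\mathfrak g}$ must be closed under commutators.

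The plan is a direct block computation, in three steps.

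\textbf{Step 1.} Since $\mathcal{J}^{2}=cI_{N}$, we get
\begin{equation*}
\widetilde{\mathcal{J}}^{2}=\begin{pmatrix}\mathcal{J}^{2}&0\\0&\mathcal{J}^{2}\end{pmatrix}=cI_{2N}.
\end{equation*}
So $\widetilde{\mathcal{J}}$ is admissible in the sense of \eqref{JquadraticConstr}. Standard reasoning shows that $\{\widetilde A:\widetilde A^{\dag}\widetilde{\mathcal{J}}+\widetilde{\mathcal{J}}\widetilde A=0\}$ is a Lie algebra: if $A,B$ satisfy the constraint, then $[A,B]^{\dag}\widetilde{\mathcal J}=-\widetilde{\mathcal J}[A,B]$, by moving $\widetilde{\mathcal J}$ through.

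\textbf{Step 2.} Here we verify the constraint on $\widetilde{\mathfrak g}$. We have $\widetilde A^{\dag}=\begin{pmatrix}X^{\dag}&Y^{\dag}\\0&X^{\dag}\end{pmatrix}$. Multiplying out,
\begin{equation*}
\widetilde A^{\dag}\widetilde{\mathcal J}=\begin{pmatrix}Y^{\dag}\mathcal J & X^{\dag}\mathcal J\\ X^{\dag}\mathcal J&0\end{pmatrix},\qquad
\widetilde{\mathcal J}\widetilde A=\begin{pmatrix}\mathcal J Y&\mathcal J X\\ \mathcal J X&0\end{pmatrix}.
\end{equation*}
The off-diagonal blocks of the sum are $X^{\dag}\mathcal J+\mathcal J X=0$, because $X\in\mathfrak g$. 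The $(1,1)$ block is $Y^{\dag}\mathcal J+\mathcal J Y=0$, because $Y\in\mathfrak g$. The $(2,2)$ block is zero outright.

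\textbf{Step 3.} We check that $\widetilde{\mathfrak g}$ is closed under commutators:
\begin{equation*}
\left[\begin{pmatrix}X&0\\Y&X\end{pmatrix},\begin{pmatrix}X'&0\\Y'&X'\end{pmatrix}\right]=\begin{pmatrix}[X,X']&0\\ [Y,X']+[X,Y']&[X,X']\end{pmatrix}.
\end{equation*}
Both entries lie in $\mathfrak g$, since $\mathfrak g$ is a Lie algebra. Hence $\widetilde{\mathfrak g}$ is a Lie subalgebra of the $\widetilde{\mathcal J}$-quadratic algebra.

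The only point needing care is the block ordering in Step 2. The antidiagonal $\widetilde{\mathcal J}$ swaps rows and columns, so $Y$ lands in the $(1,1)$ slot and the lower-right block vanishes. One should double-check that the conjugate-transpose places $Y^{\dag}$ in the upper-right block before multiplying. Nothing else is an obstacle.

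The same argument can then be iterated for the $4N$ block algebra \eqref{definitionofAB1}. This is how the lemma will be used with Theorem~\ref{thmMiloKlas}.
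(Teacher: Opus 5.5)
Your proof is correct and follows the paper's argument: the same block computation gives $\widetilde{\mathcal J}^{2}=cI_{2N}$ and reduces $\widetilde A^{\dag}\widetilde{\mathcal J}+\widetilde{\mathcal J}\widetilde A$ to the blocks $Y^{\dag}\mathcal J+\mathcal J Y$, $X^{\dag}\mathcal J+\mathcal J X$ (twice) and $0$, and your commutator-closure checks are harmless extras that the paper takes for granted. One caveat on your closing remark: the paper stresses that $\widetilde{\mathfrak g}$ and the $4N\times 4N$ block algebra are only contained in the quadratic constraint set, not cut out by it, and that the latter is not reductive, so this lemma does \emph{not} license applying Theorem~\ref{thmMiloKlas}.
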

\begin{proof}
Let $c,d\in\mathfrak{g}$, and let $B\in\widetilde{\mathfrak{g}}$, i.e.
\begin{equation*}
B=\begin{pmatrix}
c & 0\\
d & c
\end{pmatrix}
.
\end{equation*}
First, we observe that $\widetilde{\mathcal{J}}^{2}=cI_{2N}$, because $\mathcal{J}^{2}=cI_{N}$. We need to check that $B^{\dag}\widetilde{\mathcal{J}}+\widetilde{\mathcal{J}}B=0$. 
\begin{equation*}
B^{\dag}\widetilde{\mathcal{J}}+\widetilde{\mathcal{J}}B=\begin{pmatrix}
c^{\dag} & d^{\dag}\\
0 & c^{\dag}
\end{pmatrix}
\begin{pmatrix}
0 & \mathcal{J}\\
\mathcal{J} & 0
\end{pmatrix}+\begin{pmatrix}
0 & \mathcal{J}\\
\mathcal{J} & 0
\end{pmatrix}
\begin{pmatrix}
c & 0\\
d & c
\end{pmatrix}
=\begin{pmatrix}
d^{\dag}\mathcal{J}+\mathcal{J}d & c^{\dag}\mathcal{J}+\mathcal{J}c\\
c^{\dag}\mathcal{J}+\mathcal{J}c & 0
\end{pmatrix}
=0,
\end{equation*}
which follows from \eqref{JquadraticConstr}, because $c,d$ are elements of a $\mathcal{J}$-quadratic Lie algebra by assumption.
\end{proof}
Clearly, $\mathfrak{g}=\mathfrak{su}(N)$ is $\mathcal{J}$-quadratic for $\mathcal{J}=I_{N}$. However, we note that the Lie algebra $\widetilde{\mathfrak{g}}\subset\mathfrak{gl}(2N,\mathbb{C})$ in Lemma~\ref{lemJquadr} is not defined by the $\widetilde{\mathcal{J}}$-quadratic constraint. Indeed, the $\widetilde{\mathcal{J}}$-quadratic Lie algebra with $\widetilde{\mathcal{J}}$ defined in \eqref{defofwidetildeJ} contains also matrices of the form \eqref{definitionofAB2} with the upper right block not being trivial. In other words, only a direct implication $\Rightarrow$ in \eqref{JquadraticConstr} is fulfilled, but not the opposite one $\Leftarrow$, in general.
\begin{proposition}
\label{proptildetildeJquadr}
Let $\mathfrak{g}\subset\mathfrak{gl}(N,\mathbb{C})$ be $\mathcal{J}$-quadratic. Then, a Lie algebra $\mathfrak{f}\subset\mathfrak{gl}(4N,\mathbb{C})$ consisting of block matrices of the form \eqref{definitionofAB1}-\eqref{definitionofAB2} is a Lie subalgebra of a $\mathbf{J}$-quadratic Lie algebra with 
\begin{equation*}
\mathbf{J}=\begin{pmatrix}
0 & \widetilde{\mathcal{J}}\\
\widetilde{\mathcal{J}} & 0
\end{pmatrix}
,
\end{equation*}
with $\widetilde{\mathcal{J}}$ defined in \eqref{defofwidetildeJ}.
\end{proposition}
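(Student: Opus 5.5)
The plan is to apply Lemma~\ref{lemJquadr} twice, once at each level of the block structure. First, Lemma~\ref{lemJquadr} applied to $\mathfrak{g}$ and $\mathcal{J}$ shows that the blocks $A_{1},A_{2},B_{1},B_{2}$ in \eqref{definitionofAB2} satisfy $X^{\dag}\widetilde{\mathcal{J}}+\widetilde{\mathcal{J}}X=0$. The same lemma also gives $\widetilde{\mathcal{J}}^{2}=cI_{2N}$. So we may work with the $\widetilde{\mathcal{J}}$-quadratic Lie algebra
\begin{equation*}
\widetilde{\mathfrak{h}}=\left\{X\in\mathfrak{gl}(2N,\mathbb{C})\mid X^{\dag}\widetilde{\mathcal{J}}+\widetilde{\mathcal{J}}X=0\right\},
\end{equation*}
which contains the algebra $\widetilde{\mathfrak{g}}$ of Lemma~\ref{lemJquadr}.

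Next, we repeat the computation in the proof of Lemma~\ref{lemJquadr} verbatim, now with $\mathfrak{g}$ replaced by $\widetilde{\mathfrak{h}}$ and $\mathcal{J}$ replaced by $\widetilde{\mathcal{J}}$. The proof of that lemma only used two facts: $\mathcal{J}^{2}=cI$, and the fact that the diagonal and off-diagonal blocks satisfy the quadratic constraint. It never used the specific form of $\mathfrak{g}$. Hence for $A=\begin{pmatrix}A_{1}&0\\A_{2}&A_{1}\end{pmatrix}$ with $A_{1},A_{2}\in\widetilde{\mathfrak{h}}$, the same block multiplication gives
\begin{equation*}
A^{\dag}\mathbf{J}+\mathbf{J}A=\begin{pmatrix}A_{2}^{\dag}\widetilde{\mathcal{J}}+\widetilde{\mathcal{J}}A_{2} & A_{1}^{\dag}\widetilde{\mathcal{J}}+\widetilde{\mathcal{J}}A_{1}\\ A_{1}^{\dag}\widetilde{\mathcal{J}}+\widetilde{\mathcal{J}}A_{1} & 0\end{pmatrix}=0.
\end{equation*}
We also have $\mathbf{J}^{2}=\mathrm{diag}(\widetilde{\mathcal{J}}^{2},\widetilde{\mathcal{J}}^{2})=cI_{4N}$. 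Thus every element of $\mathfrak{f}$ lies in the $\mathbf{J}$-quadratic Lie algebra $\{X\mid X^{\dag}\mathbf{J}+\mathbf{J}X=0\}$.

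To conclude that $\mathfrak{f}$ is a Lie subalgebra, it remains to check closure of $\mathfrak{f}$ under the commutator. Matrices of the form $\begin{pmatrix}x&0\\y&x\end{pmatrix}$ are closed under commutators, since
\begin{equation*}
\left[\begin{pmatrix}x&0\\y&x\end{pmatrix},\begin{pmatrix}x'&0\\y'&x'\end{pmatrix}\right]=\begin{pmatrix}[x,x']&0\\ [y,x']+[x,y'] & [x,x']\end{pmatrix}.
\end{equation*}
Applying this at both levels shows that $\mathfrak{f}$ is closed, given that $\mathfrak{g}$ is closed.

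The main obstacle is conceptual rather than computational. As the remark after Lemma~\ref{lemJquadr} notes, the inner block algebra is only contained in, not equal to, a $\widetilde{\mathcal{J}}$-quadratic algebra. The recursion must therefore be run on the full algebra $\widetilde{\mathfrak{h}}$ rather than on $\widetilde{\mathfrak{g}}$, so that the hypothesis of Lemma~\ref{lemJquadr} (a genuinely $\widetilde{\mathcal{J}}$-quadratic algebra) holds at the second step. The inclusion $\mathfrak{f}\subset\{X\mid X^{\dag}\mathbf{J}+\mathbf{J}X=0\}$ then follows.
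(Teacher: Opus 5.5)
Your proof is correct and follows essentially the paper's route. Lemma~\ref{lemJquadr} gives the $\widetilde{\mathcal{J}}$-constraint for the inner blocks $A_{1},A_{2}$, and the same block computation with $\mathbf{J}$ then gives $A^{\dag}\mathbf{J}+\mathbf{J}A=0$ and $\mathbf{J}^{2}=cI_{4N}$. Your passage to the full $\widetilde{\mathcal{J}}$-quadratic algebra $\widetilde{\mathfrak{h}}$ and the explicit closure check are sound extra care that the paper leaves implicit: it uses only the forward implication of the constraint and takes $\mathfrak{f}$ to be a Lie algebra as given.
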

The proof of Proposition~\ref{proptildetildeJquadr} is analogous to the proof of Lemma~\ref{lemJquadr}, and the results of Lemma~\ref{lemJquadr} are used for the corresponding blocks in the matrices in $\mathfrak{f}$, in analogy with the blocks $c,d$. 

By the same arguments, the Lie subalgebra $\mathfrak{f}\subset\mathfrak{gl}(4N,\mathbb{C})$ is not defined by the $\mathbf{J}$-quadratic constraint, but rather is a subalgebra of a larger $\mathbf{J}$-quadratic Lie algebra. This reasoning is already sufficient to state that the results of Theorem~\ref{thmMiloKlas} cannot be used to justify the Lie--Poisson property of the numerical scheme \eqref{integrationMethodFor3DMHD} when applied to \eqref{LP3Dquant}. In particular, it does not follow from Theorem~\ref{thmMiloKlas} that the integrator \eqref{integrationMethodFor3DMHD} preserves the specific structure of matrices defined in \eqref{definitionofAB1}-\eqref{definitionofAB2}. Furthermore, the Lie subalgebra $\mathfrak{f}\subset\mathfrak{gl}(4N,\mathbb{C})$ is not reductive, i.e., the condition $[\mathfrak{f}^{\dag},\mathfrak{f}]\subset\mathfrak{f}$ does not hold.

Even if the conditions for the Lie algebra $\mathfrak{f}\subset\mathfrak{gl}(N,\mathbb{C})$ in Theorem~\ref{thmMiloKlas} to be reductive and $\mathcal{J}$-quadratic, happen to be sufficient but not necessary, and the integrator \eqref{integrationMethodFor3DMHD} still preserves the structure defined by equations \eqref{definitionofAB1}-\eqref{definitionofAB2}, it is not clear how the isospectral property \eqref{isospectralityMiloKlas} of the scheme \eqref{LPintegrMilo} implies the preservation of Casimirs $\mathcal{I}^{N}_{m}$, $J_{m}^{N}$, and $K_{m}^{N}$. Indeed, taking the traces of powers of the matrix $A$ defined by \eqref{definitionofAB1}-\eqref{definitionofAB2}, we deduce the preservation of the quantities $\mathrm{tr}(\Xi^{m})$, and therefore of the spectrum of the matrix $\Xi$, but not the preservation of other Casimirs. This discussion drives us to the following conclusion: the results of Theorem~\ref{thmMiloKlas} and the derivation of the scheme \eqref{integrationMethodFor3DMHD} from \eqref{LPintegrMilo} cannot be used to prove that \eqref{integrationMethodFor3DMHD} is a Lie--Poisson integrator for \eqref{LP3Dquant}. 

The conservation properties of the scheme \eqref{integrationMethodFor3DMHD} can potentially be explained in the framework of a more recent theory developed in \cite{VivianiStern2026}. Namely, the work \cite{VivianiStern2026} addresses the problem of reduction of symplectic integrators on a space $Y$ to Lie--Poisson integrators on a space $Z$ under the action of a quadratic map $F\colon Y\to Z$. Computations along the same lines as in Section 4.1 of \cite{VivianiStern2026} seem promising in proving the Lie--Poisson property of the scheme \eqref{integrationMethodFor3DMHD}, but remain outside of the scope of the present paper. Alternatively, in the forthcoming Section~\ref{sectdiscrLPreduction} we derive a different integrator for \eqref{LP3Dquant}, and its derivation lies outside the setting developed in \cite{VivianiStern2026}.

As we see further, numerical simulations suggest that the method \eqref{integrationMethodFor3DMHD} indeed has all the desired preservation properties listed in Definition~\ref{definitionofLPintegr}.
\subsection{Numerical simulations}
\label{sectnumersim1}
Here, we demonstrate the preservation properties of the integrator \eqref{integrationMethodFor3DMHD}. The integrator \eqref{integrationMethodFor3DMHD} is an implicit scheme and requires finding the intermediate variables $(\widetilde{W},\widetilde{P},\widetilde{Q},\widetilde{\Xi})$. This is achieved by solving equations \eqref{eqWn3DMHD}-\eqref{eqSigman3DMHD} given the state $(W_{n},P_{n},Q_{n},\Xi_{n})$ at the time instance $t_{n}$. Further, the updated state at the moment $t_{n+1}$ is calculated by means of \eqref{eqWnplus13DMHD}-\eqref{eqSigmanplus13DMHD}. The integrator \eqref{integrationMethodFor3DMHD} preserves the Casimirs exactly under the condition that the intermediate state $(\widetilde{W},\widetilde{P},\widetilde{Q},\widetilde{\Xi})$ is found exactly. This is, however, impossible to achieve using the root finding algorithms with a finite tolerance. The scheme was implemented\footnote{Implementation of the methods in Python is available at \url{https://github.com/michaelroop96/axisymmetricMHD.git}} with fixed point iterations, and the tolerance was set to $10^{-14}$. The simulations were run with $N=5$, as the purpose of the present paper is to develop the methods and show their preservation properties, which are independent of the matrix dimension, rather than looking into the actual dynamics. For the actual simulations of turbulence, much higher spatial resolution would be needed. The initial condition $(W_{0},P_{0},Q_{0},\Xi_{0})$ is a quadruple of randomly generated $\mathfrak{su}(N)$ matrices. The final time of simulation is $T=500$, and the step-size is $h=0.01$.

Finding the eigenbasis $T_{l,m}^{N}$ of the Hoppe--Yau Laplacian given by \eqref{definitionTLMwigner} requires calculating the Wigner 3j-symbols, which is computationally costly. Since the operator $\Delta_{N}$ preserves the space of sparse matrices that have only $(\pm m)$-off non-trivial diagonals, one can reformulate the problem of finding the eigenbasis $T_{l,m}^{N}$ as a collection of $N$ tridiagonal $\mathcal{O}(N-|m|)$-dimensional eigenvalue problems, which gives the complexity of finding the entire basis as $\mathcal{O}(N^{2})$, and therefore solving the matrix Poisson equation $\widetilde W=\Delta_{N}\widetilde{\Psi}$, which is needed at every time iteration of the method \eqref{integrationMethodFor3DMHD}, can be performed in $\mathcal{O}(N^{2})$ operations. Since the matrix multiplication requires $\mathcal{O}(N^{3})$ operations, we conclude that the total computational complexity of the method \eqref{integrationMethodFor3DMHD} is $\mathcal{O}(N^{3})$ per time iteration. For detailed analysis, we refer to \cite{CiViMo2023}.

In Fig.~\ref{preservationofspec}, one observes the preservation of the spectrum of the matrix $\Xi$.
\begin{figure}[ht!]
   \includegraphics{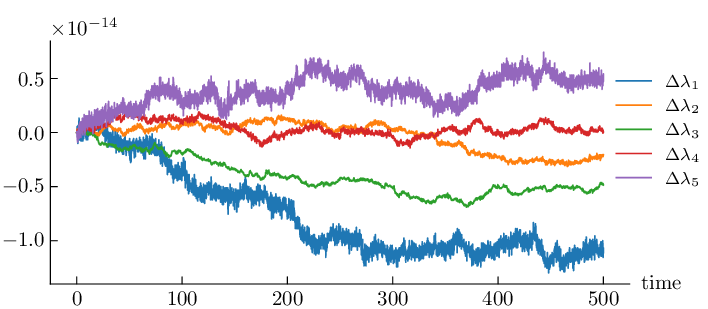}
\caption{Preservation of the eigenvalues $\lambda_{1},\ldots,\lambda_{5}$ of the matrix $\Xi\in\mathfrak{su}(N)$ with $N=5$. The step-size is $h=0.01$. The magnitude $10^{-14}$ of the variation $\Delta\lambda_{i}=\lambda_{i}(t)-\lambda_{i}(0)$ of the eigenvalues shows that the spectrum of $\Xi$ is preserved up to machine precision. 
}
\label{preservationofspec}
\end{figure}

Further, Fig.~\ref{preservationofcasimirs} shows the preservation of the Casimirs $K_{m}^{N}$ and $J_{m}^{N}$. One curious observation is that the Casimirs with an odd number of multipliers under the trace sign are preserved with a somewhat better accuracy.
\begin{figure}[ht!]
   \includegraphics{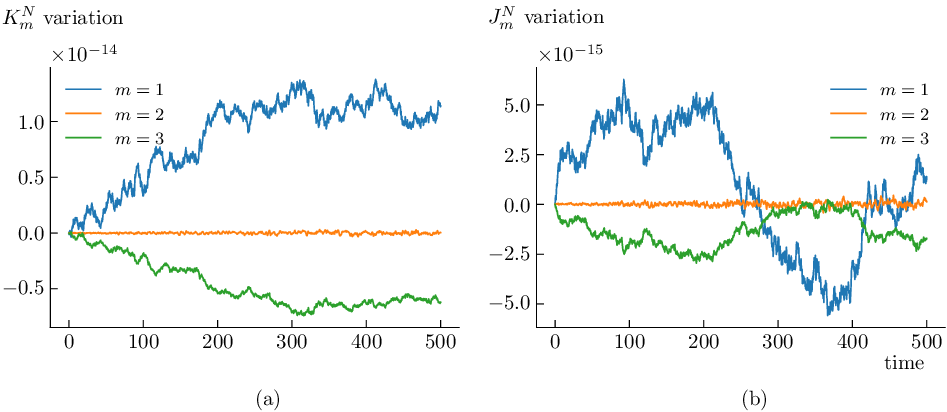}
\caption{Preservation of the Casimirs $K_{m}^{N}$ (a) and $J_{m}^{N}$ (b) for $N=5$ and $m=1,2,3$. The step-size is $h=0.01$. Casimirs $K_{2}^{N}$ and $J_{2}^{N}$ are preserved with a higher accuracy compared to $K_{3}^{N}$ and $J_{3}^{N}$. The magnitude $10^{-14}$ of the variation of the Casimirs shows preservation up to machine precision. 
}
\label{preservationofcasimirs}
\end{figure}

Finally, in Fig.~\ref{preservationofcrosshelandHam}, we show the preservation of the cross-helicity Casimir $\mathcal{I}^{N}_{m}$ and the Hamiltonian $H^{N}$. The cross-helicity $\mathcal{I}^{N}_{m}$ is preserved up to machine precision, as are all the other Casimirs. However, the Hamiltonian function $H^{N}$ is preserved up to the order $10^{-5}$, which indicates only near preservation of the Hamiltonian in the sense of backward error analysis \cite{HaiLubWan}.
\begin{figure}[ht!]
   \includegraphics{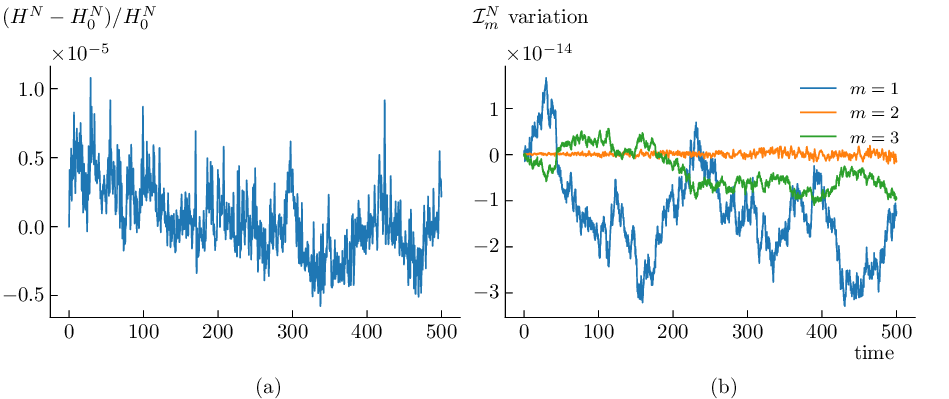}
\caption{Preservation of the Hamiltonian $H^{N}$ (a) and the cross-helicity Casimir $\mathcal{I}^{N}_{m}$ (b) for $N=5$. The step-size is $h=0.01$. The Casimir $\mathcal{I}^{N}_{m}$ is preserved up to machine precision, which is indicated by the variation magnitude $10^{-14}$. The magnitude $10^{-5}$ in the relative error of the Hamiltonian indicates its near preservation.
}
\label{preservationofcrosshelandHam}
\end{figure}
\subsection{Discrete Lie--Poisson reduction}
\label{sectdiscrLPreduction}
In this section, we derive a different numerical integration scheme for equations \eqref{LP3Dquant} that preserves the underlying Lie--Poisson structure. The method emerges from a \textit{discrete Lie--Poisson reduction} developed for Lie algebras of the type  $\mathfrak{F}=(\mathfrak{su}(N)\ltimes\mathfrak{su}(N)^{*})\ltimes(\mathfrak{su}(N)\ltimes\mathfrak{su}(N)^{*})^{*}$. 

Many approaches of constructing Poisson integrators for (Lie-)Poisson systems are based on a connection of a given (Lie-)Poisson system to a certain canonical Hamiltonian system that can be integrated by a symplectic integrator. For Lie--Poisson systems, this connection is established via the \textit{momentum map}. Starting with a $\mathcal{F}$-invariant Hamiltonian system on $T^{*}\mathcal{F}$, one can reduce the dynamics to the dual algebra $\mathfrak{F}^{*}$ of the corresponding Lie group $\mathcal{F}$. The projection map $\mu\colon T^{*}\mathcal{F}\to\mathfrak{F}^{*}\simeq T^{*}\mathcal{F}/\mathcal{F}$ is called the momentum map. The inverse process called \textit{Lie--Poisson reconstruction} allows for mapping the given Lie--Poisson system to a right-invariant Hamiltonian system by means of the momentum map associated with the lifted left action of $\mathcal{F}$ on $T^{*}\mathcal{F}$. Then, a right-invariant symplectic integrator applied to the reconstructed Hamiltonian system descends to a Lie--Poisson integrator in the sense of Definition~\ref{definitionofLPintegr} (see \cite[Theorems VII.5.11, IX.5.7]{HaiLubWan}):
\begin{theorem}
\label{thmdescendLPintegr}
If $\Phi_{h}\colon T^{*}\mathcal{F}\to T^{*}\mathcal{F}$ is a symplectic and right-invariant integrator for the reconstructed Hamiltonian system, then, under the momentum map, it descends to a Lie--Poisson integrator $\phi_{h}\colon \mathfrak{F}^{*}\to \mathfrak{F}^{*}$ for the Lie--Poisson system on $\mathfrak{F}^{*}$.
\end{theorem}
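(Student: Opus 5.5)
The plan is to verify the two properties of Definition~\ref{definitionofLPintegr} for the induced map $\phi_h$, defined by $\phi_h\circ\mu=\mu\circ\Phi_h$, where $\mu\colon T^{*}\mathcal{F}\to\mathfrak{F}^{*}$ is the momentum map of the lifted left action. Before that we must check that $\phi_h$ is well defined. The Lie--Poisson reconstruction identifies $\mathfrak{F}^{*}$ with $T^{*}\mathcal{F}/\mathcal{F}$, where $\mathcal{F}$ acts by the cotangent lift of right multiplication. Since $\Phi_h$ is right-invariant, i.e. commutes with this action, it maps $\mathcal{F}$-orbits to $\mathcal{F}$-orbits. Hence it descends to a map $\phi_h$ on the quotient. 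Concretely, if $\mu(z)=\mu(z')$ then $z'=z\cdot g$, so $\mu(\Phi_h(z'))=\mu(\Phi_h(z)\cdot g)=\mu(\Phi_h(z))$. The right-invariant momentum map used for reduction is $\mathcal{F}$-invariant, so $\phi_h$ is well defined.

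The first property is preservation of coadjoint orbits. We will use the other momentum map $J$, associated with the cotangent lift of left multiplication. By the standard dual pair structure, $J$ is constant along the fibers where the action being reduced is trivial, and its level sets intersect the right-invariant reduction to give exactly the coadjoint orbits. The key point is Noether's theorem in discrete form. Because $\Phi_h$ is symplectic and commutes with the right action, it preserves the conjugate momentum map; one checks this by applying $\Phi_h$ to the generating function of the infinitesimal action and using equivariance. Combined with the identity $\mu(z)=\mathrm{Ad}^{*}_{\pi(z)}J(z)$, where $\pi$ is the base point, this gives $\mu(\Phi_h(z))=\mathrm{Ad}^{*}_{g}\,\mu(z)$ for some $g\in\mathcal{F}$. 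Thus $\phi_h$ maps each coadjoint orbit into itself. The Casimirs \eqref{casimirsquantizedmonomials} are constant on orbits, so the equalities \eqref{casimirspreservation} follow.

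The second property is that $\phi_h$ restricts to a symplectomorphism of each orbit. Here we invoke symplectic (Marsden--Weinstein) reduction. The orbit $\mathcal{O}\subset\mathfrak{F}^{*}$ with its Kirillov--Kostant--Souriau form is symplectomorphic to the reduced space $J^{-1}(\nu)/\mathcal{F}_{\nu}$. The reduced form $\omega_{\mathrm{red}}$ is characterized by $\iota^{*}\omega=\pi_\nu^{*}\omega_{\mathrm{red}}$. Since $\Phi_h$ preserves $\omega$, preserves $J^{-1}(\nu)$ by the previous step, and is equivariant, its induced map preserves $\omega_{\mathrm{red}}$. Equivalently, one can argue that $\mu$ is a Poisson map onto $\mathfrak{F}^{*}$ with the Lie--Poisson bracket. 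Then $\Phi_h$ symplectic together with $\mu\circ\Phi_h=\phi_h\circ\mu$ gives that $\phi_h$ is a Poisson map, because $\mu$ is a surjective submersion. A Poisson map that preserves the symplectic leaves restricts to a symplectomorphism on each leaf.

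The main obstacle is the orbit-preservation step. The conjugate momentum map must be conserved exactly by $\Phi_h$, and for a general symplectic map this requires it to be exact symplectic. Otherwise one only gets preservation up to a constant shift, which can fail on non-simply connected groups or for non-Hamiltonian symplectic maps. I would handle this by requiring $\Phi_h$ to be generated by a type-II generating function, as for symplectic Runge--Kutta methods. In that case equivariance under the group action implies exact conservation of the associated momentum map, by \cite[Ch. VII.5]{HaiLubWan}. For the present setting $\mathfrak{F}$ is semisimple-free but perfectness is not needed, because we use the generating function argument directly instead of $H^1(\mathfrak{F})=0$.
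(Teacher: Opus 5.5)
Your well-definedness step and your symplecticity-on-leaves step (pushing the Poisson property through the surjective submersion $\mu$) are fine. They are the standard reduction argument, which the paper does not prove but simply imports from \cite[Theorems VII.5.11, IX.5.7]{HaiLubWan}. The gap is in orbit preservation.

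First, a labelling slip. The conserved map must be the momentum map $J$ of the lifted \emph{right} action, the one $\Phi_h$ commutes with. You call it the one associated with left multiplication, but that is $\mu$.

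You correctly note that symplecticity plus equivariance only gives $J\circ\Phi_h=J+c$ with $c\in\mathfrak{F}^{*}$ constant on the connected $T^{*}\mathcal{F}$. Your remedy, however, does not work.
\begin{itemize}
\item It adds a hypothesis that is not in the theorem.
\item Even with that hypothesis the implication is false. On $T^{*}\mathbb{R}$, the type-II generating function $S(q_0,p_1)=p_1q_0-hq_0$ (with $p_0=\partial_{q_0}S$, $q_1=\partial_{p_1}S$) generates $(q,p)\mapsto(q,p+h)$. This map is symplectic and commutes with translations, yet it shifts the momentum $p$.
\item A generating-function argument needs $\mathcal{F}$-invariance of the generating function, or, for symplectic Runge--Kutta methods, exact conservation of $J$ as a quadratic first integral. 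Equivariance of the map is not enough.
\item Since $\mathbb{R}$ is simply connected, the obstruction is not topological either.
\item Composing such a shift by $h^{p+1}$ with the exact flow gives a symplectic, right-invariant, consistent integrator whose reduced map moves the coadjoint orbits (which are points here). So for general groups the statement genuinely needs more structure.
\end{itemize}

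The route you dismissed is the one that closes the gap under exactly the stated hypotheses.
\begin{itemize}
\item Evaluate $J\circ\Phi_h=J+c$ at $z\cdot g$. Using $\Phi_h(z\cdot g)=\Phi_h(z)\cdot g$ and the equivariance $J(z\cdot g)=\mathrm{Ad}^{*}_{g}J(z)$, you get $\mathrm{Ad}^{*}_{g}c=c$ for all $g$. Hence $\langle c,[X,Y]\rangle=0$ for all $X,Y\in\mathfrak{F}$.
\item Contrary to your remark, $\mathfrak{F}$ is perfect. For $\mathfrak{g}=\mathfrak{su}(N)$ (simple), the brackets $[(v,0),(w,0)]=([v,w],0)$ and $[(v,0),(0,b)]=(0,-\mathrm{ad}^{*}_{v}b)$ show that $[\mathfrak{f},\mathfrak{f}]\supseteq\mathfrak{g}\oplus\mathrm{span}\{\mathrm{ad}^{*}_{v}b\}$ for $\mathfrak{f}=\mathfrak{g}\ltimes\mathfrak{g}^{*}$.
\item The annihilator of $\mathrm{span}\{\mathrm{ad}^{*}_{v}b\}$ is the center of $\mathfrak{g}$, which is zero. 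So $\mathfrak{f}$ is perfect.
\item $\mathfrak{f}$ also has trivial center, because $\mathfrak{g}$ has trivial center and $(\mathfrak{g}^{*})^{\mathfrak{g}}=[\mathfrak{g},\mathfrak{g}]^{\circ}=0$.
\item Repeating the same argument for $\mathfrak{F}=\mathfrak{f}\ltimes\mathfrak{f}^{*}$ gives $[\mathfrak{F},\mathfrak{F}]=\mathfrak{F}$.
\end{itemize}
Therefore $c=0$ and $J$ is conserved exactly. Your identity relating $\mu(z)$ and $J(z)$ through $\mathrm{Ad}^{*}$ of the base point then yields orbit preservation, and hence the Casimir equalities, with no extra assumption on $\Phi_h$.
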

\begin{figure}[ht!]
   \includegraphics[scale=0.8]{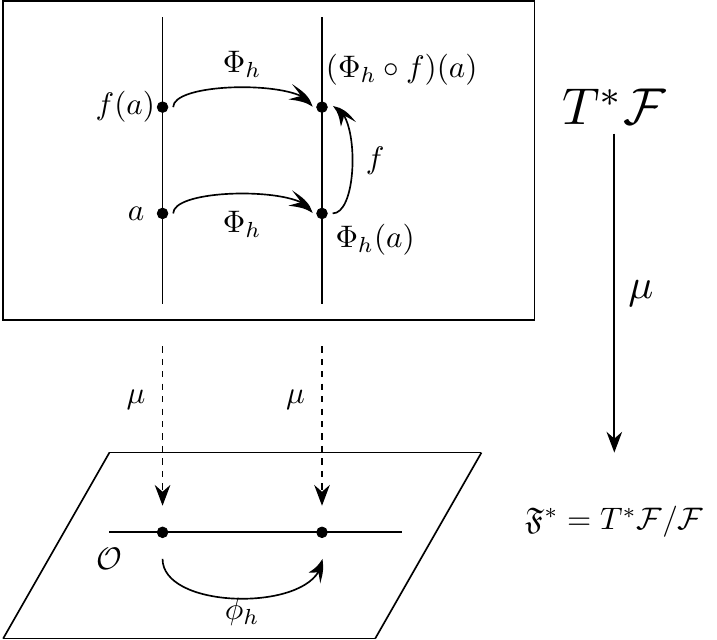}
\caption{Discrete Lie--Poisson reduction. A symplectic right-invariant method $\Phi_{h}\colon T^{*}\mathcal{F}\to T^{*}\mathcal{F}$ descends to a Lie--Poisson integrator $\phi_{h}\colon \mathfrak{F}^{*}\to \mathfrak{F}^{*}$ confined to the coadjoint orbit $\mathcal{O}$.
}
\label{diagrLP}
\end{figure}
This observation, summarized in the diagram Fig.~\ref{diagrLP}, suggests the following algorithm for finding Lie--Poisson integrators for Lie--Poisson systems:
\begin{itemize}
\item for a given $\mathfrak{F}^{*}$, find the momentum map $\mu\colon T^{*}\mathcal{F}\to\mathfrak{F}^{*}$ associated with the left action;
\item from a given $H\in C^{\infty}(\mathfrak{F}^{*})$ construct the right-invariant Hamiltonian $\mathrm{H}=\mu^{*}(H)\in C^{\infty}(T^{*}\mathcal{F})$;
\item apply a symplectic right-invariant integrator $\Phi_{h}\colon T^{*}\mathcal{F}\to T^{*}\mathcal{F}$ to Hamilton's equations on $T^{*}\mathcal{F}$ with the Hamiltonian $\mathrm{H}$;
\item using the momentum map $\mu$, project the method $\Phi_{h}$ to $\phi_{h}$.
\end{itemize}

We recall that the system of matrix equations \eqref{LP3Dquant}, for which we aim to derive the Lie--Poisson integrator, evolves on the dual $\mathfrak{F}^{*}$ of the Lie algebra $\mathfrak{F}=(\mathfrak{su}(N)\ltimes\mathfrak{su}(N)^{*})\ltimes(\mathfrak{su}(N)\ltimes\mathfrak{su}(N)^{*})^{*}$. The corresponding Lie group $\mathcal{F}$ can be identified with $\mathcal{F}=(\mathrm{SU}(N)\ltimes\mathfrak{su}(N)^{*})\ltimes(\mathfrak{su}(N)\ltimes\mathfrak{su}(N)^{*})^{*}$, or, explicitly
\begin{equation*}
\mathcal{F}=\left\{(G,U)=((g,\beta),(u,\sigma))\mid g\in \mathrm{SU}(N),\,\beta,\sigma\in\mathfrak{su}(N)^{*},\,u\in\mathfrak{su}(N)\right\},
\end{equation*}
with the group operation (see Appendix~\ref{appendix2} for details)
\begin{equation*}
(G,U)\cdot(Q,M)=\left((gq,\mathrm{Ad}^{*}_{q}\beta+\alpha),(\mathrm{Ad}_{q^{-1}}u+m,\mathrm{Ad}^{*}_{q}\sigma-\mathrm{ad}^{*}_{\mathrm{Ad}_{q^{-1}}u}\alpha+\gamma)\right),
\end{equation*}
with $(G,U)=((g,\beta),(u,\sigma)),(Q,M)=((q,\alpha),(m,\gamma))\in \mathcal{F}$, and where the identification of $\mathfrak{su}(N)^{*}$ with $\mathfrak{su}(N)$ is done via the Frobenius inner product.

Further, we construct the total space $T^{*}\mathcal{F}$ of of the cotangent bundle of the group $\mathcal{F}$ as
\begin{equation*}
T^{*}\mathcal{F}=\left\{(Q,M,P,A)\mid(Q,M)=((q,\alpha),(m,\gamma))\in \mathcal{F},\,(P,A)=((p,\nu),(a,b))\in T^{*}_{(Q,M)}\mathcal{F}\right\}
\end{equation*}
\begin{theorem}
The cotangent lift $((G,U),(Q,M,P,A))\mapsto(\widetilde Q,\widetilde M,\widetilde P,\widetilde A)$ of the left $\mathcal{F}$-action on $\mathcal{F}$ is given by
\begin{equation}
\label{cotangentlift}
\begin{split}
&\widetilde q=gq,\quad\widetilde\alpha=\mathrm{Ad}^{*}_{q}\beta+\alpha,\quad\widetilde m=\mathrm{Ad}_{q^{-1}}u+m,\quad\widetilde\gamma=\mathrm{Ad}^{*}_{q}\sigma-\mathrm{ad}^{*}_{\mathrm{Ad}_{q^{-1}}u}\alpha+\gamma{}\\&
\widetilde\nu=\nu-\mathrm{ad}_{a}\left(\mathrm{Ad}_{q^{-1}}u\right),\quad \widetilde a=a,\quad\widetilde b=b,{}\\&
\widetilde p=\left(g^{-1}\right)^{\dag}p+\left(g^{-1}\right)^{\dag}\left(q^{-1}\right)^{\dag}\left(\mathrm{ad}^{*}_{\nu-\mathrm{ad}_{a}\left(\mathrm{Ad}_{q^{-1}}u\right)}\left(\mathrm{Ad}^{*}_{q}\beta\right)-\mathrm{ad}^{*}_{\mathrm{Ad}_{q^{-1}}u}\left(\mathrm{ad}^{*}_{a}\alpha+b\right)+\mathrm{ad}^{*}_{a}\left(\mathrm{Ad}^{*}_{q}\sigma\right)\right).
\end{split}
\end{equation}
\end{theorem}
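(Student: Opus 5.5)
The cotangent lift of left multiplication $\lambda_{(G,U)}\colon(Q,M)\mapsto(G,U)\cdot(Q,M)$ is, by definition, the map $T^{*}\mathcal{F}\to T^{*}\mathcal{F}$ taking a base point $(Q,M)$ to $(G,U)\cdot(Q,M)$ and a covector $(P,A)\in T^{*}_{(Q,M)}\mathcal{F}$ to $(T\lambda_{(G,U)}^{-1})^{*}(P,A)$. The base-point components $\widetilde q,\widetilde\alpha,\widetilde m,\widetilde\gamma$ in \eqref{cotangentlift} should then follow directly from the group law of $\mathcal{F}$ stated above (derived in Appendix~\ref{appendix2}), with $(G,U)$ placed on the left. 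The content of the proof is therefore the covector part. I would compute it by fixing $(G,U)$ and differentiating the map $(q,\alpha,m,\gamma)\mapsto(gq,\mathrm{Ad}^{*}_{q}\beta+\alpha,\mathrm{Ad}_{q^{-1}}u+m,\mathrm{Ad}^{*}_{q}\sigma-\mathrm{ad}^{*}_{\mathrm{Ad}_{q^{-1}}u}\alpha+\gamma)$.

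Concretely, I would take a tangent vector at $(q,\alpha,m,\gamma)$, writing its group component as $\delta q=q\eta$ with $\eta\in\mathfrak{su}(N)$ and its vector-space components as $\delta\alpha,\delta m,\delta\gamma$. Using $\delta(\mathrm{Ad}^{*}_{q}\beta)=\mathrm{ad}^{*}_{\eta}\mathrm{Ad}^{*}_{q}\beta$ and $\delta(\mathrm{Ad}_{q^{-1}}u)=-\mathrm{ad}_{\eta}\mathrm{Ad}_{q^{-1}}u$ (up to sign conventions fixed by the Frobenius identification), I would write $T\lambda$ as a block-triangular linear map. Its structure is: the $q$-slot is $g\,\delta q$; the $\alpha$-slot is $\delta\alpha$ plus an $\eta$ term; the $m$-slot is $\delta m$ plus an $\eta$ term; and the $\gamma$-slot is $\delta\gamma$ plus terms in $\eta$ and $\delta\alpha$. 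The cotangent lift is characterised by the pairing identity
\[
\langle(\widetilde P,\widetilde A),T\lambda(\delta q,\delta\alpha,\delta m,\delta\gamma)\rangle=\langle(P,A),(\delta q,\delta\alpha,\delta m,\delta\gamma)\rangle,
\]
and I would impose it slot by slot, starting from the top of the triangular structure.

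The $\delta\gamma$ and $\delta m$ slots are untouched by cross terms, so they give $\widetilde a=a$ and $\widetilde b=b$ immediately; this step also fixes which of $a,b$ pairs with $m$ and which with $\gamma$. The $\delta\alpha$ slot picks up the term $-\mathrm{ad}^{*}_{\mathrm{Ad}_{q^{-1}}u}\delta\alpha$ from $\widetilde\gamma$. Moving the coadjoint operator onto the other side of the pairing with $a$ produces $\widetilde\nu=\nu-\mathrm{ad}_{a}(\mathrm{Ad}_{q^{-1}}u)$. Finally, the $\delta q$ slot collects the contributions of $\eta$ through $\widetilde\alpha$, $\widetilde m$ and $\widetilde\gamma$. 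These are paired respectively with $\widetilde\nu$, $\widetilde b$ (or $a$) and $a$, and each is transposed using $\langle\mu,\mathrm{ad}_{\eta}x\rangle=\langle\mathrm{ad}^{*}_{x}\mu,\eta\rangle$-type identities. Converting $\eta=q^{-1}\delta q$ back to a covector on $\mathrm{SU}(N)$ via the Frobenius pairing gives the factors $(q^{-1})^{\dag}$, and inverting $g$ gives the prefactor $(g^{-1})^{\dag}$. The three resulting terms are $\mathrm{ad}^{*}_{\widetilde\nu}(\mathrm{Ad}^{*}_{q}\beta)$, $-\mathrm{ad}^{*}_{\mathrm{Ad}_{q^{-1}}u}(\mathrm{ad}^{*}_{a}\alpha+b)$ and $\mathrm{ad}^{*}_{a}(\mathrm{Ad}^{*}_{q}\sigma)$, which is the claimed formula for $\widetilde p$.

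I expect the main obstacle to be the $\delta q$ slot. The derivative of $\widetilde\gamma$ in $q$ contains the nested term $\mathrm{ad}^{*}_{\mathrm{Ad}_{q^{-1}}u}\alpha$, whose variation yields $\mathrm{ad}^{*}_{\mathrm{ad}_{\eta}\mathrm{Ad}_{q^{-1}}u}\alpha$. Transposing this requires the Jacobi identity, or equivalently the identity $\mathrm{ad}^{*}_{[x,y]}=[\mathrm{ad}^{*}_{x},\mathrm{ad}^{*}_{y}]$, to rearrange it into a form where $\eta$ appears linearly, and this is where the combination $\mathrm{ad}^{*}_{a}\alpha+b$ arises. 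I would handle it by working entirely in matrices, where $\mathrm{ad}$ and $\mathrm{ad}^{*}$ are commutators by \eqref{coadjointoperatorforabelinext}, so that every transposition reduces to cyclicity of the trace. At the end I would check the signs by setting $u=0$, in which case $\widetilde\nu$ and the last-but-one term should reduce to the standard lift for $\mathrm{SU}(N)\ltimes\mathfrak{su}(N)^{*}$ used in \cite{ModRoop}.
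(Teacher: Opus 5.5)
Your proposal is correct and is essentially the paper's own argument: differentiate the group law using the left-trivialized variation $\delta q=q\eta$ (the paper's $q^{-1}v$), note the triangular structure, and dualize through the pairing identity, reading off $\widetilde a,\widetilde b$ from the $\gamma$- and $m$-slots, $\widetilde\nu$ from the $\alpha$-slot and $\widetilde p$ from the $q$-slot. One small correction to the obstacle you anticipate: the nested term does not need the Jacobi identity, only two transpositions, $\langle\alpha,[[\eta,x],a]\rangle=\langle\mathrm{ad}^{*}_{a}\alpha,[x,\eta]\rangle=\langle\mathrm{ad}^{*}_{x}\mathrm{ad}^{*}_{a}\alpha,\eta\rangle$ with $x=\mathrm{Ad}_{q^{-1}}u$, and adding the $\widetilde m$-contribution $\langle b,[x,\eta]\rangle$ gives the $\mathrm{ad}^{*}_{\mathrm{Ad}_{q^{-1}}u}(\mathrm{ad}^{*}_{a}\alpha+b)$ term exactly as in the paper.
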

\begin{proof}
See Appendix~\ref{appendix2}.
\end{proof}
From the action \eqref{cotangentlift}, one gets the following expression for the momentum map $\mu\colon T^{*}\mathcal{F}\to\mathfrak{F}^{*},$ (see Appendix~\ref{appendix2}):
\begin{equation}
\label{mommapforsemidir}
\mu((q,\alpha),(m,\gamma),(p,\nu),(a,b))=\left(\frac{1}{2}(pq^{\dag}-qp^{\dag}),q\nu q^{\dag},q(\alpha a-a\alpha+b)q^{\dag},qaq^{\dag}\right).
\end{equation}
We define the fields $(W,Q,P,\Xi)$ as
\begin{equation}
\label{fieldsviacanonicalvar}
W=\frac{1}{2}(pq^{\dag}-qp^{\dag})+2q\nu q^{\dag},\quad Q=q\nu q^{\dag},\quad P=q(\alpha a-a\alpha+b)q^{\dag},\quad \Xi=qaq^{\dag}.
\end{equation}
Then, the reconstructed Hamiltonian $\mathrm{H}^{N}=\mu^{*}(H^{N})$ takes the form
\begin{equation}
\label{reconstrHamilt}
\begin{split}
\mathrm{H}^{N}(q,p,\alpha,\nu,a,b)=-&\frac{1}{2}\mathrm{tr}\left(\frac{1}{2}(pq^{\dag}-qp^{\dag})\Psi\right)-\mathrm{tr}(q\nu q^{\dag}\Psi)-\frac{1}{2}\mathrm{tr}(qaq^{\dag}J)+{}\\&\frac{1}{2}\mathrm{tr}(\nu^{2})+\frac{1}{2}\mathrm{tr}((\alpha a-a\alpha+b)^{2}).
\end{split}
\end{equation}
Taking the variational derivatives of the reconstructed right-invariant Hamiltonian \eqref{reconstrHamilt}
\begin{equation*}
\dot q=-\frac{\delta\mathrm{H}^{N}}{\delta p},\quad \dot p=\frac{\delta\mathrm{H}^{N}}{\delta q},\quad\dot\nu=-\frac{\delta\mathrm{H}^{N}}{\delta \alpha},\quad\dot\alpha=\frac{\delta\mathrm{H}^{N}}{\delta\nu},\quad\dot a=-\frac{\delta\mathrm{H}^{N}}{\delta \gamma}=0,\quad\dot b=-\frac{\delta\mathrm{H}^{N}}{\delta m}=0,
\end{equation*}
we get the reconstructed right-invariant Hamiltonian system
\begin{equation}
\label{reconstructCanHam}
\left\{
\begin{aligned}
\displaystyle\dot q&=-\Psi q, \\
\displaystyle\dot p&=-\Psi p-4\Psi q\nu-2Jqa,\\
\displaystyle\dot\nu&=[a,\alpha a-a\alpha+b],\\
\displaystyle\dot\alpha&=-\nu+2q^{\dag}\Psi q,
\end{aligned}
\right.
\end{equation}
where $\Psi=\Psi(W)=\Psi(q,p,\nu)$, and $J=J(\Xi)=J(q)$.
\begin{theorem}
The right-invariant reconstructed Hamiltonian system \eqref{reconstructCanHam} on $T^{*}\mathcal{F}$ with the Hamiltonian \eqref{reconstrHamilt} reduces to the Lie--Poisson system \eqref{LP3Dquant} on $\mathfrak{F}^{*}$ under the momentum map \eqref{fieldsviacanonicalvar}.
\end{theorem}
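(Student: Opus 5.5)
The plan is a direct computation. I will differentiate the four fields defined in \eqref{fieldsviacanonicalvar} along solutions of \eqref{reconstructCanHam}. The parameters $a,b$ are constant in time, and $\mathrm{d}(q^{\dag})/\mathrm{d}t=-q^{\dag}\Psi^{\dag}=q^{\dag}\Psi$, since $\Psi\in\mathfrak{su}(N)$ is skew-hermitian. The basic identity is therefore: for any time-dependent $X$,
\begin{equation*}
\frac{\mathrm{d}}{\mathrm{d}t}(qXq^{\dag})=-\Psi qXq^{\dag}+qXq^{\dag}\Psi+q\dot Xq^{\dag}=[qXq^{\dag},\Psi]+q\dot Xq^{\dag}.
\end{equation*}
All commutators here are the scaled brackets, since time has been rescaled as agreed above.

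I will treat the fields in order of increasing difficulty.

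\textbf{$\Xi$.} For $\Xi=qaq^{\dag}$ we have $\dot a=0$, so the identity gives $\dot\Xi=[\Xi,\Psi]$ immediately.

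\textbf{$P$.} For $P=q(\alpha a-a\alpha+b)q^{\dag}$, only $\dot\alpha$ contributes to the inner derivative. This gives $q[\dot\alpha,a]q^{\dag}$ with $\dot\alpha=-\nu+2q^{\dag}\Psi q$. Conjugating by $q$ turns $q[\nu,a]q^{\dag}$ into $[Q,\Xi]$ and $q[q^{\dag}\Psi q,a]q^{\dag}$ into $[\Psi,\Xi]$, using $qq^{\dag}=I$. Hence
\begin{equation*}
\dot P=[P,\Psi]+[\Xi,Q]+2[\Psi,\Xi],
\end{equation*}
which is the second equation of \eqref{LP3Dquant}.

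\textbf{$Q$.} For $Q=q\nu q^{\dag}$, we get $q\dot\nu q^{\dag}=q[a,\alpha a-a\alpha+b]q^{\dag}=[\Xi,P]$. So $\dot Q=[Q,\Psi]+[\Xi,P]$, as required.

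\textbf{$W$.} This is the main computation. First I differentiate $M=\tfrac12(pq^{\dag}-qp^{\dag})$ using $\dot q=-\Psi q$ and $\dot p=-\Psi p-4\Psi q\nu-2Jqa$. The $\Psi p$ terms combine into $[M,\Psi]$. The remaining terms are $-2\Psi Q-Q\cdot(\ldots)$-type pieces from $4\Psi q\nu q^{\dag}$ and its adjoint, together with $-J\Xi$ and its adjoint. Using skew-hermiticity of $\Psi,Q,J,\Xi$, these should become $-2[\Psi,Q]$, up to a factor to be checked carefully, and $[\Xi,J]$.

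Next I add $2\dot Q=2[Q,\Psi]+2[\Xi,P]$. The target is
\begin{equation*}
\dot W=[W,\Psi]+[\Xi,J]+2[Q,\Psi]+2[\Xi,P].
\end{equation*}
Since $[W,\Psi]=[M,\Psi]+2[Q,\Psi]$, I need the $\Psi q\nu$ contributions from $\dot p$ to cancel against, or combine with, the $2[Q,\Psi]$ coming from $2\dot Q$. This gives exactly one net $2[Q,\Psi]$ beyond $[W,\Psi]$.

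The main obstacle is this sign and factor bookkeeping in the $p$-equation, in particular the coefficient $4$ in front of $\Psi q\nu$ and the adjoint terms. If the coefficients do not match on the first pass, I will recheck by differentiating $\mathrm{H}^{N}$ in \eqref{reconstrHamilt} with respect to $q$, since that derivation fixes these coefficients.

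Finally, I will confirm that $\Psi$ and $J$ are the correct functions of the reduced variables, namely $\Psi=\Delta_N^{-1}W$ and $J=\Delta_N\Xi$, so that the reduced system closes on $(W,P,Q,\Xi)$. This follows because $\mathrm{H}^{N}=\mu^{*}(H^{N})$, so the Hamiltonian depends on $(q,p,\alpha,\nu,a,b)$ only through the momentum map.
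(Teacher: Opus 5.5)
Your proposal is correct and follows the paper's own route: the paper's proof is exactly this direct differentiation of the fields \eqref{fieldsviacanonicalvar} along \eqref{reconstructCanHam}, using $qq^{\dag}=q^{\dag}q=I$. The coefficient you flagged does check out as stated. The $4\Psi q\nu$ term and its adjoint contribute exactly $2[Q,\Psi]$ to $\dot M$, so adding $2\dot Q$ gives $[W,\Psi]+2[Q,\Psi]+[\Xi,J]+2[\Xi,P]$, and no fallback to differentiating $\mathrm{H}^{N}$ is needed.
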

\begin{proof}
Direct computations using \eqref{fieldsviacanonicalvar} and the identity $qq^{\dag}=q^{\dag}q=I$.
\end{proof}
\begin{remark}
Strictly speaking, the complete Hamiltonian dynamics \eqref{reconstructCanHam} involves also the evolution of the fields $m$ and $\gamma$. We omit the corresponding equations for $m$ and $\gamma$, because they are not needed to construct the fields $(W,Q,P,\Xi)$ due to the structure of the momentum map \eqref{mommapforsemidir}. Moreover, equations \eqref{reconstructCanHam} are completely decoupled from the evolution of $m$ and $\gamma$, as the right hand side of \eqref{reconstructCanHam} does not depend on either $m$ or $\gamma$. We further observe that the dynamics for the fields $a$ and $b$ is trivial and these matrices serve as parameters. Finally, we conclude that the reconstructed equations \eqref{reconstructCanHam} involve the same number of fields as the reduced system \eqref{LP3Dquant}, which is a typical case for Lie algebras of semidirect product structure (see the analogous discussion in \cite{ModRoop}).
\end{remark}

We will choose the simplest example of a symplectic Runge-Kutta integrator for the system \eqref{reconstructCanHam}, which is the implicit midpoint rule. For an ODE $\dot y=f(y)$, it is formulated as follows:
\begin{equation*}
y_{n}=\widetilde y-\frac{h}{2}f(\tilde y),\quad y_{n+1}=\widetilde y+\frac{h}{2}f(\tilde y),
\end{equation*}
where $\widetilde y$ is an intermediate state found from the first equation and used to update the state by means of the second equation, and $h$ is the given time step. Applying the implicit midpoint rule to the system \eqref{reconstructCanHam} yields
\begin{equation}
\label{reconstructCanHamMidp}
\left\{
\begin{aligned}
\displaystyle q_{n}&=\widetilde q+\frac{h}{2}\widetilde\Psi\widetilde q, \\
\displaystyle\alpha_{n}&=\widetilde\alpha+\frac{h}{2}\widetilde\nu-h\widetilde q^{\dag}\widetilde\Psi\widetilde q,\\
\displaystyle p_{n}&=\widetilde p+\frac{h}{2}\widetilde\Psi\widetilde p+2h\widetilde\Psi\widetilde q\widetilde\nu+h\widetilde J\widetilde q a,\\
\displaystyle \nu_{n}&=\widetilde\nu-\frac{h}{2}[a,\widetilde\alpha a-a\widetilde\alpha+b],
\end{aligned}
\right.
\quad
\left\{
\begin{aligned}
\displaystyle q_{n+1}&=\widetilde q-\frac{h}{2}\widetilde\Psi\widetilde q, \\
\displaystyle\alpha_{n+1}&=\widetilde\alpha-\frac{h}{2}\widetilde\nu+h\widetilde q^{\dag}\widetilde\Psi\widetilde q,\\
\displaystyle p_{n+1}&=\widetilde p-\frac{h}{2}\widetilde\Psi\widetilde p-2h\widetilde\Psi\widetilde q\widetilde\nu-h\widetilde J\widetilde q a,\\
\displaystyle \nu_{n+1}&=\widetilde\nu+\frac{h}{2}[a,\widetilde\alpha a-a\widetilde\alpha+b],
\end{aligned}
\right.
\end{equation}
where $\widetilde\Psi=\Psi(\widetilde q,\widetilde p,\widetilde \nu)$ and $\widetilde J=J(\widetilde q)$.

We arrive at the final central result of the paper.
\begin{theorem}
\label{thmLPinterg2}
The implicit midpoint scheme \eqref{reconstructCanHamMidp} descends to a Lie--Poisson integrator $\phi_{h}\colon\mathfrak{F}^{*}\to\mathfrak{F}^{*}$, $(W_{n+1},Q_{n+1},P_{n+1},\Xi_{n+1})=\phi_{h}(W_{n},Q_{n},P_{n},\Xi_{n})$ defined by the following equations:
\begin{equation}
\label{LPintegrator2}
\left\{
\begin{aligned}
\displaystyle \Xi_{n}&=\left(I+\frac{h}{2}\widetilde\Psi\right)\widetilde\Xi\left(I-\frac{h}{2}\widetilde\Psi\right), \\
\displaystyle Q_{n}&=\left(I+\frac{h}{2}\widetilde\Psi\right)\left(\widetilde Q-\frac{h}{2}\widetilde N\right)\left(I-\frac{h}{2}\widetilde\Psi\right),\\
\displaystyle P_{n}&=\left(I+\frac{h}{2}\widetilde\Psi\right)\left(\widetilde P+\frac{h}{2}\widetilde M\right)\left(I-\frac{h}{2}\widetilde\Psi\right)-h\left(I-\frac{h}{2}\widetilde\Psi\right)^{-1}\widetilde\Psi\widetilde\Xi\left(I-\frac{h}{2}\widetilde\Psi\right){}\\&+h\left(I+\frac{h}{2}\widetilde\Psi\right)\widetilde\Xi\widetilde\Psi\left(I+\frac{h}{2}\widetilde\Psi\right)^{-1},\\
\displaystyle W_{n}&=\left(I+\frac{h}{2}\widetilde\Psi\right)\left(\widetilde W-h\widetilde N\right)\left(I-\frac{h}{2}\widetilde\Psi\right)-\frac{h}{2}[\widetilde\Xi,\widetilde J]-\frac{h}{2}[2\widetilde Q,\widetilde\Psi]{}\\&-\frac{h^{2}}{4}\left(\widetilde J\widetilde\Xi\widetilde\Psi+\widetilde\Psi\widetilde\Xi\widetilde J+4\widetilde\Psi\widetilde Q\widetilde\Psi\right),
\end{aligned}
\right.
\end{equation}
where $\widetilde M$ and $\widetilde N$ are
\begin{equation}
\label{MNtildes}
\begin{split}
&\widetilde N=[\widetilde\Xi,\widetilde P]+\frac{h^{2}}{4}\left(\widetilde P\widetilde\Psi^{2}\widetilde\Xi-\widetilde \Xi\widetilde\Psi^{2}\widetilde P\right){}\\&
\widetilde M=[\widetilde Q,\widetilde\Xi]+\frac{h^{2}}{4}\left(\widetilde \Xi\widetilde\Psi^{2}\widetilde Q-\widetilde  Q\widetilde\Psi^{2}\widetilde\Xi\right),
\end{split}
\end{equation}
and the updated fields $(W_{n+1},Q_{n+1},P_{n+1},\Xi_{n+1})$ are obtained by changing $+h$ to $-h$ in \eqref{LPintegrator2}.

The integrator \eqref{LPintegrator2}-\eqref{MNtildes} preserves the coadjoint orbits of $\mathfrak{F}^{*}$, and, in particular,  Casimirs \eqref{casimirsquantizedmonomials}, and is a symplectic map on the coadjoint orbits.
\end{theorem}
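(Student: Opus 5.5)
The proof has two parts: the structural claim (preservation of coadjoint orbits and symplecticity) and the explicit formulas \eqref{LPintegrator2}-\eqref{MNtildes}. For the structural part I would invoke Theorem~\ref{thmdescendLPintegr}. The implicit midpoint rule is a symplectic Runge--Kutta method, so it is symplectic for the canonical system \eqref{reconstructCanHam} on $T^{*}\mathcal{F}$. It remains to show that it is equivariant under the cotangent-lifted left action \eqref{cotangentlift}, i.e.\ that it commutes with the symmetry leaving the momentum map \eqref{mommapforsemidir} fibres mapped to each other, so that it descends.

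Equivariance rests on two facts. First, the vector field \eqref{reconstructCanHam} is equivariant, because $\mathrm{H}^{N}=\mu^{*}(H^{N})$ is invariant. Second, the action \eqref{cotangentlift} is affine-linear in the canonical variables for fixed $q$, and in the $q$-direction it is linear in $q$ and $p$. Runge--Kutta methods commute with affine maps, which gives the conclusion. I expect this to be the main obstacle: the lifted action mixes $q$ nonlinearly into $\widetilde p$, through $(q^{-1})^{\dag}$ and $\mathrm{Ad}_{q^{-1}}$. Equivariance therefore does not follow directly from the general statement for linear group actions.

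I would try to bypass this by checking directly that $\mu$ applied to the midpoint update depends only on $\mu$ of the initial data. This is exactly what the explicit reduction below produces. Once it holds, Theorem~\ref{thmdescendLPintegr}, or a direct Hamiltonian reduction argument, gives symplecticity on orbits. Orbit preservation then follows because the update is $\mathrm{Ad}^{*}$ by the group element transporting $(q_n,\dots)$ to $(q_{n+1},\dots)$. Casimir preservation is automatic: the functions \eqref{casimirsquantizedmonomials} are invariant under this action, as the Proposition on Casimirs of \eqref{LP3Dquant} shows along any $\mathrm{ad}^{*}$-flow, and they are constant on orbits.

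For the explicit formulas I would set $\widetilde\Xi=\widetilde q a\widetilde q^{\dag}$, $\widetilde Q=\widetilde q\widetilde\nu\widetilde q^{\dag}$, $\widetilde P=\widetilde q(\widetilde\alpha a-a\widetilde\alpha+b)\widetilde q^{\dag}$, and define $\widetilde W$ similarly from \eqref{fieldsviacanonicalvar}. From $q_n=(I+\frac h2\widetilde\Psi)\widetilde q$ and $q_n^{\dag}=\widetilde q^{\dag}(I-\frac h2\widetilde\Psi)$, using $\widetilde\Psi^{\dag}=-\widetilde\Psi$, the formula for $\Xi_n=q_naq_n^{\dag}$ follows at once. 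For $Q_n$, I would insert $\nu_n$ into $q_n\nu_nq_n^{\dag}$ and rewrite $\widetilde q[a,\widetilde\alpha a-a\widetilde\alpha+b]\widetilde q^{\dag}=[\widetilde\Xi,\widetilde P]$; the correction terms in $\widetilde N$ account for the $O(h^{2})$ discrepancy. One must also use that $q_n$ is unitary only up to $O(h^{2})$, namely $q_n^{\dag}q_n=I-\frac{h^{2}}{4}\widetilde\Psi^{2}$.

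For $P_n$, I would expand $\alpha_na-a\alpha_n$ with $\alpha_n$ from \eqref{reconstructCanHamMidp}. The term $h\widetilde q^{\dag}\widetilde\Psi\widetilde q$ conjugated by $q_n$ produces the two terms containing $(I\mp\frac h2\widetilde\Psi)^{-1}$, and these arise from $q_n\widetilde q^{\dag}=(I+\frac h2\widetilde\Psi)$ combined with $\widetilde q^{\dag}=\widetilde q^{-1}$. For $W_n$ I would do the same with $p_n$, collecting the cross terms $\widetilde\Psi\widetilde Q\widetilde\Psi$ and $\widetilde J\widetilde\Xi\widetilde\Psi$. The $n+1$ formulas follow by the symmetry $h\to-h$ of the midpoint rule. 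These are routine but lengthy computations.
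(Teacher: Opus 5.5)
Your plan is close to the paper's: reduce the midpoint scheme explicitly through $\mu$, then invoke Theorem~\ref{thmdescendLPintegr}. There are, however, two genuine problems.

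\textbf{The explicit formulas.} Your sketch has the unitarity facts reversed, and this is precisely where the content of \eqref{LPintegrator2}--\eqref{MNtildes} comes from. The iterate $q_n$ is exactly unitary: it lies on the group, and the midpoint rule preserves the quadratic invariant $q^{\dag}q$ of $\dot q=-\Psi q$. The non-unitary object is the stage $\widetilde q=\tfrac12(q_n+q_{n+1})$. Imposing $q_n^{\dag}q_n=q_nq_n^{\dag}=I$ on $q_n=(I+\tfrac h2\widetilde\Psi)\widetilde q$ gives the lemma \eqref{formulasAandAtilde} on which the paper's proof rests: $\widetilde q^{\dag}\widetilde q=I+\tfrac{h^2}{4}\widetilde q^{\dag}\widetilde\Psi^2\widetilde q$ and $\widetilde q\widetilde q^{\dag}=(I+\tfrac h2\widetilde\Psi)^{-1}(I-\tfrac h2\widetilde\Psi)^{-1}$.

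The first identity produces the $\tfrac{h^2}{4}$ terms when $\widetilde q[a,\widetilde\alpha a-a\widetilde\alpha+b]\widetilde q^{\dag}$ and $\widetilde q[\widetilde\nu,a]\widetilde q^{\dag}$ are rewritten through $\widetilde\Xi,\widetilde P,\widetilde Q$. The second gives $q_n\widetilde q^{\dag}=(I-\tfrac h2\widetilde\Psi)^{-1}$, not $I+\tfrac h2\widetilde\Psi$. Through $\widetilde q[\widetilde q^{\dag}\widetilde\Psi\widetilde q,a]\widetilde q^{\dag}=\widetilde q\widetilde q^{\dag}\widetilde\Psi\widetilde\Xi-\widetilde\Xi\widetilde\Psi\widetilde q\widetilde q^{\dag}$, it is the only source of the inverse factors in $P_n$. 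With the ingredients you state ($\widetilde q^{\dag}=\widetilde q^{-1}$, $q_n\widetilde q^{\dag}=I+\tfrac h2\widetilde\Psi$, $q_n^{\dag}q_n=I-\tfrac{h^2}{4}\widetilde\Psi^2$), both the $h^2$ corrections in $\widetilde N,\widetilde M$ and those inverses disappear. So the computation as you describe it does not yield the stated integrator.

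\textbf{The structural claim.} You test the wrong symmetry. The map \eqref{mommapforsemidir} is the momentum map of the lifted \emph{left} action \eqref{cotangentlift}. It is equivariant, not invariant, under that action. Hence $\mu^{*}(H^{N})$ is left-invariant only if $H^{N}$ is a Casimir, and \eqref{reconstructCanHam} is not equivariant under \eqref{cotangentlift}.

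Descent needs $\Phi_h$ to commute with the lifted \emph{right} action, whose orbits are the fibres of $\mu$. This is the right-invariance hypothesis of Theorem~\ref{thmdescendLPintegr}, which the paper invokes. For that action your obstacle vanishes: by \eqref{completeformulaforgroupoperation}, right translations are affine in $(q,\alpha,m,\gamma)$, so their cotangent lifts are affine and the midpoint rule commutes with them.

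Your fallback, closure of the formulas in the $\mu$-variables, only gives a well-defined map. That map is Poisson, because $\mu$ is a Poisson submersion and $\Phi_h$ is symplectic. But a Poisson map need not fix each coadjoint orbit. Your sentence that the update is $\mathrm{Ad}^{*}$ by a group element is exactly what right-equivariance would supply, for instance via:
\begin{itemize}
\item exact conservation by the midpoint rule of the right momentum map, which is quadratic in the ambient variables, and
\item the coadjoint relation between left and right momenta.
\end{itemize}
As written, orbit and Casimir preservation are asserted rather than proved.
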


To prove Theorem~\ref{thmLPinterg2}, we will need the following intermediate result.
\begin{lemma}
Let $\widetilde q$ be a solution of \eqref{reconstructCanHamMidp}. Then, the quantities $\mathcal{A}=\widetilde q^{\dag}\widetilde q$ and $\widetilde{\mathcal{A}}=\widetilde q\widetilde q^{\dag}$ fulfill the following equations:
\begin{equation}
\label{formulasAandAtilde}
\mathcal{A}=I+\frac{h^{2}}{4}\widetilde q^{\dag}\widetilde\Psi^{2}\widetilde q,\quad
\left(I+\frac{h}{2}\widetilde\Psi\right)\widetilde{\mathcal{A}}\left(I-\frac{h}{2}\widetilde\Psi\right)=I.
\end{equation}
\end{lemma}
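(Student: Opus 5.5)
The plan is to use only the first equation of \eqref{reconstructCanHamMidp}, namely $q_{n}=\left(I+\frac{h}{2}\widetilde\Psi\right)\widetilde q$, together with two facts. The first is that $q_{n}\in\mathrm{SU}(N)$, so $q_{n}^{\dag}q_{n}=q_{n}q_{n}^{\dag}=I$. The second is that $\widetilde\Psi\in\mathfrak{su}(N)$ is skew-Hermitian, $\widetilde\Psi^{\dag}=-\widetilde\Psi$. The second fact holds because $\widetilde\Psi=\Delta_{N}^{-1}\widetilde W$ and both $\Delta_{N}$ and $\Delta_{N}^{-1}$ preserve $\mathfrak{su}(N)$. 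It gives $\left(I+\frac{h}{2}\widetilde\Psi\right)^{\dag}=I-\frac{h}{2}\widetilde\Psi$. Both identities in \eqref{formulasAandAtilde} then come from writing out the two unitarity relations for $q_{n}$.

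For the first identity, I would compute
\begin{equation*}
I=q_{n}^{\dag}q_{n}=\widetilde q^{\dag}\left(I-\frac{h}{2}\widetilde\Psi\right)\left(I+\frac{h}{2}\widetilde\Psi\right)\widetilde q=\widetilde q^{\dag}\widetilde q-\frac{h^{2}}{4}\widetilde q^{\dag}\widetilde\Psi^{2}\widetilde q .
\end{equation*}
The two factors commute, since both are polynomials in $\widetilde\Psi$, so the cross terms cancel. Rearranging gives $\mathcal{A}=I+\frac{h^{2}}{4}\widetilde q^{\dag}\widetilde\Psi^{2}\widetilde q$.

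For the second identity, I would compute
\begin{equation*}
I=q_{n}q_{n}^{\dag}=\left(I+\frac{h}{2}\widetilde\Psi\right)\widetilde q\,\widetilde q^{\dag}\left(I-\frac{h}{2}\widetilde\Psi\right)=\left(I+\frac{h}{2}\widetilde\Psi\right)\widetilde{\mathcal{A}}\left(I-\frac{h}{2}\widetilde\Psi\right),
\end{equation*}
which is exactly the claimed relation.

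The main point needing justification is that $q_{n}$ is indeed unitary at every step, and I would handle it by induction on $n$. The base case is the initial datum $q_{0}\in\mathrm{SU}(N)$. For the inductive step, the continuous equation $\dot q=-\Psi q$ with skew-Hermitian $\Psi$ conserves the quadratic invariant $q^{\dag}q$, and the implicit midpoint rule preserves quadratic invariants exactly.

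This can also be checked directly. Since $\widetilde q=\left(I+\frac{h}{2}\widetilde\Psi\right)^{-1}q_{n}$, we get $q_{n+1}=\left(I-\frac{h}{2}\widetilde\Psi\right)\left(I+\frac{h}{2}\widetilde\Psi\right)^{-1}q_{n}$. The matrix $I+\frac{h}{2}\widetilde\Psi$ is invertible because $\widetilde\Psi$ has purely imaginary spectrum. The prefactor is the Cayley transform of a skew-Hermitian matrix, hence unitary, so $q_{n+1}$ is unitary.

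As a consistency check, the same computation with $q_{n+1}=\left(I-\frac{h}{2}\widetilde\Psi\right)\widetilde q$ yields the identical expression for $\mathcal{A}$, and the mirror identity $\left(I-\frac{h}{2}\widetilde\Psi\right)\widetilde{\mathcal{A}}\left(I+\frac{h}{2}\widetilde\Psi\right)=I$.
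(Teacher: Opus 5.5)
Your proof is correct and is the paper's argument: both expand $q_n^{\dag}q_n=I$ and $q_nq_n^{\dag}=I$ using $q_n=\bigl(I+\frac{h}{2}\widetilde\Psi\bigr)\widetilde q$ and $\widetilde\Psi^{\dag}=-\widetilde\Psi$. The paper justifies unitarity of $q_n$ only by citing the midpoint rule's preservation of quadratic invariants, while your Cayley-transform induction makes that step explicit. Strictly, the induction gives $q_n\in\mathrm{U}(N)$ rather than $\mathrm{SU}(N)$, since the midpoint rule need not preserve the determinant, but unitarity is all the lemma uses.
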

\begin{proof}
Since \eqref{reconstructCanHamMidp} is the implicit midpoint method applied to a Hamiltonian system, then it preserves quadratic invariants as any symplectic Runge-Kutta scheme. It means that $q_{n}q_{n}^{\dag}=q_{n}^{\dag}q_{n}=I$. From \eqref{reconstructCanHamMidp}, we get
\begin{equation*}
I=q_{n}^{\dag}q_{n}=\left(\widetilde q^{\dag}-\frac{h}{2}\widetilde q^{\dag}\widetilde\Psi\right)\left(\widetilde q+\frac{h}{2}\widetilde\Psi\widetilde q\right)=\widetilde q^{\dag}\widetilde q-\frac{h^{2}}{4}\widetilde q^{\dag}\widetilde\Psi^{2}\widetilde q=\mathcal{A}-\frac{h^{2}}{4}\widetilde q^{\dag}\widetilde\Psi^{2}\widetilde q,
\end{equation*}
and the first identity in \eqref{formulasAandAtilde} follows. Further,
\begin{equation*}
I=q_{n}q_{n}^{\dag}=\left(\widetilde q+\frac{h}{2}\widetilde\Psi\widetilde q\right)\left(\widetilde q^{\dag}-\frac{h}{2}\widetilde q^{\dag}\widetilde\Psi\right)=\left(I+\frac{h}{2}\widetilde\Psi\right)\widetilde q\widetilde q^{\dag}\left(I-\frac{h}{2}\widetilde\Psi\right)=\left(I+\frac{h}{2}\widetilde\Psi\right)\widetilde{\mathcal{A}}\left(I-\frac{h}{2}\widetilde\Psi\right),
\end{equation*}
and the second equality in \eqref{formulasAandAtilde} follows.
\end{proof}
This result reflects the fact that the implicit midpoint scheme \eqref{reconstructCanHamMidp} is not intrinsically formulated on the underlying Lie group, and calculations of the vector fields outside the Lie group may occur.

We proceed with the proof of Theorem~\ref{thmLPinterg2}.
\begin{proof}[Proof of Theorem~\ref{thmLPinterg2}]
We begin with the matrix $\Xi_{n}$.
\begin{equation*}
\begin{split}
\Xi_{n}&=q_{n}a q_{n}^{\dag}=\left(\widetilde q+\frac{h}{2}\widetilde\Psi\widetilde q\right)a\left(\widetilde q^{\dag}-\frac{h}{2}\widetilde q^{\dag}\widetilde\Psi\right)=\left(I+\frac{h}{2}\widetilde\Psi\right)\widetilde qa\widetilde q^{\dag}\left(I-\frac{h}{2}\widetilde\Psi\right)=\left(I+\frac{h}{2}\widetilde\Psi\right)\widetilde\Xi\left(I-\frac{h}{2}\widetilde\Psi\right),
\end{split}
\end{equation*}
where $\widetilde\Xi=qa\widetilde q^{\dag}$.

Further, for the matrix $Q_{n}$, we get
\begin{equation*}
\begin{split}
Q_{n}&=q_{n}\nu_{n}q_{n}^{\dag}=\left(\widetilde q+\frac{h}{2}\widetilde\Psi\widetilde q\right)\left(\widetilde\nu-\frac{h}{2}[a,\widetilde\alpha a-a\widetilde\alpha+b]\right)\left(\widetilde q^{\dag}-\frac{h}{2}\widetilde q^{\dag}\widetilde\Psi\right){}\\&=\left(I+\frac{h}{2}\widetilde\Psi\right)\widetilde q\widetilde\nu\widetilde q^{\dag}\left(I-\frac{h}{2}\widetilde\Psi\right)-\frac{h}{2}\left(I+\frac{h}{2}\widetilde\Psi\right)\widetilde q[a,\widetilde\alpha a-a\widetilde\alpha+b]\widetilde q^{\dag}\left(I-\frac{h}{2}\widetilde\Psi\right){}\\&=\left(I+\frac{h}{2}\widetilde\Psi\right)\left(\widetilde Q-\frac{h}{2}\widetilde N\right)\left(I-\frac{h}{2}\widetilde\Psi\right),
\end{split}
\end{equation*}
where $\widetilde Q=\widetilde q\widetilde\nu\widetilde q^{\dag}$ and $\widetilde N=q[a,\widetilde\alpha a-a\widetilde\alpha+b]\widetilde q^{\dag}$. 

Let us define $\widetilde P=\widetilde q(\widetilde\alpha a-a\widetilde\alpha+b)\widetilde q^{\dag}$, and find
\begin{equation*}
[\widetilde\Xi,\widetilde P]=\widetilde{q}a\widetilde q^{\dag}\widetilde q(\widetilde\alpha a-a\widetilde\alpha+b)\widetilde q^{\dag}-\widetilde q(\widetilde\alpha a-a\widetilde\alpha+b)\widetilde q^{\dag}\widetilde{q}a\widetilde q^{\dag}=\widetilde q(a\mathcal{A}(\widetilde\alpha a-a\widetilde\alpha+b)-(\widetilde\alpha a-a\widetilde\alpha+b)\mathcal{A}a)\widetilde q^{\dag}.
\end{equation*}
Further, using \eqref{formulasAandAtilde} for the matrix $\mathcal{A}$, we get
\begin{equation*}
\begin{split}
[\widetilde\Xi,\widetilde P]&=\widetilde q\left(a\left(I+\frac{h^{2}}{4}\widetilde q^{\dag}\widetilde\Psi^{2}\widetilde q\right)(\widetilde\alpha a-a\widetilde\alpha+b)-(\widetilde\alpha a-a\widetilde\alpha+b)\left(I+\frac{h^{2}}{4}\widetilde q^{\dag}\widetilde\Psi^{2}\widetilde q\right)a\right)\widetilde q^{\dag}{}\\&=
\widetilde N+\frac{h^{2}}{4}\widetilde\Xi\widetilde\Psi^{2}\widetilde P-\frac{h^{2}}{4}\widetilde P\widetilde\Psi^{2}\widetilde\Xi,
\end{split}
\end{equation*}
which implies \eqref{MNtildes}.

For the matrix $P_{n}$, we get
\begin{equation}
\label{intermediateforPn}
\begin{split}
P_{n}&=q_{n}(\alpha_{n}a-a\alpha_{n}+b)q_{n}^{\dag}=\left(\widetilde q+\frac{h}{2}\widetilde\Psi\widetilde q\right)\left(\left[\widetilde\alpha+\frac{h}{2}\widetilde\nu-h\widetilde q^{\dag}\widetilde\Psi\widetilde q,a\right]+b\right)\left(\widetilde q^{\dag}-\frac{h}{2}\widetilde q^{\dag}\widetilde\Psi\right){}\\&=
\left(I+\frac{h}{2}\widetilde\Psi\right)\widetilde q([\widetilde\alpha,a]+b)\widetilde q^{\dag}\left(I-\frac{h}{2}\widetilde\Psi\right)
+\frac{h}{2}\left(I+\frac{h}{2}\widetilde\Psi\right)\widetilde q[\widetilde\nu,a]\widetilde q^{\dag}\left(I-\frac{h}{2}\widetilde\Psi\right){}\\&
-h\left(I+\frac{h}{2}\widetilde\Psi\right)\widetilde q[\widetilde q^{\dag}\widetilde\Psi\widetilde q,a]\widetilde q^{\dag}\left(I-\frac{h}{2}\widetilde\Psi\right){}\\&
=\left(I+\frac{h}{2}\widetilde\Psi\right)\left(\widetilde P+\frac{h}{2}\widetilde M\right)\left(I-\frac{h}{2}\widetilde\Psi\right)-h\left(I+\frac{h}{2}\widetilde\Psi\right)\widetilde B\left(I-\frac{h}{2}\widetilde\Psi\right),
\end{split}
\end{equation}
where $\widetilde M=\widetilde q[\widetilde\nu,a]\widetilde q^{\dag}$, and $\widetilde B=\widetilde q[\widetilde q^{\dag}\widetilde\Psi\widetilde q,a]\widetilde q^{\dag}$.

We need to eliminate $\widetilde q$ and $\widetilde q^{\dag}$ from the expressions for $\widetilde M$ and $\widetilde B$. To that end, consider
\begin{equation*}
\begin{split}
[\widetilde\Xi,\widetilde Q]&=\widetilde qa\widetilde q^{\dag}\widetilde q\widetilde\nu\widetilde q^{\dag}-\widetilde q\widetilde\nu\widetilde q^{\dag}\widetilde qa\widetilde q^{\dag}=\widetilde q(a\mathcal{A}\widetilde\nu-\widetilde\nu\mathcal{A}a)\widetilde q^{\dag}{}\\&=\widetilde q\left(a\left(I+\frac{h^{2}}{4}\widetilde q^{\dag}\widetilde\Psi^{2}\widetilde q\right)\widetilde\nu-\widetilde\nu\left(I+\frac{h^{2}}{4}\widetilde q^{\dag}\widetilde\Psi^{2}\widetilde q\right)a\right)\widetilde q^{\dag}=
-\widetilde M+\frac{h^{2}}{4}\widetilde\Xi\widetilde\Psi^{2}\widetilde Q-\frac{h^{2}}{4}\widetilde Q\widetilde\Psi^{2}\widetilde\Xi,
\end{split}
\end{equation*}
where we used \eqref{formulasAandAtilde} for the matrix $\mathcal{A}$, and which implies \eqref{MNtildes}.

Further, for the matrix $\widetilde B$, we have
\begin{equation*}
\widetilde B=\widetilde q[\widetilde q^{\dag}\widetilde\Psi\widetilde q,a]\widetilde q^{\dag}=\widetilde q\widetilde q^{\dag}\widetilde\Psi\widetilde qa\widetilde q^{\dag}-\widetilde qa\widetilde q^{\dag}\widetilde\Psi\widetilde q\widetilde q^{\dag}=\widetilde{\mathcal{A}}\widetilde\Psi\widetilde\Xi-\widetilde\Xi\widetilde\Psi\widetilde{\mathcal{A}}.
\end{equation*}
Using \eqref{MNtildes} for $\widetilde{\mathcal{A}}$, we get
\begin{equation*}
\begin{split}
\left(I+\frac{h}{2}\widetilde\Psi\right)\widetilde B\left(I-\frac{h}{2}\widetilde\Psi\right)&=\left(I+\frac{h}{2}\widetilde\Psi\right)\widetilde{\mathcal{A}}\widetilde\Psi\widetilde\Xi\left(I-\frac{h}{2}\widetilde\Psi\right)-\left(I+\frac{h}{2}\widetilde\Psi\right)\widetilde\Xi\widetilde\Psi\widetilde{\mathcal{A}}\left(I-\frac{h}{2}\widetilde\Psi\right){}\\&=\left(I-\frac{h}{2}\widetilde\Psi\right)^{-1}\widetilde\Psi\widetilde\Xi\left(I-\frac{h}{2}\widetilde\Psi\right)-\left(I+\frac{h}{2}\widetilde\Psi\right)\widetilde\Xi\widetilde\Psi\left(I+\frac{h}{2}\widetilde\Psi\right)^{-1},
\end{split}
\end{equation*}
and equation \eqref{intermediateforPn} finally becomes
\begin{equation*}
\begin{split}
P_{n}&=\left(I+\frac{h}{2}\widetilde\Psi\right)\left(\widetilde P+\frac{h}{2}\widetilde M\right)\left(I-\frac{h}{2}\widetilde\Psi\right)-h\left(I-\frac{h}{2}\widetilde\Psi\right)^{-1}\widetilde\Psi\widetilde\Xi\left(I-\frac{h}{2}\widetilde\Psi\right){}\\&+h\left(I+\frac{h}{2}\widetilde\Psi\right)\widetilde\Xi\widetilde\Psi\left(I+\frac{h}{2}\widetilde\Psi\right)^{-1}
\end{split}
\end{equation*}

For the matrix $W_{n}$, we get
\begin{equation*}
\begin{split}
W_{n}&=\frac{1}{2}(p_{n}q_{n}^{\dag}-q_{n}p_{n}^{\dag})+2q_{n}\nu_{n}q_{n}^{\dag}{}\\&=\frac{1}{2}(\widetilde p\widetilde q^{\dag}-\widetilde q\widetilde p^{\dag})-\frac{h}{2}[\widetilde W,\widetilde\Psi]-\frac{h}{2}[\widetilde\Xi,\widetilde J]-\frac{h^{2}}{4}(\widetilde\Psi\widetilde W\widetilde\Psi+\widetilde J\widetilde\Xi\widetilde\Psi+\widetilde \Psi\widetilde\Xi\widetilde J+2\widetilde\Psi\widetilde Q\widetilde\Psi){}\\&+\left(I+\frac{h}{2}\widetilde\Psi\right)\left(2\widetilde Q-h\widetilde N\right)\left(I-\frac{h}{2}\widetilde\Psi\right){}\\&=\left(I+\frac{h}{2}\widetilde\Psi\right)\left(\widetilde W-h\widetilde N\right)\left(I-\frac{h}{2}\widetilde\Psi\right)-\frac{h}{2}[\widetilde\Xi,\widetilde J]-\frac{h}{2}[2\widetilde Q,\widetilde\Psi]-\frac{h^{2}}{4}\left(\widetilde J\widetilde\Xi\widetilde\Psi+\widetilde\Psi\widetilde\Xi\widetilde J+4\widetilde\Psi\widetilde Q\widetilde\Psi\right),
\end{split}
\end{equation*}
where $\widetilde W=(\widetilde p\widetilde q^{\dag}-\widetilde q\widetilde p^{\dag})/2+2\widetilde q\widetilde\nu\widetilde q^{\dag}$.

Finally, expressions for $(W_{n+1},Q_{n+1},P_{n+1},\Xi_{n+1})$ are obtained from those for $(W_{n},Q_{n},P_{n},\Xi_{n})$ by changing $+h$ to $-h$, which follows from the definition of the implicit midpoint scheme.

The Lie--Poisson property of the scheme follows directly from Theorem~\ref{thmdescendLPintegr}.
\end{proof}

The method \eqref{LPintegrator2} obtained as a Lie--Poisson reduced implicit midpoint scheme is clearly different from \eqref{integrationMethodFor3DMHD}. Firstly, it contains $\mathcal{O}(h^{3})$ terms that are not present in \eqref{integrationMethodFor3DMHD}, and secondly, it involves matrix inverse operations. This situation is different from integration of the matrix equations for the two-field MHD \eqref{qMHD}. Namely, the methods obtained by applying the isospectral integrator \eqref{LPintegrMilo} to the respective block matrix equations equivalent to \eqref{qMHD} and by the Lie--Poisson reduction of the implicit midpoint rule, are identical \cite{ModRoop}. The explanation of this fact comes from the quadratic structure of the momentum map, which makes the Lie--Poisson reduction and a more general reduction by quadratic maps developed in \cite{VivianiStern2026} equivalent. Here, the momentum map \eqref{mommapforsemidir} is cubic, which puts the results developed here outside of the framework developed in \cite{VivianiStern2026}. Furthermore, as already noted, semidirect products have a remarkable feature that the reconstructed canonical equations evolve on a submanifold of $T^{*}\mathcal{F}$ of the same dimension as the corresponding semidirect product Lie algebra $\mathfrak{F}$. In other words, the reconstructed equations operate with the same number of fields as the original reduced equations on $\mathfrak{F}^{*}$. It means that in the context of time integrators there is no additional computational price that needed to be paid to integrate the reconstructed equations compared to the original reduced equations, and therefore the possibility to reduce the integration method to $\mathfrak{F}^{*}$ is not critical. In general, reconstructed equations involve twice as many fields as reduced equations, and therefore it is important that integrators on $T^{*}\mathcal{F}$ explicitly descend to integrators on $\mathfrak{F}^{*}$.

\subsection{Numerical simulations}
Here, we show the preservation properties of the method \eqref{LPintegrator2}-\eqref{MNtildes}. The setup is the same as in the previous section~\ref{sectnumersim1}.

In Fig.~\ref{preservationofspec2}, one observes the preservation of the spectrum of the matrix $\Xi$.
\begin{figure}[ht!]
   \includegraphics{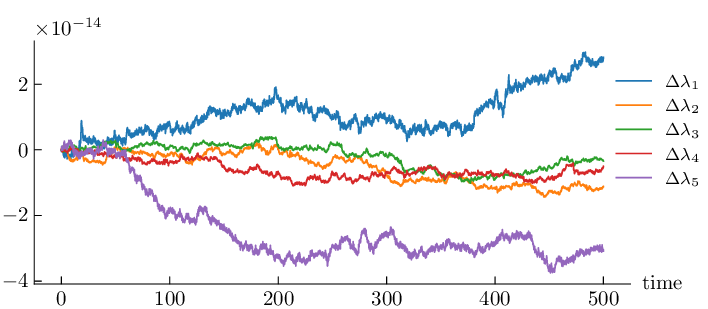}
\caption{Preservation of the eigenvalues $\lambda_{1},\ldots,\lambda_{5}$ of the matrix $\Xi\in\mathfrak{su}(N)$ with $N=5$. The step-size is $h=0.01$. The magnitude $10^{-14}$ of the variation $\Delta\lambda_{i}=\lambda_{i}(t)-\lambda_{i}(0)$ of the eigenvalues shows that the spectrum of $\Xi$ is preserved up to machine precision. 
}
\label{preservationofspec2}
\end{figure}

Further, Fig.~\ref{preservationofcasimirs2} shows the preservation of the Casimirs $K_{m}^{N}$ and $J_{m}^{N}$. One observes that the Casimirs with an odd number of multipliers under the trace sign are preserved with a better accuracy, as in the simulations with the method \eqref{integrationMethodFor3DMHD}.
\begin{figure}[ht!]
   \includegraphics{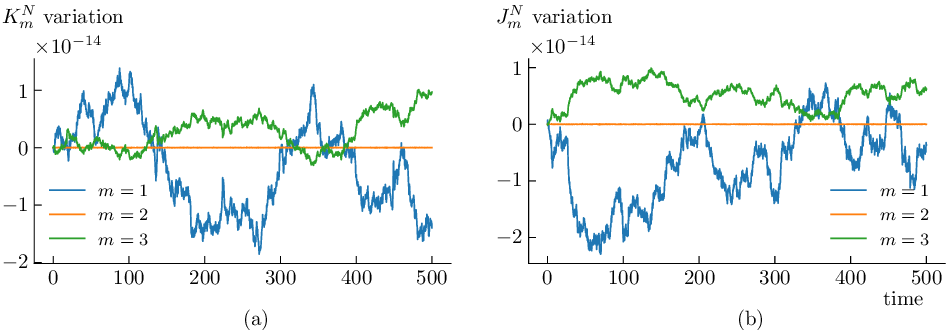}
\caption{Preservation of the Casimirs $K_{m}^{N}$ (a) and $J_{m}^{N}$ (b) for $N=5$ and $m=1,2,3$. The step-size is $h=0.01$. Casimirs $K_{2}^{N}$ and $J_{2}^{N}$ are preserved with a higher accuracy compared to $K_{3}^{N}$ and $J_{3}^{N}$. The magnitude $10^{-14}$ of the variation of the Casimirs shows preservation up to machine precision. 
}
\label{preservationofcasimirs2}
\end{figure}

Finally, in Fig.~\ref{preservationofcrosshelandHam2}, we show the preservation of the cross-helicity Casimir $\mathcal{I}^{N}_{m}$ and the Hamiltonian $H^{N}$. The cross-helicity $\mathcal{I}^{N}_{m}$ is preserved up to machine precision, as all the other Casimirs. The Hamiltonian function $H^{N}$ is preserved up to the order $10^{-5}$.
\begin{figure}[ht!]
   \includegraphics{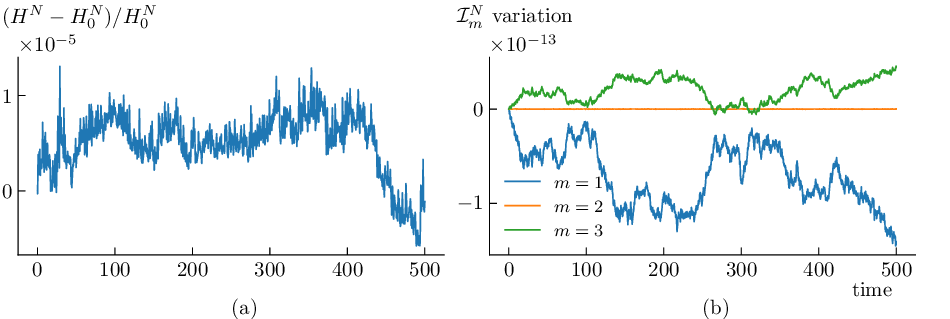}
\caption{Preservation of the Hamiltonian $H^{N}$ (a) and the cross-helicity Casimir $\mathcal{I}^{N}_{m}$ (b) for $N=5$. The step-size is $h=0.01$. The Casimir $\mathcal{I}^{N}_{m}$ is preserved up to machine precision, which is indicated by the variation magnitude $10^{-13}$. The magnitude $10^{-5}$ in the relative error of the Hamiltonian indicates its near preservation.
}
\label{preservationofcrosshelandHam2}
\end{figure}
\section{Conclusion}
The system of axisymmetric MHD equations on $\mathbb{S}^{3}$ derived in the present paper possesses a Hamiltonian formulation as a Lie--Poisson system on the dual of the semidirect product Lie algebra, which implies the existence of an infinite number of conservation laws. Its finite-dimensional approximation has been developed based on Zeitlin's matrix equations. The resulting finite-dimensional Lie--Poisson system possesses a finite number of independent Casimirs approximating those of the original equations, and their number grows up, as the dimension of the matrices increases. The finite-dimensional axisymmetric MHD--Zeitlin equations are further approximated in time using two structure preserving time integrators, whose preservation properties are shown in numerical simulations. One of them is obtained by applying the isospectral symplectic Runge--Kutta integrator to the block matrix equations equivalent to \eqref{LP3Dquant}, while the other one is derived via the discrete Lie--Poisson reduction for semidirect products of the form $\mathfrak{F}=(\mathfrak{su}(N)\ltimes\mathfrak{su}(N)^{*})\ltimes(\mathfrak{su}(N)\ltimes\mathfrak{su}(N)^{*})^{*}$.

The developed discretization provides the possibility of structure-preserving simulation of turbulence governed by the reduced version of the 3D MHD equations, thus extending the structure-preserving approach to MHD turbulence from two dimensions addressed previously to three dimensions. 
\appendix
\section{Derivation of reduced equations}
\label{appendix1}
Here, we show the details of the derivation of equations \eqref{MHDtilde} from equations \eqref{vectorfieldvortform}.

Using expressions \eqref{ubS3} and \eqref{omegaS3} for the fields $u$ and $\omega$, we get (we will use the notation $\tilde\Delta\tilde f=(E_{3}^{2}+E_{2}^{2})\tilde f$ for any $E_{1}$-invariant function $\tilde f$):
\begin{equation}
\label{wucommutatorintermsoffuncts}
\begin{split}
[\omega,u]&=[(\tilde\Delta\tilde\psi)E_{1}+(E_{3}\tilde q)E_{2}-(E_{2}\tilde q)E_{3},\tilde\sigma E_{1}-(E_{3}\tilde\psi)E_{2}+(E_{2}\tilde\psi)E_{3}]{}\\&=
[(\tilde\Delta\tilde\psi)E_{1},\tilde\sigma E_{1}]+[(E_{3}\tilde\psi)E_{2},(\tilde\Delta\tilde\psi)E_{1}]+[(\tilde\Delta\tilde\psi)E_{1},(E_{2}\tilde\psi)E_{3}]{}\\&+
[(E_{3}\tilde q)E_{2},\tilde\sigma E_{1}]+[(E_{3}\tilde\psi)E_{2},(E_{3}\tilde q)E_{2}]+[(E_{3}\tilde q)E_{2},(E_{2}\tilde\psi)E_{3}]{}\\&+
[\tilde\sigma E_{1},(E_{2}\tilde q)E_{3}]+[(E_{2}\tilde q)E_{3},(E_{3}\tilde\psi)E_{2}]+[(E_{2}\tilde \psi)E_{3},(E_{2}\tilde q)E_{3}],
\end{split}
\end{equation}
where $\tilde{q}=\tilde{\psi}+\tilde{\sigma}$.

To proceed, we will need one brief intermediate result.
\begin{lemma}
\label{lemmaaboutdelta}
If $\tilde\psi$ is $E_{1}$-invariant, then $\tilde\Delta\tilde\psi$ is also $E_{1}$-invariant.
\end{lemma}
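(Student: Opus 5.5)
We need to show $E_1(\tilde\Delta\tilde\psi)=0$, where $\tilde\Delta=E_2^2+E_3^2$. The plan is to show that $E_1$ commutes with the operator $E_2^2+E_3^2$ acting on functions. Then $E_1\tilde\Delta\tilde\psi=\tilde\Delta E_1\tilde\psi=0$.

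The commutator relations \eqref{LieAlgStr} give $[E_1,E_2]=-E_3$ and $[E_1,E_3]=E_2$, the latter since $[E_3,E_1]=-E_2$. For any vector fields $X,Y$ acting as derivations on functions we have $[X,Y^2]=[X,Y]Y+Y[X,Y]$. Applying this twice gives
\begin{equation*}
[E_1,E_2^2+E_3^2]=-E_3E_2-E_2E_3+E_2E_3+E_3E_2=0 .
\end{equation*}
Hence $E_1(E_2^2+E_3^2)\tilde\psi=(E_2^2+E_3^2)E_1\tilde\psi=0$ by the $E_1$-invariance of $\tilde\psi$.

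An alternative route is to note that the Casimir $E_1^2+E_2^2+E_3^2$ of the Lie algebra \eqref{LieAlgStr} commutes with each $E_i$; it is the Laplacian on $\mathbb{S}^3$ up to a constant factor, since the $E_i$ are Killing fields. Then $\tilde\Delta=(E_1^2+E_2^2+E_3^2)-E_1^2$ is a difference of two operators commuting with $E_1$.

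The only step needing care is the sign bookkeeping in the commutators, and in particular using the operator identity on functions rather than any product of vector fields. No analytic obstacle is expected.
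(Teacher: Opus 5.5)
Your proof is correct and follows essentially the same route as the paper: commute $E_1$ past $E_2^2+E_3^2$ using $[E_1,E_2]=-E_3$ and $[E_1,E_3]=E_2$, then use $E_1\tilde\psi=0$. The only cosmetic difference is that you get $[E_1,E_2^2+E_3^2]=0$ as an identity on all functions, whereas the paper cancels the leftover cross terms by noting that $E_2$ and $E_3$ commute on $E_1$-invariant functions.
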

\begin{proof}
We need to check that $E_{1}(\tilde\Delta\tilde\psi)=0$.
\begin{equation*}
\begin{split}
E_{1}(\tilde\Delta\tilde\psi)&=E_{1}E_{2}E_{2}\tilde\psi+E_{1}E_{3}E_{3}\tilde\psi=\underbrace{[E_{1},E_{2}]}_{=-E_{3}}E_{2}\tilde\psi+E_{2}E_{1}E_{2}\tilde\psi+\underbrace{[E_{1},E_{3}]}_{=E_{2}}E_{3}\tilde\psi+E_{3}E_{1}E_{3}\tilde\psi{}\\&=E_{2}\underbrace{[E_{1},E_{2}]}_{=-E_{3}}\tilde\psi+E_{2}^{2}E_{1}\tilde\psi+E_{3}^{2}E_{1}\tilde\psi+E_{3}\underbrace{[E_{1},E_{3}]}_{=E_{2}}\tilde\psi=0,
\end{split}
\end{equation*}
where we used \eqref{LieAlgStr}, $E_{1}$-invariance of $\tilde\psi$, and the fact that $E_{2}$ and $E_{3}$ commute when acting upon $E_{1}$-invariant functions, again due to \eqref{LieAlgStr}.
\end{proof}
Further, using the formula
\begin{equation*}
[\tilde fE_{i},\tilde g E_{j}]=\tilde fE_{i}(\tilde g)E_{j}-\tilde g E_{j}(\tilde f)E_{i}+\tilde f\tilde g[E_{i},E_{j}],
\end{equation*}
we transform each of the nine terms in \eqref{wucommutatorintermsoffuncts}. First, we conclude that because of $E_{1}$-invariance and the result of Lemma \ref{lemmaaboutdelta}, the first term vanishes: $[(\tilde\Delta\tilde\psi)E_{1},\tilde\sigma E_{1}]=0$. Next, we get
\begin{align}
\begin{split}
\label{allnineterms}
[(E_{3}\tilde\psi)E_{2},(\tilde\Delta\tilde\psi)E_{1}]&=(E_{3}\tilde\psi)E_{2}(\tilde\Delta\tilde\psi)E_{1}-(\tilde\Delta\tilde\psi)(E_{2}\tilde\psi)E_{2}+(E_{3}\tilde\psi)(\tilde\Delta\tilde\psi)E_{3}{}\\
[(\tilde\Delta\tilde\psi)E_{1},(E_{2}\tilde\psi)E_{3}]&=-(E_{2}\tilde\psi)E_{3}(\tilde\Delta\tilde\psi)E_{1}+(\tilde\Delta\tilde\psi)(E_{2}\tilde\psi)E_{2}-(\tilde\Delta\tilde\psi)(E_{3}\tilde\psi)E_{3}{}\\
[(E_{3}\tilde q)E_{2},\tilde\sigma E_{1}]&=(E_{3}\tilde q)(E_{2}\tilde\sigma)E_{1}-\tilde\sigma(E_{2}\tilde q)E_{2}+\tilde\sigma(E_{3}\tilde q)E_{3},{}\\
[(E_{3}\tilde\psi)E_{2},(E_{3}\tilde q)E_{2}]&=(E_{3}\tilde\psi)(E_{2}E_{3}\tilde q)E_{2}-(E_{3}\tilde q)(E_{2}E_{3}\tilde\psi)E_{2},{}\\
[(E_{3}\tilde q)E_{2},(E_{2}\tilde\psi)E_{3}]&=-(E_{3}\tilde q)(E_{2}\tilde\psi)E_{1}-(E_{2}\tilde\psi)(E_{3}^{2}\tilde q)E_{2}+(E_{3}\tilde q)(E_{2}^{2}\tilde\psi)E_{3},{}\\
[\tilde\sigma E_{1},(E_{2}\tilde q)E_{3}]&=-(E_{2}\tilde q)(E_{3}\tilde\sigma)E_{1}+\tilde\sigma(E_{2}\tilde q)E_{2}-\tilde\sigma(E_{3}\tilde q)E_{3},{}\\
[(E_{2}\tilde q)E_{3},(E_{3}\tilde\psi)E_{2}]&=(E_{2}\tilde q)(E_{3}\tilde\psi)E_{1}+(E_{2}\tilde q)(E_{3}^{2}\tilde\psi)E_{2}-(E_{3}\tilde\psi)(E_{2}^{2}\tilde q)E_{3},{}\\
[(E_{2}\tilde\psi)E_{3},(E_{2}\tilde q)E_{3}]&=(E_{2}\tilde\psi)(E_{3}E_{2}\tilde q)E_{3}-(E_{2}\tilde q)(E_{3}E_{2}\tilde\psi)E_{3}.
\end{split}
\end{align}
Summing up all the terms in \eqref{allnineterms}, we obtain
\begin{equation}
\label{finalformsforwubrack}
\begin{split}
[\omega,u]&=\left((E_{3}\tilde\psi)E_{2}(\tilde\Delta\tilde\psi)-(E_{2}\tilde\psi)E_{3}(\tilde\Delta\tilde\psi)+(E_{3}\tilde q)(E_{2}\tilde\sigma)-(E_{3}\tilde q)(E_{2}\tilde\psi)-(E_{2}\tilde q)(E_{3}\tilde\sigma)+(E_{2}\tilde q)(E_{3}\tilde\psi)\right)E_{1}{}\\&+
\left((E_{3}\tilde\psi)(E_{2}E_{3}\tilde q)-(E_{3}\tilde q)(E_{2}E_{3}\tilde\psi)-(E_{2}\tilde\psi)(E_{3}^{2}\tilde q)+(E_{2}\tilde q)(E_{3}^{2}\tilde\psi)\right)E_{2}{}\\&+\left((E_{3}\tilde q)(E_{2}^{2}\tilde\psi)-(E_{3}\tilde\psi)(E_{2}^{2}\tilde q)+(E_{2}\tilde\psi)(E_{3}E_{2}\tilde q)-(E_{2}\tilde q)(E_{3}E_{2}\tilde\psi)\right)E_{3}{}\\&=\left(\mathfrak{B}(\tilde\Delta\tilde\psi,\tilde\psi)+\mathfrak{B}(\tilde\sigma,\tilde q)+\mathfrak{B}(\tilde q,\tilde\psi)\right)E_{1}+E_{3}\left(\mathfrak{B}(\tilde q,\tilde\psi)\right)E_{2}-E_{2}\left(\mathfrak{B}(\tilde q,\tilde\psi)\right)E_{3}{}\\&=\left(\mathfrak{B}(\tilde\Delta\tilde\psi,\tilde\psi)+2\mathfrak{B}(\tilde q,\tilde\psi)\right)E_{1}+E_{3}\left(\mathfrak{B}(\tilde q,\tilde\psi)\right)E_{2}-E_{2}\left(\mathfrak{B}(\tilde q,\tilde\psi)\right)E_{3}
,
\end{split}
\end{equation}
where 
\begin{equation*}
\mathfrak{B}(\tilde f,\tilde h)=E_{2}(\tilde f)E_{3}(\tilde h)-E_{3}(\tilde f)E_{2}(\tilde h),
\end{equation*}
for arbitrary $E_{1}$-invariant functions $\tilde f$ and $\tilde h$, and where in the last equality we used the definition of the new field $\tilde q=\tilde\sigma+\tilde\psi$.

Analogously, for the bracket $[B,j]$, we get
\begin{equation}
\label{finalformforbjbrack}
[B,j]=\left(\mathfrak{B}(\tilde\xi,\tilde\Delta\tilde\xi)+2\mathfrak{B}(\tilde\xi,\tilde\rho)\right)E_{1}+E_{3}\left(\mathfrak{B}(\tilde\xi,\tilde\rho)\right)E_{2}-E_{2}\left(\mathfrak{B}(\tilde\xi,\tilde\rho)\right)E_{3}.
\end{equation}
Since $\dot\omega=[\omega,u]+[B,j]$ and $\dot\omega=(\tilde\Delta\dot{\tilde\psi})E_{1}+(E_{3}(\dot{\tilde q}))E_{2}-(E_{2}\dot{\tilde q})E_{3}$, from \eqref{finalformsforwubrack}-\eqref{finalformforbjbrack}, we find the following equations for the pair $(\tilde\psi,\tilde q)$:
\begin{equation*}
\left\{
\begin{aligned}
&\frac{\partial}{\partial t}\tilde\Delta\tilde\psi=\mathfrak{B}(\tilde\Delta\tilde\psi,\tilde\psi)+\mathfrak{B}(\tilde\xi,\tilde\Delta\tilde\xi)+2\mathfrak{B}(\tilde q,\tilde\psi)+2\mathfrak{B}(\tilde \xi,\tilde\rho), \\
&\frac{\partial\tilde q}{\partial t}=\mathfrak{B}(\tilde q,\tilde\psi)+\mathfrak{B}(\tilde\xi,\tilde\rho).\\
\end{aligned}
\right.
\end{equation*}

Repeating the same steps for the equation $\dot B=[B,u]$, we also find the corresponding equations for the pair $(\tilde\rho,\tilde\xi)$:
\begin{equation*}
\left\{
\begin{aligned}
&\frac{\partial\tilde\rho}{\partial t}=\mathfrak{B}(\tilde \rho,\tilde\psi)+\mathfrak{B}(\tilde \xi,\tilde q)+2\mathfrak{B}(\tilde \psi,\tilde\xi),\\
&\frac{\partial\tilde\xi}{\partial t}=\mathfrak{B}(\tilde\xi,\tilde\psi).\\
\end{aligned}
\right.
\end{equation*}
\section{Reduction theory for semidirect product extensions}
\label{appendix2}
Here, we develop the reduction theory for semidirect product Lie algebras of the form $\mathfrak{F}=(\mathfrak{g}\ltimes\mathfrak{g}^{*})\ltimes(\mathfrak{g}\ltimes\mathfrak{g}^{*})^{*}$, with $\mathfrak{g}$ being a matrix Lie algebra. The main goal is to derive formula \eqref{mommapforsemidir} for the momentum map. The exposition here holds for an arbitrary matrix Lie algebra $\mathfrak{g}$, with $\mathfrak{g}=\mathfrak{su}(N)$ being a primary example.

A Lie group $\mathcal{F}$ corresponding to the Lie algebra $\mathfrak{F}=\mathfrak{f}\ltimes\mathfrak{f}^{*}$, with $\mathfrak{f}=\mathfrak{g}\ltimes\mathfrak{g}^{*}$, has the form $\mathcal{F}=F\ltimes\mathfrak{f}^{*}$, where $F$ is the Lie group for $\mathfrak{f}=\mathfrak{g}\ltimes\mathfrak{g}^{*}$, which is $F=\mathcal{G}\ltimes\mathfrak{g}^{*}$, with $\mathcal{G}$ being the Lie group for $\mathfrak{g}$. 

Identifying $\mathfrak{g}^{*}$ with $\mathfrak{g}$ via the Frobenius inner product $\langle A,B\rangle_{F}=\mathrm{tr}(A^{\dag}B)$, for $A\in\mathfrak{g}^{*},\,B\in\mathfrak{g}$, we will use, in what follows, the following formulas for the adjoint and coadjoint operators:
\begin{equation}
\label{adjointandcadjonG}
\mathrm{Ad}_{q}\alpha=q\alpha q^{-1},\quad\mathrm{Ad}_{q}^{*}\alpha^{\prime}=q^{\dag}\alpha^{\prime} (q^{-1})^{\dag},\quad \mathrm{ad}_{v}\alpha=[v,\alpha],\quad\mathrm{ad}_{v}^{*}\alpha^{\prime}=[v^{\dag},\alpha^{\prime}],
\end{equation}
for $q\in\mathcal{G}$, $\alpha,v\in\mathfrak{g}$, $\alpha^{\prime}\in\mathfrak{g}^{*}$.

Let now $(G,U),\,(Q,M)\in\mathcal{F}$, i.e. $G=(g,\beta)\in F,\,U=(u,\sigma)\in\mathfrak{f}^{*}$, and $Q=(q,\alpha)\in F,\,M=(m,\gamma)\in\mathfrak{f}^{*}$, which in turn means that $g,q\in\mathcal{G}$, $m,u\in\mathfrak{g}$, and $\alpha,\beta,\gamma,\sigma\in\mathfrak{g}^{*}$. Then, the group operation on $\mathcal{F}$ has the form of the standard group operation on a magnetic extension Lie group (see, for example, \cite[Ch. I.10.B]{ArnKh})
\begin{equation}
\label{groupoperationonsemidirextension}
(G,U)\cdot(Q,M)=(GQ,\mathrm{Ad}^{*}_{Q}U+M),
\end{equation}
where $GQ=(g,\beta)\cdot(q,\alpha)=(gq,\mathrm{Ad}^{*}_{q}\beta+\alpha)$ is the group operation on $F=\mathcal{G}\ltimes\mathfrak{g}^{*}$, and $\mathrm{Ad}^{*}_{Q}U$ is the operator of the coadjoint action of $F$ on $\mathfrak{f}^{*}$.
\begin{proposition}
The coadjoint operator of the $F$-action on $\mathfrak{f}^{*}$ is 
\begin{equation*}
\mathrm{Ad}^{*}_{Q}U=\left(\mathrm{Ad}_{q^{-1}}u,\mathrm{Ad}^{*}_{q}\sigma-\mathrm{ad}^{*}_{\mathrm{Ad}_{q^{-1}}u}\alpha\right).
\end{equation*}
\end{proposition}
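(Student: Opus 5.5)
The plan is to compute $\mathrm{Ad}^{*}_{Q}$ on $\mathfrak{f}^{*}$ directly from its definition as the dual of $\mathrm{Ad}_{Q}$ on $\mathfrak{f}=\mathfrak{g}\ltimes\mathfrak{g}^{*}$, where $\langle \mathrm{Ad}^{*}_{Q}U,X\rangle=\langle U,\mathrm{Ad}_{Q}X\rangle$ for $X=(v,\lambda)\in\mathfrak{f}$. We use the same convention as in the group law \eqref{groupoperationonsemidirextension}, where $\mathrm{Ad}^{*}_{q}$ appears as the right action making $(g,\beta)(q,\alpha)=(gq,\mathrm{Ad}^{*}_{q}\beta+\alpha)$ a group. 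Accordingly $\mathrm{Ad}^{*}_{Q}$ is understood as the transpose of $\mathrm{Ad}_{Q}$, and this is consistent with the formula $\mathrm{Ad}^{*}_{q}\alpha'=q^{\dag}\alpha'(q^{-1})^{\dag}$ in \eqref{adjointandcadjonG}.

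\emph{Step 1.} Compute $\mathrm{Ad}_{Q}$ on $\mathfrak{f}$ for $Q=(q,\alpha)\in F=\mathcal{G}\ltimes\mathfrak{g}^{*}$. Take a curve $(e^{tv},t\lambda)$ and form the conjugate $Q\,(e^{tv},t\lambda)\,Q^{-1}$ with the group law above. The inverse is $Q^{-1}=(q^{-1},-\mathrm{Ad}^{*}_{q^{-1}}\alpha)$, which one checks directly. Differentiating at $t=0$ gives the first component $\mathrm{Ad}_{q}v$. The second component is a linear combination of $\mathrm{Ad}^{*}_{q^{-1}}\lambda$ and a term $\mathrm{ad}^{*}_{\mathrm{Ad}_{q}v}$ acting on $\alpha$ (or on its transported version). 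This term comes from differentiating $\mathrm{Ad}^{*}_{e^{tv}q^{-1}}$ and using $\frac{d}{dt}\mathrm{Ad}^{*}_{e^{tv}}\big|_{0}=\mathrm{ad}^{*}_{v}$, with the sign fixed by \eqref{adjointandcadjonG}. The expected result is
\[
\mathrm{Ad}_{Q}(v,\lambda)=\bigl(\mathrm{Ad}_{q}v,\ \mathrm{Ad}^{*}_{q^{-1}}(\lambda+\mathrm{ad}^{*}_{v}\alpha)\bigr),
\]
possibly up to placement of the $\alpha$-term and a sign, which this step will pin down.

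\emph{Step 2.} Pair with $U=(u,\sigma)\in\mathfrak{f}^{*}$, using the pairing $\langle(u,\sigma),(v,\lambda)\rangle=\langle\sigma,v\rangle_{F}+\langle\lambda,u\rangle_{F}$. Here $u\in\mathfrak{g}$ pairs with the $\mathfrak{g}^{*}$-slot and $\sigma\in\mathfrak{g}^{*}$ pairs with the $\mathfrak{g}$-slot. Move every operator onto $U$ using the elementary transposition identities $\langle \mathrm{Ad}^{*}_{q^{-1}}\lambda,u\rangle=\langle\lambda,\mathrm{Ad}_{q^{-1}}u\rangle$ and $\langle\sigma,\mathrm{Ad}_{q}v\rangle=\langle\mathrm{Ad}^{*}_{q}\sigma,v\rangle$. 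The mixed term $\langle\mathrm{ad}^{*}_{v}\alpha,\mathrm{Ad}_{q^{-1}}u\rangle=\langle\alpha,[v,\mathrm{Ad}_{q^{-1}}u]\rangle$ must be rewritten as $-\langle\mathrm{ad}^{*}_{\mathrm{Ad}_{q^{-1}}u}\alpha,v\rangle$ by antisymmetry of the bracket. Collecting the coefficients of $\lambda$ and of $v$ then yields $\bigl(\mathrm{Ad}_{q^{-1}}u,\ \mathrm{Ad}^{*}_{q}\sigma-\mathrm{ad}^{*}_{\mathrm{Ad}_{q^{-1}}u}\alpha\bigr)$, which is the claim.

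\emph{Main obstacle.} The main difficulty is bookkeeping of conventions rather than ideas. Because of the right-action convention in the group law, $\mathrm{Ad}^{*}_{q}$ behaves like $\mathrm{Ad}_{q^{-1}}$ transposed, and the sign of $\mathrm{ad}^{*}$ from \eqref{adjointandcadjonG} enters twice. I would first verify the $\alpha=0$ case, which must reduce to the diagonal formula $(\mathrm{Ad}_{q^{-1}}u,\mathrm{Ad}^{*}_{q}\sigma)$. I would also check the infinitesimal version against the Lie algebra coadjoint formula \eqref{coadjopsemidir}: differentiating in $Q=(e^{sw},s\delta)$ should recover $\mathrm{ad}^{*}_{(w,\delta)}(u,\sigma)=([u,w],\mathrm{ad}^{*}_{w}\sigma-\mathrm{ad}^{*}_{u}\delta)$ up to the overall sign convention. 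Together these two checks fix any remaining sign in Step 1.
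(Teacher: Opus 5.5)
Your proposal is correct and follows the paper's proof essentially step for step. Like the paper, you differentiate the conjugation $Q\,(g(t),\beta(t))\,Q^{-1}$ with $Q^{-1}=(q^{-1},-\mathrm{Ad}^{*}_{q^{-1}}\alpha)$ to get $\mathrm{Ad}_{(q,\alpha)}(v,w)=\bigl(\mathrm{Ad}_{q}v,\mathrm{Ad}^{*}_{q^{-1}}(w+\mathrm{ad}^{*}_{v}\alpha)\bigr)$, then dualize through the pairing by transposing the adjoint actions and rewriting the mixed term via antisymmetry of the bracket. Your hedge in Step 1 is unnecessary: that formula is exactly what the computation gives, and your infinitesimal check reproduces $([u,w],\mathrm{ad}^{*}_{w}\sigma-\mathrm{ad}^{*}_{u}\delta)$ with no sign discrepancy.
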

\begin{proof}
We recall that the inner automorphism on $F$ is $A_{Q}G=QGQ^{-1}$, where $Q^{-1}=(q,\alpha)^{-1}=(q^{-1},-\mathrm{Ad}^{*}_{q^{-1}}\alpha)$. This gives
\begin{equation*}
A_{Q}G=A_{(q,\alpha)}(g,\beta)=(q,\alpha)\cdot(g,\beta)\cdot(q,\alpha)^{-1}=\left(qgq^{-1},\mathrm{Ad}^{*}_{gq^{-1}}\alpha+\mathrm{Ad}^{*}_{q^{-1}}\beta-\mathrm{Ad}^{*}_{q^{-1}}\alpha\right).
\end{equation*}
Let now $(g,\beta)$ be close to the identity, i.e.
\begin{equation*}
g(t)=I+tv+\mathcal{O}(t^{2}),\quad\beta(t)=0+tw+\mathcal{O}(t^{2}),\quad v\in\mathfrak{g},\,w\in\mathfrak{g}^{*}.
\end{equation*}
Then, we get
\begin{equation*}
\mathrm{Ad}_{(q,\alpha)}(v,w)=\left.\frac{\mathrm{d}}{\mathrm{d}t}\right|_{t=0}A_{(q,\alpha)}(g(t),\beta(t))=\left.\frac{\mathrm{d}}{\mathrm{d}t}\right|_{t=0}\left(A_{q}g(t),\mathrm{Ad}^{*}_{g(t)q^{-1}}\alpha+\mathrm{Ad}^{*}_{q^{-1}}\beta(t)-\mathrm{Ad}^{*}_{q^{-1}}\alpha\right).
\end{equation*}
The first component becomes the adjoint operator $\mathrm{Ad}_{q}v$ of $\mathcal{G}$-action on $\mathfrak{g}$, and for the second component we get
\begin{equation*}
\begin{split}
&\left.\frac{\mathrm{d}}{\mathrm{d}t}\right|_{t=0}\left(\mathrm{Ad}^{*}_{g(t)q^{-1}}\alpha+\mathrm{Ad}^{*}_{q^{-1}}\beta(t)-\mathrm{Ad}^{*}_{q^{-1}}\alpha\right)=\left.\frac{\mathrm{d}}{\mathrm{d}t}\right|_{t=0}\left(\mathrm{Ad}^{*}_{g(t)q^{-1}}\alpha\right)+\left.\frac{\mathrm{d}}{\mathrm{d}t}\right|_{t=0}\left(\mathrm{Ad}^{*}_{q^{-1}}\beta(t)\right)={}\\&\left.\frac{\mathrm{d}}{\mathrm{d}t}\right|_{t=0}\left(\mathrm{Ad}^{*}_{q^{-1}}\mathrm{Ad}^{*}_{g(t)}\alpha\right)+\mathrm{Ad}^{*}_{q^{-1}}w=\mathrm{Ad}^{*}_{q^{-1}}\left(w+\mathrm{ad}^{*}_{v}\alpha\right).
\end{split}
\end{equation*}
Finally, we arrive at the formula for the group adjoint operator on $\mathfrak{f}=\mathfrak{g}\ltimes\mathfrak{g}^{*}$:
\begin{equation}
\label{groupadjointonf}
\mathrm{Ad}_{(q,\alpha)}(v,w)=\left(\mathrm{Ad}_{q}v,\mathrm{Ad}^{*}_{q^{-1}}\left(w+\mathrm{ad}^{*}_{v}\alpha\right)\right).
\end{equation}

Let now $U^{\prime}=(v,w)\in\mathfrak{f}$. For the group coadjoint operator $\mathrm{Ad}^{*}_{Q}U$, we get
\begin{equation*}
\begin{split}
\langle\mathrm{Ad}^{*}_{Q}U,U^{\prime}\rangle&=\langle(u,\sigma),\mathrm{Ad}_{(q,\alpha)}(v,w)\rangle=\left\langle(u,\sigma),\left(\mathrm{Ad}_{q}v,\mathrm{Ad}^{*}_{q^{-1}}\left(w+\mathrm{ad}^{*}_{v}\alpha\right)\right)\right\rangle{}\\&=\left\langle\mathrm{Ad}^{*}_{q^{-1}}\left(w+\mathrm{ad}^{*}_{v}\alpha\right),u\right\rangle+\left\langle\sigma,\mathrm{Ad}_{q}v\right\rangle=\left\langle\mathrm{Ad}_{q}^{*}\sigma,v\right\rangle+\left\langle w,\mathrm{Ad}_{q^{-1}}u\right\rangle+\left\langle\mathrm{ad}^{*}_{v}\alpha,\mathrm{Ad}_{q^{-1}}u\right\rangle,
\end{split}
\end{equation*}
where we used \eqref{groupadjointonf}. The last component yields
\begin{equation*}
\left\langle\mathrm{ad}^{*}_{v}\alpha,\mathrm{Ad}_{q^{-1}}u\right\rangle=\left\langle\alpha,\mathrm{ad}_{v}\left(\mathrm{Ad}_{q^{-1}}u\right)\right\rangle=-\left\langle\alpha,\mathrm{ad}_{\mathrm{Ad}_{q^{-1}}u}v\right\rangle=-\left\langle\mathrm{ad}^{*}_{\mathrm{Ad}_{q^{-1}}u}\alpha,v\right\rangle.
\end{equation*}
Finally, we get
\begin{equation*}
\langle\mathrm{Ad}^{*}_{Q}U,U^{\prime}\rangle=\left\langle\mathrm{Ad}_{q}^{*}\sigma,v\right\rangle+\left\langle w,\mathrm{Ad}_{q^{-1}}u\right\rangle-\left\langle\mathrm{ad}^{*}_{\mathrm{Ad}_{q^{-1}}u}\alpha,v\right\rangle=\left\langle\left(\mathrm{Ad}_{q^{-1}}u,\mathrm{Ad}_{q}^{*}\sigma-\mathrm{ad}^{*}_{\mathrm{Ad}_{q^{-1}}u}\alpha\right),(v,w)\right\rangle,
\end{equation*}
which implies the assertion.
\end{proof}

Summarizing, we write out the group operation \eqref{groupoperationonsemidirextension} on $\mathcal{F}=(\mathcal{G}\ltimes\mathfrak{g}^{*})\ltimes(\mathfrak{g}\ltimes\mathfrak{g}^{*})^{*}$ in terms of its ``elementary" building blocks:
\begin{equation}
\label{completeformulaforgroupoperation}
(G,U)\cdot(Q,M)=\left((gq,\mathrm{Ad}^{*}_{q}\beta+\alpha),(\mathrm{Ad}_{q^{-1}}u+m,\mathrm{Ad}^{*}_{q}\sigma-\mathrm{ad}^{*}_{\mathrm{Ad}_{q^{-1}}u}\alpha+\gamma)\right),
\end{equation}
for $(G,U)=((g,\beta),(u,\sigma))\in\mathcal{F}$, $(Q,M)=((q,\alpha),(m,\gamma))\in\mathcal{F}$.

The next step is to obtain a formula for the cotangent lift of the left $\mathcal{F}$-action on itself. Let $\Phi_{(G,U)}\colon(Q,M)\mapsto(\widetilde Q,\widetilde M)=(G,U)\cdot(Q,M)$ be the left action of $\mathcal{F}$ on itself. Further, let $(V,B)=((v,c),(w,d))\in T_{(Q,M)}\mathcal{F}$ be a tangent vector at the point $(Q,M)\in\mathcal{F}$. 
\begin{proposition}
The differential 
\begin{equation*}
\left(\Phi_{(G,U)}\right)_{*}\colon T_{(Q,M)}\mathcal{F}\to T_{(\widetilde{Q},\widetilde{M})}\mathcal{F},\quad\left(\Phi_{(G,U)}\right)_{*}\colon(V,B)\mapsto(\widetilde{V},\widetilde{B})=((\widetilde{v},\widetilde{c}),(\widetilde{w},\widetilde{d}))
\end{equation*}
of the left $\mathcal{F}$-action on itself is given by the formulas
\begin{equation}
\label{tangentliftformula}
\begin{split}
&\widetilde{v}=gv,\quad\widetilde{c}=c+\mathrm{ad}^{*}_{q^{-1}v}\left(\mathrm{Ad}^{*}_{q}\beta\right),\quad\widetilde{w}=w-\mathrm{ad}_{q^{-1}v}\left(\mathrm{Ad}_{q^{-1}}u\right){}\\&
\widetilde{d}=d+\mathrm{ad}^{*}_{q^{-1}v}\left(\mathrm{Ad}^{*}_{q}\sigma\right)+\mathrm{ad}^{*}_{\mathrm{ad}_{q^{-1}v}\left(\mathrm{Ad}_{q^{-1}}u\right)}(\alpha)-\mathrm{ad}^{*}_{\mathrm{Ad}_{q^{-1}}u}(c).
\end{split}
\end{equation}
\end{proposition}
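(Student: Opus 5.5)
The plan is to differentiate the left action $\Phi_{(G,U)}$, written explicitly through the group operation \eqref{completeformulaforgroupoperation}, along a curve through $(Q,M)$. Take a curve $t\mapsto((q(t),\alpha(t)),(m(t),\gamma(t)))$ with $q(0)=q$, $\alpha(0)=\alpha$, $m(0)=m$, $\gamma(0)=\gamma$, and velocities $\dot q(0)=v$, $\dot\alpha(0)=c$, $\dot m(0)=w$, $\dot\gamma(0)=d$. Here $v\in T_q\mathcal{G}$, and the other three components live in the vector spaces $\mathfrak{g}$ or $\mathfrak{g}^{*}$. Then $(\widetilde V,\widetilde B)$ is the $t$-derivative at $t=0$ of the four components
\begin{equation*}
gq(t),\quad \mathrm{Ad}^{*}_{q(t)}\beta+\alpha(t),\quad \mathrm{Ad}_{q(t)^{-1}}u+m(t),\quad \mathrm{Ad}^{*}_{q(t)}\sigma-\mathrm{ad}^{*}_{\mathrm{Ad}_{q(t)^{-1}}u}\alpha(t)+\gamma(t).
\end{equation*}
The first component gives $\widetilde v=gv$ immediately.

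For the remaining components I will use the matrix formulas \eqref{adjointandcadjonG} and the identity $\frac{\mathrm{d}}{\mathrm{d}t}q^{-1}=-q^{-1}\dot q\,q^{-1}$. It is convenient to write $\dot q=q\,(q^{-1}v)$, so that the left-trivialized velocity $\eta:=q^{-1}v\in\mathfrak{g}$ appears.

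Three derivative computations are needed.
\begin{itemize}
\item[(i)] $\frac{\mathrm{d}}{\mathrm{d}t}\mathrm{Ad}_{q^{-1}}u=-q^{-1}\dot q q^{-1}uq+q^{-1}u\dot q=-[\eta,\mathrm{Ad}_{q^{-1}}u]=-\mathrm{ad}_{\eta}(\mathrm{Ad}_{q^{-1}}u)$. This gives $\widetilde w$.
\item[(ii)] $\frac{\mathrm{d}}{\mathrm{d}t}\mathrm{Ad}^{*}_{q}\beta$, where $\mathrm{Ad}^{*}_q\beta=q^{\dag}\beta(q^{-1})^{\dag}$. Differentiating yields $\eta^{\dag}q^{\dag}\beta(q^{-1})^{\dag}-q^{\dag}\beta(q^{-1})^{\dag}\eta^{\dag}=[\eta^{\dag},\mathrm{Ad}^{*}_q\beta]=\mathrm{ad}^{*}_{\eta}(\mathrm{Ad}^{*}_q\beta)$. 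Adding $c$ gives $\widetilde c$, and the same computation with $\sigma$ in place of $\beta$ handles the first term of $\widetilde d$.
\item[(iii)] The last term $-\mathrm{ad}^{*}_{\mathrm{Ad}_{q^{-1}}u}\alpha(t)$ is bilinear in $(\mathrm{Ad}_{q^{-1}}u,\alpha)$, so I differentiate it by the product rule. Using (i) together with linearity of $x\mapsto\mathrm{ad}^{*}_x$, the result is $+\mathrm{ad}^{*}_{\mathrm{ad}_{\eta}(\mathrm{Ad}_{q^{-1}}u)}\alpha-\mathrm{ad}^{*}_{\mathrm{Ad}_{q^{-1}}u}c$.
\end{itemize}
Collecting these with $\dot\gamma=d$ reproduces \eqref{tangentliftformula}.

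No step is deep. The main risk is bookkeeping of signs and daggers in (ii), that is, checking that the Frobenius identification indeed turns the derivative into $\mathrm{ad}^{*}_{q^{-1}v}$ rather than $\mathrm{ad}^{*}_{vq^{-1}}$ or a sign-flipped version. I would verify this by expanding $q^{\dag}(q^{-1})^{\dag}=I$, and I would double-check the result by confirming that at $q=I$ it agrees with the infinitesimal version of \eqref{groupadjointonf}.
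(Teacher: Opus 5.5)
Your proposal is correct and takes essentially the paper's route: differentiate the explicit group law \eqref{completeformulaforgroupoperation} along a curve through $(Q,M)$, left-trivialize the velocity as $q^{-1}v$, and apply the product rule to the $\mathrm{ad}^{*}_{\mathrm{Ad}_{q^{-1}}u}\alpha$ term. The only cosmetic difference is that you differentiate the matrix expressions $q^{-1}uq$ and $q^{\dag}\beta(q^{-1})^{\dag}$ directly, whereas the paper writes $q(t)=q(I+tq^{-1}v+\mathcal{O}(t^{2}))$ and uses the composition rules for $\mathrm{Ad}$ and $\mathrm{Ad}^{*}$; your sign and dagger bookkeeping in (i)--(iii) checks out.
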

\begin{proof}
Consider a path $(q(t),\alpha(t),m(t),\gamma(t))\in\mathcal{F}$ on the group $\mathcal{F}$ generated by the vector $(V,B)=((v,c),(w,d))$ and passing through the point $((q,\alpha),(m,\gamma))\in\mathcal{F}$, i.e.
\begin{equation*}
q(t)=q+tv+\mathcal{O}(t^{2}),\,\alpha(t)=\alpha+tc+\mathcal{O}(t^{2}),\, m(t)=m+tw+\mathcal{O}(t^{2}),\,\gamma(t)=\gamma+td+\mathcal{O}(t^{2}).
\end{equation*}
The corresponding path $\left(\widetilde{Q}(t),\widetilde{M}(t)\right)$ passing through the point $\left(\widetilde{Q},\widetilde{M}\right)=((\widetilde{q},\widetilde{\alpha}),(\widetilde{m},\widetilde{\gamma}))\in\mathcal{F}$ is generated by the vector $(\widetilde{V},\widetilde{B})\in T_{\left(\widetilde{Q},\widetilde{M}\right)}\mathcal{F}$:
\begin{equation*}
\begin{split}
\left(\Phi_{(G,U)}\right)_{*}(V,B)&=(\widetilde{V},\widetilde{B})=\left.\frac{\mathrm{d}}{\mathrm{d}t}\right|_{t=0}\left(\widetilde{Q}(t),\widetilde{M}(t)\right)=\left.\frac{\mathrm{d}}{\mathrm{d}t}\right|_{t=0}(G,U)\cdot(Q(t),M(t)){}\\&=\left.\frac{\mathrm{d}}{\mathrm{d}t}\right|_{t=0}\left((gq(t),\mathrm{Ad}^{*}_{q(t)}\beta+\alpha(t)),(\mathrm{Ad}_{q^{-1}(t)}u+m(t),\mathrm{Ad}^{*}_{q(t)}\sigma-\mathrm{ad}^{*}_{\mathrm{Ad}_{q^{-1}(t)}u}\alpha(t)+\gamma(t))\right),
\end{split}
\end{equation*}
where we used \eqref{completeformulaforgroupoperation}. We calculate each component separately. Firstly, for the $\widetilde{v}$ component, we get
\begin{equation*}
\widetilde{v}=\left.\frac{\mathrm{d}}{\mathrm{d}t}\right|_{t=0}gq(t)=gv.
\end{equation*}
Further, for the $\widetilde{c}$ component we find
\begin{equation*}
\begin{split}
\widetilde{c}&=\left.\frac{\mathrm{d}}{\mathrm{d}t}\right|_{t=0}\left(\mathrm{Ad}^{*}_{q(t)}\beta+\alpha(t)\right)=c+\left.\frac{\mathrm{d}}{\mathrm{d}t}\right|_{t=0}\mathrm{Ad}^{*}_{q(I+tq^{-1}v+\mathcal{O}(t^{2}))}\beta{}\\&=c+\left.\frac{\mathrm{d}}{\mathrm{d}t}\right|_{t=0}\mathrm{Ad}^{*}_{I+tq^{-1}v+\mathcal{O}(t^{2})}\left(\mathrm{Ad}^{*}_{q}\beta\right)=c+\mathrm{ad}^{*}_{q^{-1}v}\left(\mathrm{Ad}^{*}_{q}\beta\right).
\end{split}
\end{equation*}
For the $\widetilde{w}$ component, we get
\begin{equation*}
\begin{split}
\widetilde{w}&=\left.\frac{\mathrm{d}}{\mathrm{d}t}\right|_{t=0}\left(\mathrm{Ad}_{q^{-1}(t)}u+m(t)\right)=w+\left.\frac{\mathrm{d}}{\mathrm{d}t}\right|_{t=0}\mathrm{Ad}_{(I+tq^{-1}v+\mathcal{O}(t^{2}))^{-1}q^{-1}}u{}\\&=w+\left.\frac{\mathrm{d}}{\mathrm{d}t}\right|_{t=0}\mathrm{Ad}_{I-tq^{-1}v+\mathcal{O}(t^{2})}\mathrm{Ad}_{q^{-1}}u=w-\mathrm{ad}_{q^{-1}v}\left(\mathrm{Ad}_{q^{-1}}u\right).
\end{split}
\end{equation*}
Finally, for the $\widetilde{d}$, we find
\begin{equation}
\label{intermediatefortilded}
\widetilde{d}=\left.\frac{\mathrm{d}}{\mathrm{d}t}\right|_{t=0}\left(\mathrm{Ad}^{*}_{q(t)}\sigma-\mathrm{ad}^{*}_{\mathrm{Ad}_{q^{-1}(t)}u}\alpha(t)+\gamma(t)\right)=d+\mathrm{ad}^{*}_{q^{-1}v}\left(\mathrm{Ad}^{*}_{q}\sigma\right)-\left.\frac{\mathrm{d}}{\mathrm{d}t}\right|_{t=0}\left(\mathrm{ad}^{*}_{\mathrm{Ad}_{q^{-1}(t)}u}\alpha(t)\right).
\end{equation}
Let us calculate the last term separately:
\begin{equation*}
\left.\frac{\mathrm{d}}{\mathrm{d}t}\right|_{t=0}\left(\mathrm{ad}^{*}_{\mathrm{Ad}_{q^{-1}(t)}u}\alpha(t)\right)=\mathrm{ad}^{*}_{\mathrm{Ad}_{q^{-1}}u}(c)+\left.\frac{\mathrm{d}}{\mathrm{d}t}\right|_{t=0}\left(\mathrm{ad}^{*}_{\mathrm{Ad}_{q^{-1}(t)}u}\alpha\right)=\mathrm{ad}^{*}_{\mathrm{Ad}_{q^{-1}}u}(c)-\mathrm{ad}^{*}_{\mathrm{ad}_{q^{-1}v}\left(\mathrm{Ad}_{q^{-1}}u\right)}(\alpha),
\end{equation*}
and therefore \eqref{intermediatefortilded} becomes
\begin{equation*}
\widetilde{d}=d+\mathrm{ad}^{*}_{q^{-1}v}\left(\mathrm{Ad}^{*}_{q}\sigma\right)+\mathrm{ad}^{*}_{\mathrm{ad}_{q^{-1}v}\left(\mathrm{Ad}_{q^{-1}}u\right)}(\alpha)-\mathrm{ad}^{*}_{\mathrm{Ad}_{q^{-1}}u}(c),
\end{equation*}
which finalizes the proof.
\end{proof}

The cotangent lift of the action \eqref{completeformulaforgroupoperation} can be obtained from \eqref{tangentliftformula}. Let $(P,A)\in T^{*}_{(Q,M)}\mathcal{F}$ be an element in the cotangent space at the point $(Q,M)\in\mathcal{F}$, i.e. $(P,A)=((p,\nu),(a,b))$, with $(p,\nu)\in T^{*}_{(q,\alpha)}F$ being an element in the cotangent space of the group $F$ at the point $(q,\alpha)\in F$, and $(a,b)\in\mathfrak{f}^{*}$. Further, let $(\Phi_{(G,U)})^{*}$ be the co-differential of the left $\mathcal{F}$-action
\begin{equation*}
(\Phi_{(G,U)})^{*}\colon T^{*}_{(\widetilde{Q},\widetilde{M})}\mathcal{F}\to T^{*}_{(Q,M)}\mathcal{F},\quad \left(\Phi_{(G,U)}\right)^{*}\left(\widetilde{P},\widetilde{A}\right)=\left(P,A\right),
\end{equation*}
defined as follows:
\begin{equation}
\begin{split}
\label{defofcodiff}
\left\langle\left(P,A\right),\left(V,B\right)\right\rangle&=\left\langle(\Phi_{(G,U)})^{*}\left(\widetilde{P},\widetilde{A}\right),(V,B)\right\rangle=\left\langle\left(\widetilde{P},\widetilde{A}\right),\left(\Phi_{(G,U)}\right)_{*}\left(V,B\right)\right\rangle{}\\&=\left\langle\left(\widetilde{P},\widetilde{A}\right),\left(\widetilde{V},\widetilde{B}\right)\right\rangle.
\end{split}
\end{equation}
Together with \eqref{tangentliftformula}, equation \eqref{defofcodiff} defines $\left(\widetilde{P},\widetilde{A}\right)=((\widetilde{p},\widetilde{\nu}),(\widetilde{a},\widetilde{b}))$ via $(P,A)=((p,\nu),(a,b))$.
\begin{proposition}
The co-differential
\begin{equation*}
\left(\Phi_{(G,U)}^{-1}\right)^{*}\colon T_{(Q,M)}^{*}\mathcal{F}\to T_{(\widetilde{Q},\widetilde{M})}^{*}\mathcal{F},\quad\left(\Phi_{(G,U)}^{-1}\right)^{*}\colon(P,A)\mapsto(\widetilde{P},\widetilde{A})=((\widetilde{p},\widetilde{\nu}),(\widetilde{a},\widetilde{b}))
\end{equation*}
of the left $\mathcal{F}$-action on itself is given by the formulas
\begin{equation}
\label{cotangentliftformula}
\begin{split}
&\widetilde{a}=a,\quad\widetilde{b}=b,\quad\widetilde{\nu}=\nu-\mathrm{ad}_{a}\left(\mathrm{Ad}_{q^{-1}}u\right),\\&
\widetilde{p}=\left(g^{-1}\right)^{\dag}p+\left(g^{-1}\right)^{\dag}\left(q^{-1}\right)^{\dag}\left(\mathrm{ad}^{*}_{\nu-\mathrm{ad}_{a}\left(\mathrm{Ad}_{q^{-1}}u\right)}\left(\mathrm{Ad}^{*}_{q}\beta\right)-\mathrm{ad}^{*}_{\mathrm{Ad}_{q^{-1}}u}\left(\mathrm{ad}^{*}_{a}\alpha+b\right)+\mathrm{ad}^{*}_{a}\left(\mathrm{Ad}^{*}_{q}\sigma\right)\right).
\end{split}
\end{equation}
\end{proposition}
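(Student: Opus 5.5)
We take the defining relation \eqref{defofcodiff} as the starting point and read off $(\widetilde P,\widetilde A)$ by matching coefficients of the independent tangent components. Concretely, we write the pairing on $T^{*}\mathcal{F}$ componentwise,
\[
\langle (\widetilde P,\widetilde A),(\widetilde V,\widetilde B)\rangle=\langle \widetilde p,\widetilde v\rangle+\langle\widetilde\nu,\widetilde c\rangle+\langle \widetilde a,\widetilde w\rangle+\langle\widetilde b,\widetilde d\rangle ,
\]
with the pairing $\langle x,y\rangle_F=\mathrm{tr}(x^{\dag}y)$ and the identifications \eqref{adjointandcadjonG}. We then substitute the tangent lift \eqref{tangentliftformula} for $(\widetilde v,\widetilde c,\widetilde w,\widetilde d)$. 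The result is an expression linear in $(v,c,w,d)$, which must equal $\langle p,v\rangle+\langle\nu,c\rangle+\langle a,w\rangle+\langle b,d\rangle$ for all $(v,c,w,d)$.

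The matching proceeds from the simplest components to the most coupled ones.

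\emph{Coefficient of $d$.} The variable $d$ appears only through $\widetilde d=d+\cdots$, so $\widetilde b=b$.

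\emph{Coefficient of $w$.} The variable $w$ appears only through $\widetilde w=w-\cdots$, so $\widetilde a=a$.

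\emph{Coefficient of $c$.} The variable $c$ appears in $\widetilde c=c+\cdots$ and in the term $-\mathrm{ad}^{*}_{\mathrm{Ad}_{q^{-1}}u}(c)$ of $\widetilde d$. Moving $\mathrm{ad}^{*}$ across the pairing gives
\[
\langle b,\mathrm{ad}^{*}_{x}c\rangle=\langle \mathrm{ad}_x b, c\rangle ,
\]
up to the identification. Hence $\widetilde\nu=\nu-\mathrm{ad}_{a}(\mathrm{Ad}_{q^{-1}}u)$, possibly after rewriting $\mathrm{ad}_x b$ via antisymmetry and using $\widetilde b=b$. I would check the pairing slot carefully here, since the claimed formula involves $a$ rather than $b$. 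It seems the $c$-term should couple with the $\widetilde d$ slot, which is paired with $\widetilde b$, so the bookkeeping of which dual variable pairs with which slot in \eqref{tangentliftformula} must be fixed consistently. I would reconcile this with the convention used in \eqref{mommapforsemidir}.

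\emph{Coefficient of $v$.} This is the main step. The variable $v$ enters through five places:
\begin{itemize}
\item $\widetilde v=gv$, which contributes $\langle g^{\dag}\widetilde p,v\rangle$;
\item $\mathrm{ad}^{*}_{q^{-1}v}(\mathrm{Ad}^{*}_q\beta)$ in $\widetilde c$, paired with $\widetilde\nu$;
\item $-\mathrm{ad}_{q^{-1}v}(\mathrm{Ad}_{q^{-1}}u)$ in $\widetilde w$, paired with $\widetilde a=a$;
\item $\mathrm{ad}^{*}_{q^{-1}v}(\mathrm{Ad}^{*}_q\sigma)$ in $\widetilde d$, paired with $b$;
\item $\mathrm{ad}^{*}_{\mathrm{ad}_{q^{-1}v}(\mathrm{Ad}_{q^{-1}}u)}\alpha$ in $\widetilde d$, paired with $b$.
\end{itemize}
Each term is transposed so that it takes the form $\langle X,q^{-1}v\rangle=\langle (q^{-1})^{\dag}X,v\rangle$. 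This uses the identities
\[
\langle y,\mathrm{ad}^{*}_x z\rangle=\langle \mathrm{ad}^{*}_{y}z,x\rangle\ \text{(up to sign)},\qquad \langle y,\mathrm{ad}_x z\rangle=-\langle \mathrm{ad}^{*}_z y,x\rangle .
\]
The double-$\mathrm{ad}$ term requires two transpositions. Collecting everything gives
\[
g^{\dag}\widetilde p=p+(q^{-1})^{\dag}\bigl(\mathrm{ad}^{*}_{\widetilde\nu}(\mathrm{Ad}^{*}_q\beta)-\mathrm{ad}^{*}_{\mathrm{Ad}_{q^{-1}}u}(\cdots)+\mathrm{ad}^{*}_{a}(\mathrm{Ad}^{*}_q\sigma)\bigr).
\]
Multiplying by $(g^{-1})^{\dag}$ and inserting $\widetilde\nu$ from the previous step gives \eqref{cotangentliftformula}.

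The obstacle I expect is sign and slot bookkeeping: keeping track of the $\dag$'s in $\mathrm{ad}^{*}_v=[v^{\dag},\cdot]$, and making the $\mathrm{ad}^{*}_a\alpha+b$ grouping emerge from the two $\widetilde d$-contributions. I would verify this last grouping directly on matrices using trace cyclicity, since for $\mathfrak g=\mathfrak{su}(N)$ all daggers reduce to minus signs.
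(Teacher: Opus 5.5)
Your strategy is the paper's: expand \eqref{defofcodiff} componentwise, insert \eqref{tangentliftformula}, transpose every term onto $q^{-1}v$, and match the coefficients of $(v,c,w,d)$. However, the pairing you start from is wrong in the $(a,b)$ slots, and that is exactly what breaks your $c$- and $v$-steps. You pair $\widetilde a$ with $\widetilde w$ and $\widetilde b$ with $\widetilde d$. The statement requires instead
\[
\langle (P,A),(V,B)\rangle=\langle p,v\rangle+\langle \nu,c\rangle+\langle a,d\rangle+\langle b,w\rangle .
\]
The claimed formulas fix this themselves. The expressions $\mathrm{ad}_a(\mathrm{Ad}_{q^{-1}}u)$ and $\mathrm{ad}^*_a\alpha$ treat $a$ as an element of $\mathfrak g$, and the sum $\mathrm{ad}^*_a\alpha+b$ forces $b\in\mathfrak g^*$. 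So $a$ must be dual to the direction $d$ of $\gamma\in\mathfrak g^*$, and $b$ dual to the direction $w$ of $m\in\mathfrak g$. This is the same cross pairing of $\mathfrak f$ with $\mathfrak f^*$ the paper uses elsewhere, and it matches $\dot a=-\delta\mathrm{H}^{N}/\delta\gamma$, $\dot b=-\delta\mathrm{H}^{N}/\delta m$.

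With your pairing the algebra is self-consistent but proves a different statement. The $c$-coefficient gives $\widetilde\nu=\nu-\mathrm{ad}_b(\mathrm{Ad}_{q^{-1}}u)$, and the $v$-coefficient gives the claimed $\widetilde p$ with $a$ and $b$ interchanged. No antisymmetry rewriting and no use of $\widetilde b=b$ turns that $b$ into $a$. You flag the mismatch but leave it open. Your collected $v$-formula, containing $\mathrm{ad}^*_a(\mathrm{Ad}^*_q\sigma)$ and an unspecified $(\cdots)$, does not follow from your own list, where that term is paired with $b$.

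With the corrected pairing your plan goes through. The $d$- and $w$-coefficients give $\widetilde a=a$ and $\widetilde b=b$. In the $c$-coefficient, the term $-\mathrm{ad}^*_{\mathrm{Ad}_{q^{-1}}u}(c)$ of $\widetilde d$ now pairs with $a$ and produces $\langle c,\mathrm{ad}_a(\mathrm{Ad}_{q^{-1}}u)\rangle$. Hence $\widetilde\nu=\nu-\mathrm{ad}_a(\mathrm{Ad}_{q^{-1}}u)$.

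In the $v$-coefficient, write $x=\mathrm{Ad}_{q^{-1}}u$. Besides $\langle (gq)^{\dag}\widetilde p,q^{-1}v\rangle$, the contributions are
\[
-\langle \mathrm{ad}^*_{\widetilde\nu}(\mathrm{Ad}^*_q\beta),q^{-1}v\rangle,\quad -\langle \mathrm{ad}^*_{a}(\mathrm{Ad}^*_q\sigma),q^{-1}v\rangle,\quad \langle \mathrm{ad}^*_{x}(\mathrm{ad}^*_a\alpha),q^{-1}v\rangle,\quad \langle \mathrm{ad}^*_{x}b,q^{-1}v\rangle .
\]
These come from $\widetilde c$, from the two $v$-dependent terms of $\widetilde d$, and from $\widetilde w$, respectively. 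So the grouping $\mathrm{ad}^*_a\alpha+b$ comes from one $\widetilde d$-term together with the $\widetilde w$-term, not from the two $\widetilde d$-terms as you anticipated. Equate the total with $\langle q^{\dag}p,q^{-1}v\rangle$ and multiply on the left by $(g^{-1})^{\dag}(q^{-1})^{\dag}$; this gives \eqref{cotangentliftformula}.
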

\begin{proof}
Equation \eqref{defofcodiff} yields
\begin{equation}
\label{fullpairing}
\langle p,v\rangle+\langle c,\nu\rangle+\langle d,a\rangle+\langle b,w\rangle=\langle \widetilde{p},\widetilde{v}\rangle+\langle \widetilde{c},\widetilde{\nu}\rangle+\langle \widetilde{d},\widetilde{a}\rangle+\langle \widetilde{b},\widetilde{w}\rangle.
\end{equation}
Using \eqref{tangentliftformula} and isolating $(v,c,d,w)$, we obtain
\begin{equation}
\label{eqfortildep}
\langle \widetilde{p},\widetilde{v}\rangle=\langle \widetilde{p},gqq^{-1}v\rangle=\langle (gq)^{\dag}\widetilde{p},q^{-1}v\rangle.
\end{equation}
Further, for the $\langle \widetilde{c},\widetilde{\nu}\rangle$ component, we get
\begin{equation}
\label{eqfortildenu}
\langle \widetilde{c},\widetilde{\nu}\rangle=\left\langle c+\mathrm{ad}^{*}_{q^{-1}v}\left(\mathrm{Ad}^{*}_{q}\beta\right),\widetilde{\nu}\right\rangle=\langle c,\widetilde{\nu}\rangle-\left\langle\mathrm{ad}^{*}_{\widetilde{\nu}}\left(\mathrm{Ad}^{*}_{q}\beta\right),q^{-1}v\right\rangle.
\end{equation}
For the $\langle \widetilde{d},\widetilde{a}\rangle$ component, we find
\begin{equation}
\label{eqfortildea}
\begin{split}
\langle \widetilde{d},\widetilde{a}\rangle&=\langle d,\widetilde{a}\rangle+\left\langle\mathrm{ad}^{*}_{q^{-1}v}\left(\mathrm{Ad}^{*}_{q}\sigma\right)+\mathrm{ad}^{*}_{\mathrm{ad}_{q^{-1}v}\left(\mathrm{Ad}_{q^{-1}}u\right)}(\alpha)-\mathrm{ad}^{*}_{\mathrm{Ad}_{q^{-1}}u}(c),\widetilde{a}\right\rangle{}\\&=\langle d,\widetilde{a}\rangle+\left\langle c,\mathrm{ad}_{\widetilde{a}}\left(\mathrm{Ad}_{q^{-1}}u\right)\right\rangle+\left\langle\mathrm{ad}^{*}_{\mathrm{Ad}_{q^{-1}}u}\left(\mathrm{ad}^{*}_{\widetilde{a}}\alpha\right),q^{-1}v\right\rangle-\left\langle\mathrm{ad}^{*}_{\widetilde{a}}\left(\mathrm{Ad}^{*}_{q}\sigma\right),q^{-1}v\right\rangle.
\end{split}
\end{equation}
Finally, for the term $\langle \widetilde{b},\widetilde{w}\rangle$, we get
\begin{equation}
\label{eqfortildeb}
\langle \widetilde{b},\widetilde{w}\rangle=\left\langle\widetilde{b},w-\mathrm{ad}_{q^{-1}v}\left(\mathrm{Ad}_{q^{-1}}u\right)\right\rangle=\langle\widetilde{b},w\rangle+\left\langle\mathrm{ad}^{*}_{\mathrm{Ad}_{q^{-1}}u}\widetilde{b},q^{-1}v\right\rangle.
\end{equation}
Collecting the terms in \eqref{eqfortildep}-\eqref{eqfortildeb} containing $(v,c,d,w)$ and equalizing them with the left hand side of \eqref{fullpairing} yields the following system of equations:
\begin{equation*}
\begin{split}
&\langle d,\widetilde{a}\rangle=\langle d,a\rangle,\quad\langle \widetilde{b},w\rangle=\langle b,w\rangle,\quad\langle c,\nu\rangle=\left\langle c,\widetilde{\nu}+\mathrm{ad}_{\widetilde{a}}\left(\mathrm{Ad}_{q^{-1}}u\right)\right\rangle,{}\\&
\langle q^{\dag}p,q^{-1}v\rangle=\left\langle q^{\dag}g^{\dag}\widetilde{p}-\mathrm{ad}^{*}_{\widetilde{\nu}}\left(\mathrm{Ad}^{*}_{q}\beta\right)+\mathrm{ad}^{*}_{\mathrm{Ad}_{q^{-1}}u}\left(\mathrm{ad}^{*}_{\widetilde{a}}\alpha+\widetilde{b}\right)-\mathrm{ad}^{*}_{\widetilde{a}}\left(\mathrm{Ad}^{*}_{q}\sigma\right),q^{-1}v\right\rangle,
\end{split}
\end{equation*}
and since $(v,c,d,w)$ are arbitrary, one deduces \eqref{cotangentliftformula}.
\end{proof}

Together with the base variables transformation \eqref{completeformulaforgroupoperation}, equations \eqref{cotangentliftformula} define the cotangent lift $((G,U),(Q,M,P,A))\mapsto(\widetilde Q,\widetilde M,\widetilde P,\widetilde A)$ of the left $\mathcal{F}$-action on itself:
\begin{equation}
\label{fullformulaforthecotangentlift}
\begin{split}
&\widetilde q=gq,\quad\widetilde\alpha=\mathrm{Ad}^{*}_{q}\beta+\alpha,\quad\widetilde m=\mathrm{Ad}_{q^{-1}}u+m,\quad\widetilde\gamma=\mathrm{Ad}^{*}_{q}\sigma-\mathrm{ad}^{*}_{\mathrm{Ad}_{q^{-1}}u}\alpha+\gamma{},\\&
\widetilde\nu=\nu-\mathrm{ad}_{a}\left(\mathrm{Ad}_{q^{-1}}u\right),\quad \widetilde a=a,\quad\widetilde b=b,{}\\&
\widetilde p=\left(g^{-1}\right)^{\dag}p+\left(g^{-1}\right)^{\dag}\left(q^{-1}\right)^{\dag}\left(\mathrm{ad}^{*}_{\nu-\mathrm{ad}_{a}\left(\mathrm{Ad}_{q^{-1}}u\right)}\left(\mathrm{Ad}^{*}_{q}\beta\right)-\mathrm{ad}^{*}_{\mathrm{Ad}_{q^{-1}}u}\left(\mathrm{ad}^{*}_{a}\alpha+b\right)+\mathrm{ad}^{*}_{a}\left(\mathrm{Ad}^{*}_{q}\sigma\right)\right).
\end{split}
\end{equation}

Consider the infinitesimal version of the $\mathcal{F}$-action \eqref{fullformulaforthecotangentlift} on $T^{*}\mathcal{F}$. To that end, we take
\begin{equation}
\label{groupinfinitesgenerator}
\begin{split}
&G=I+t\mathrm{T}+\mathcal{O}(t^{2})\Longleftrightarrow\left\{
\begin{aligned}
&g=I+t\xi+\mathcal{O}(t^{2}),\quad\xi\in\mathfrak{g}\\
&\beta=0+t\eta+\mathcal{O}(t^{2}),\quad\eta\in\mathfrak{g}^{*},\\
\end{aligned}
\right.
{}\\&
U=O+t\Lambda+\mathcal{O}(t^{2})\Longleftrightarrow\left\{
\begin{aligned}
&u=0+t\lambda+\mathcal{O}(t^{2}),\quad\lambda\in\mathfrak{g}\\
&\sigma=0+t\tau+\mathcal{O}(t^{2}),\quad\tau\in\mathfrak{g}^{*},\\
\end{aligned}
\right.
\end{split}
\end{equation}
where $(\Lambda,\mathrm{T})=((\xi,\eta),(\lambda,\tau))\in\mathfrak{F}$. The infinitesimal generator of transformation \eqref{fullformulaforthecotangentlift} is globally Hamiltonian with the Hamiltonian $\mathcal{H}=\mathcal{H}(q,\alpha,m,\gamma,p,\nu,a,b)$ satisfying
\begin{equation}
\label{systemforhamiltoniancotangentaction}
\begin{split}
&\frac{\delta\mathcal{H}}{\delta p}=\left.\frac{\mathrm{d}\widetilde{q}}{\mathrm{d}t}\right|_{t=0},\quad-\frac{\delta\mathcal{H}}{\delta q}=\left.\frac{\mathrm{d}\widetilde{p}}{\mathrm{d}t}\right|_{t=0},\quad\frac{\delta\mathcal{H}}{\delta \nu}=\left.\frac{\mathrm{d}\widetilde{\alpha}}{\mathrm{d}t}\right|_{t=0},\quad-\frac{\delta\mathcal{H}}{\delta \alpha}=\left.\frac{\mathrm{d}\widetilde{\nu}}{\mathrm{d}t}\right|_{t=0},{}\\&\frac{\delta\mathcal{H}}{\delta a}=\left.\frac{\mathrm{d}\widetilde{\gamma}}{\mathrm{d}t}\right|_{t=0},\quad-\frac{\delta\mathcal{H}}{\delta \gamma}=\left.\frac{\mathrm{d}\widetilde{a}}{\mathrm{d}t}\right|_{t=0}=0,\quad\frac{\delta\mathcal{H}}{\delta b}=\left.\frac{\mathrm{d}\widetilde{m}}{\mathrm{d}t}\right|_{t=0},\quad-\frac{\delta\mathcal{H}}{\delta m}=\left.\frac{\mathrm{d}\widetilde{b}}{\mathrm{d}t}\right|_{t=0}=0.
\end{split}
\end{equation}

Inserting \eqref{groupinfinitesgenerator} into \eqref{fullformulaforthecotangentlift}, the time derivatives in \eqref{systemforhamiltoniancotangentaction} are found as follows:
\begin{equation}
\label{exprforderivatives}
\begin{split}
&\left.\frac{\mathrm{d}\widetilde{q}}{\mathrm{d}t}\right|_{t=0}=\xi q,\quad\left.\frac{\mathrm{d}\widetilde{\alpha}}{\mathrm{d}t}\right|_{t=0}=\mathrm{Ad}^{*}_{q}\eta,\quad
\left.\frac{\mathrm{d}\widetilde{m}}{\mathrm{d}t}\right|_{t=0}=\mathrm{Ad}_{q^{-1}}\lambda,{}\\&\left.\frac{\mathrm{d}\widetilde{\gamma}}{\mathrm{d}t}\right|_{t=0}=\mathrm{Ad}^{*}_{q}\tau-\mathrm{ad}^{*}_{\mathrm{Ad}_{q^{-1}}\lambda}(\alpha),\quad\left.\frac{\mathrm{d}\widetilde{\nu}}{\mathrm{d}t}\right|_{t=0}=-\mathrm{ad}_{a}\left(\mathrm{Ad}_{q^{-1}}\lambda\right),{}\\&
\left.\frac{\mathrm{d}\widetilde{p}}{\mathrm{d}t}\right|_{t=0}=-\xi^{\dag}p+(q^{-1})^{\dag}\left(\mathrm{ad}^{*}_{\nu}\left(\mathrm{Ad}^{*}_{q}\eta\right)-\mathrm{ad}^{*}_{\mathrm{Ad}_{q^{-1}}\lambda}\left(\mathrm{ad}^{*}_{a}\alpha+b\right)+\mathrm{ad}^{*}_{a}\left(\mathrm{Ad}^{*}_{q}\tau\right)\right).
\end{split}
\end{equation}
Solving \eqref{systemforhamiltoniancotangentaction}-\eqref{exprforderivatives} for the Hamiltonian $\mathcal{H}$, one gets
\begin{equation*}
\mathcal{H}=\langle b,\mathrm{Ad}_{q^{-1}}\lambda\rangle+\langle\mathrm{Ad}^{*}_{q}\tau-\mathrm{ad}^{*}_{\mathrm{Ad}_{q^{-1}}\lambda}(\alpha),a\rangle+\langle\mathrm{Ad}^{*}_{q}\eta,\nu\rangle+\langle p,\xi q\rangle.
\end{equation*}
The momentum map $\mu\colon T^{*}\mathcal{F}\to\mathfrak{F}^{*}$ by its definition fulfills
\begin{equation*}
\langle\mu,((\xi,\eta),(\lambda,\tau))\rangle=\mathcal{H}=\langle pq^{\dag},\xi\rangle+\langle\mathrm{\eta},\mathrm{Ad}_{q}\nu\rangle+\left\langle\mathrm{Ad}^{*}_{q^{-1}}\left(b+\mathrm{ad}_{a}^{*}\alpha\right),\lambda\right\rangle+\langle\tau,\mathrm{Ad}_{q}a\rangle,
\end{equation*}
which implies that the momentum map $\mu\colon T^{*}\mathcal{F}\to\mathfrak{F}^{*}$ is
\begin{equation}
\label{momantummapgeneralG}
\begin{split}
\mu((q,\alpha),(m,\gamma),(p,\mu),(a,b))&=\left(pq^{\dag},\mathrm{Ad}_{q}\nu,\mathrm{Ad}^{*}_{q^{-1}}\left(b+\mathrm{ad}_{a}^{*}\alpha\right),\mathrm{Ad}_{q}a\right){}\\&=\left(pq^{\dag},q\nu q^{-1},(q^{-1})^{\dag}(b+[a^{\dag},\alpha])q^{\dag},qaq^{-1}\right),
\end{split}
\end{equation}
where we used \eqref{adjointandcadjonG}. We arrive at the following result:
\begin{theorem}
The momentum map $\mu\colon T^{*}\mathcal{F}\to\mathfrak{F}^{*}$ associated with the left $\mathcal{F}$-action on itself is given by \eqref{momantummapgeneralG}.
\end{theorem}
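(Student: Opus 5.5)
The statement to be proved is that the momentum map $\mu\colon T^{*}\mathcal{F}\to\mathfrak{F}^{*}$ of the left $\mathcal{F}$-action is given by \eqref{momantummapgeneralG}. The plan is to use the defining property of a momentum map for a cotangent-lifted action. For each $(\mathrm{T},\Lambda)=((\xi,\eta),(\lambda,\tau))\in\mathfrak{F}$, the pairing $\langle\mu(z),(\mathrm{T},\Lambda)\rangle$ must be a Hamiltonian function whose Hamiltonian vector field is the infinitesimal generator of the lifted action \eqref{fullformulaforthecotangentlift}. For cotangent lifts this Hamiltonian is canonically $\langle (P,A),\zeta_{(Q,M)}\rangle$, where $\zeta$ is the infinitesimal generator of the action on the base $\mathcal{F}$. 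So I would compute the base generator and pair it with the fibre variables $((p,\nu),(a,b))$, consistently with \eqref{fullpairing}. As a cross-check against \eqref{systemforhamiltoniancotangentaction}, I would verify the Hamilton equations for the resulting $\mathcal{H}$.

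First, I insert the one-parameter family \eqref{groupinfinitesgenerator} into the base part of \eqref{completeformulaforgroupoperation} and differentiate at $t=0$. This gives the first three lines of \eqref{exprforderivatives}: $\dot q=\xi q$, $\dot\alpha=\mathrm{Ad}^{*}_{q}\eta$, $\dot m=\mathrm{Ad}_{q^{-1}}\lambda$, and $\dot\gamma=\mathrm{Ad}^{*}_{q}\tau-\mathrm{ad}^{*}_{\mathrm{Ad}_{q^{-1}}\lambda}\alpha$. Here I use that $g=I+O(t)$ makes terms like $\mathrm{Ad}^{*}_{q}\beta$ linear in $\eta$, and that $u=O(t)$ kills any quadratic terms. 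Pairing with momenta in the order used in \eqref{fullpairing} ($p\leftrightarrow q$, $\nu\leftrightarrow\alpha$, $b\leftrightarrow m$, $a\leftrightarrow\gamma$) gives
\[
\mathcal{H}=\langle p,\xi q\rangle+\langle\nu,\mathrm{Ad}^{*}_{q}\eta\rangle+\langle b,\mathrm{Ad}_{q^{-1}}\lambda\rangle+\langle a,\mathrm{Ad}^{*}_{q}\tau-\mathrm{ad}^{*}_{\mathrm{Ad}_{q^{-1}}\lambda}\alpha\rangle.
\]
As a consistency check, I differentiate $\mathcal{H}$ in $\alpha$ and in $q$. This should reproduce $\dot{\widetilde\nu}=-\mathrm{ad}_{a}(\mathrm{Ad}_{q^{-1}}\lambda)$ and the $\dot{\widetilde p}$ expression obtained by linearising \eqref{fullformulaforthecotangentlift}. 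The $\dot{\widetilde p}$ check is the messiest, because $q$ enters through $\mathrm{Ad}_{q}$, $\mathrm{Ad}^{*}_{q}$ and $\mathrm{Ad}_{q^{-1}}$ simultaneously.

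The main step is to rewrite $\mathcal{H}$ as a pairing that is linear in $(\xi,\eta,\lambda,\tau)$ with coefficients depending only on the phase-space point. I would do this term by term using \eqref{adjointandcadjonG} and the Frobenius pairing:
\begin{itemize}
\item $\langle p,\xi q\rangle=\mathrm{tr}(p^{\dag}\xi q)=\langle pq^{\dag},\xi\rangle$;
\item $\langle\nu,\mathrm{Ad}^{*}_{q}\eta\rangle=\langle\mathrm{Ad}_{q}\nu,\eta\rangle$;
\item $\langle b,\mathrm{Ad}_{q^{-1}}\lambda\rangle=\langle\mathrm{Ad}^{*}_{q^{-1}}b,\lambda\rangle$;
\item $\langle a,\mathrm{Ad}^{*}_{q}\tau\rangle=\langle\mathrm{Ad}_{q}a,\tau\rangle$;
\item for the cross term, $-\langle a,\mathrm{ad}^{*}_{\mathrm{Ad}_{q^{-1}}\lambda}\alpha\rangle=-\langle\alpha,[\mathrm{Ad}_{q^{-1}}\lambda,a]\rangle=\langle\alpha,\mathrm{ad}_{a}\mathrm{Ad}_{q^{-1}}\lambda\rangle=\langle\mathrm{Ad}^{*}_{q^{-1}}\mathrm{ad}^{*}_{a}\alpha,\lambda\rangle$.
\end{itemize}
Reading off the coefficients gives $\mu=(pq^{\dag},\mathrm{Ad}_{q}\nu,\mathrm{Ad}^{*}_{q^{-1}}(b+\mathrm{ad}^{*}_{a}\alpha),\mathrm{Ad}_{q}a)$. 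Substituting the explicit matrix formulas \eqref{adjointandcadjonG} then yields the second line of \eqref{momantummapgeneralG}.

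I expect the main obstacle to be bookkeeping of duals and adjoints: the $\dag$'s in $\mathrm{Ad}^{*}$ and $\mathrm{ad}^{*}$, and the sign produced by antisymmetry in the cross term. Getting these wrong would change $[a^{\dag},\alpha]$ into $[\alpha,a]$. I would resolve this by checking the cross term explicitly with traces. Finally, I would note that the $\mathfrak{su}(N)$ specialisation \eqref{mommapforsemidir} follows by using $q^{-1}=q^{\dag}$ and $a^{\dag}=-a$, and by projecting $pq^{\dag}$ to its skew-Hermitian part.
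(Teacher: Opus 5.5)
Your proof is correct, and its decisive step coincides with the paper's computation: the formula for $\mathcal{H}$, then moving $\mathrm{Ad}_{q}$, $\mathrm{Ad}_{q^{-1}}$ and $\mathrm{ad}^{*}$ across the pairing term by term, with the sign in the cross term handled exactly as the paper does. Where you differ is in how $\mathcal{H}$ is obtained.

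The paper first derives the full tangent and cotangent lifts \eqref{tangentliftformula}--\eqref{cotangentliftformula} of the left action. It then linearises the lifted action on $T^{*}\mathcal{F}$ in all eight variables, including the messy $\dot{\widetilde p}$ in \eqref{exprforderivatives}. Finally it solves the Hamilton equations \eqref{systemforhamiltoniancotangentaction} for the generating function.

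You instead use the general fact that a cotangent-lifted action has momentum map $\langle\mu(z),\zeta\rangle=\langle z,\zeta_{\mathcal{F}}(Q,M)\rangle$. This needs only the base generator (the derivatives of $\widetilde q,\widetilde\alpha,\widetilde m,\widetilde\gamma$) and the pairing convention of \eqref{fullpairing}. That this $\mathcal{H}$ generates the lifted action is then automatic. Your $\dot{\widetilde\nu}$ and $\dot{\widetilde p}$ comparisons become optional cross-checks, and they do work out: each $q$-dependent term of $\mathcal{H}$ produces exactly one of the four pieces of $\dot{\widetilde p}$.

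Your route is shorter and less error-prone, and it shows why $\mathcal{H}$ is simply ``fibre momenta paired with the base generator''. The paper's route costs more but produces the explicit lifted action \eqref{cotangentlift}, which it records as a separate theorem in the main text.
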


Formula \eqref{momantummapgeneralG} is obtained in the generality of an arbitrary matrix Lie algebra $\mathfrak{g}$ and the corresponding Lie group $\mathcal{G}$. Specifying $\mathfrak{g}=\mathfrak{su}(N)$ and $\mathcal{G}=\mathrm{SU}(N)$, i.e. assuming that $qq^{\dag}=I$ and $a^{\dag}=-a$, \eqref{momantummapgeneralG} becomes
\begin{equation*}
\mu((q,\alpha),(m,\gamma),(p,\nu),(a,b))=\left(\frac{1}{2}(pq^{\dag}-qp^{\dag}),q\nu q^{\dag},q(b+[\alpha,a])q^{\dag},qaq^{\dag}\right)\in\mathfrak{F}^{*}.
\end{equation*}
\bibliographystyle{spmpsci}
\bibliography{3D_MHD.bib}
\end{document}